\documentclass[11pt,letterpaper]{article}
\usepackage{graphicx}

\usepackage{hyperref}

\usepackage[table]{xcolor}
\usepackage{arydshln}
\usepackage{amsmath,amsfonts, amssymb, amsthm}
\usepackage{commath}
\usepackage{dsfont} 
\usepackage{bm} 
\usepackage{tabularx}

\usepackage{subcaption}
\usepackage{tikz}
\usetikzlibrary{backgrounds}
\usetikzlibrary{calc}	
\usetikzlibrary{patterns}
\usetikzlibrary{positioning}	
\usetikzlibrary{arrows.meta}
\usetikzlibrary{shapes}
\usepackage{setspace}

\usepackage[margin=1in]{geometry}
\usepackage{xspace}
\usepackage{enumitem}

\usepackage{url}

\usepackage{color}
\usepackage{xcolor}
\usepackage{esvect}[b]

\usepackage{thmtools} 
\usepackage{thm-restate} 

\theoremstyle{definition}
\newtheorem{theorem}{Theorem}
\newtheorem*{theorem*}{Theorem}

\newtheorem{claim}[theorem]{Claim}
\newtheorem*{claim*}{Claim}
\newtheorem{lemma}[theorem]{Lemma}
\newtheorem{corollary}[theorem]{Corollary}

\newcommand{\reals}{\mathbb{R}}
\newcommand{\cost}{\textnormal{cost}}
\newcommand{\dist}{\textnormal{dist}}
\newcommand{\OPT}{\textnormal{OPT}}
\newcommand{\opt}{\textnormal{opt}}
\newcommand{\ALG}{\textnormal{ALG}}
\newcommand{\ADV}{\textnormal{ADV}}

\newcommand{\rooot}{\textnormal{root}}

\newcommand{\OPTOTHER}{{\textnormal{OPT\textsuperscript{other}}}}

\usepackage{stmaryrd}
\usepackage{trimclip}

\newcommand{\tilT}{\widetilde T}
\newcommand{\TT}{T}

\usepackage{tikz}
\usetikzlibrary {arrows.meta,bending,positioning}

\title{Chasing Small Sets Optimally Against Adaptive Adversaries}

\author{Christian Coester\\ University of Oxford \and Alexa Tudose\\ University of Oxford}

\date{}

\begin{document}

\maketitle

\begin{abstract}
We study deterministic online algorithms for the problem of chasing sets of cardinality at most $k$ in a metric space, also known as metrical service systems and equivalent to width-$k$ layered graph traversal. We resolve the 30-year-old gap of $\Omega(2^k)\cap O(k2^k)$ on the competitive ratio of this problem by giving an $O(2^k)$-competitive deterministic algorithm. This bound is optimal even among randomized algorithms against adaptive adversaries. We also (slightly) improve the deterministic lower bound to $D_k$, defined recursively by $D_1=1$ and $D_{k+1}=2D_k+\sqrt{8+8D_k}+3$, which we conjecture to be exactly tight. For $k=3$, we provide a matching upper bound of $D_3$. Our results imply slightly improved upper and lower bounds for distributed asynchronous collective tree exploration and for the $k$-taxi problem, respectively.

Our algorithm generalizes the classical doubling strategy, previously known to be optimal for $k=2$. The previous best bound for general $k$ was achieved by the generalized work function algorithm (WFA), and was known to be tight for WFA. Our improved bound therefore implies that WFA is sub-optimal for chasing small sets.
\end{abstract}

\pagenumbering{gobble}
\newpage
\pagenumbering{arabic}

\section{Introduction}

\paragraph{Problem definition.} Small set chasing, also known as metrical service systems, was introduced by Chrobak and Larmore in 1991 \cite{ChLar-ServerProblem}. In this problem, a player must move in a metric space $(M, d)$ in order to serve a sequence of requests. The player is initially located at $s_0 \in M$. At time $t\in\{1, \dots, T\}$, a set $S_t \subseteq M$ with $1 \leq \vert S_t \vert \leq k$ is revealed, and the player has to relocate to a point $s_t \in S_t$, paying cost $d(s_{t-1}, s_t)$. The number of requests $T$ and the parameter $k$ are not known to the player in advance. The goal of the player is to minimize the total incurred cost. The player is said to be $\rho$-competitive if its cost is at most 
\[
\rho \cdot \min \left\{ \sum_{t=1}^T d(x_{t-1}, x_t) \,\bigg\vert\, x_0 = s_0, x_1 \in S_1, \dots, x_T \in S_T \right\}.
\]

\paragraph{Related work on deterministic algorithms.} When introducing the problem, Chrobak and Larmore gave an optimal $k$-competitive deterministic algorithm for uniform metric spaces, and a $9$-competitive deterministic algorithm for $k=2$ in arbitrary metric spaces \cite{ChLar-ServerProblem}. The first competitive deterministic algorithm for general metric spaces and arbitrary $k$ is due to \cite{Fiat-competitiveLGT} and achieves a competitive ratio of $O(9^k)$. In the same paper, they also show a lower bound of $2^{k-2}$ for deterministic algorithms. By designing better algorithms, the initial exponential gap of $O((9/2)^k)$ between upper and lower bounds was narrowed to polynomial $O(k^3)$ by Ramesh \cite{Ramesh-LGT}, and further to linear $O(k)$ by Burley \cite{burleyLGT}, at which the problem had been stuck for 30 years. We close the gap to a constant factor, with a different algorithm than previous works. 

The algorithm proposed by Burley is the generalized work function algorithm (WFA), a general purpose algorithm applicable to many online problems, and the leading candidate algorithm for resolving the famous $k$-server conjecture: For the $k$-server problem, WFA is known to achieve the optimal competitive ratio up to a factor of at most $2$ \cite{kServerConjecture}, and it is conjectured to be exactly optimal. By contrast, our results imply that WFA is not optimal for chasing small sets, since Burley showed that his analysis is tight for WFA~\cite{burleyLGT}.

\paragraph{Context and motivation.} The central role of chasing small sets is emphasized by its connection to numerous other problems. Its alternative name, metrical service systems, showcases the connection to metrical task systems, a more general problem in which each request comes with a function that specifies the cost to serve the request in each point of the metric space \cite{MTSintroduction}. Note that metrical service systems is a particular case where the service costs are in $\{0, \infty\}$. Another related problem is convex body chasing \cite{Friedman1993,BubeckLLS23,Sellke20,ArgueGTG21}, where requested sets are not subject to cardinality constraints, but are required to be convex.

Furthermore, small set chasing is equivalent to the layered graph traversal problem, introduced by Papadimitriou and Yannakakis in \cite{shortestPathsWithoutAMap}. In fact, some of the earlier mentioned results on small set chasing were obtained by working on layered graph traversal. In this problem, a searcher needs to traverse a connected undirected graph $G$ with non-negative edge weights. The vertices of $G$ are partitioned into layers $L_0, \dots, L_N$, so that edges exist only between vertices in consecutive layers. The searcher is initially located in a starting vertex $s$ in $L_0$ and needs to reach a target vertex $t$. Initially, the searcher knows only $s$, the vertices in $L_1$, and the weighted edges connecting $s$ to these vertices. All other vertices and edges are hidden. When the searcher reaches a vertex in $L_i$ for the first time, the vertices in $L_{i+1}$ are revealed, together with the weighted edges between $L_i$ and $L_{i+1}$.  Then, the searcher needs to move from its current position to a vertex in $L_{i+1}$, using any edges which have been revealed so far.\footnote{This may require back-tracking through previous layers.} When the searcher reaches the penultimate layer, the last layer containing only the target vertex $t$ is revealed, together with the corresponding edges. If the graph $G$ contains at most $k$ vertices in each layer, we say that $G$ has width $k$. Note that the searcher does not know $k$ or the number of layers in advance. The goal of the searcher is to minimize the total distance traveled. The searcher is said to be $\rho$-competitive if its total traveled distance is at most $\rho \cdot \dist(s, t)$, where  $\dist(s, t)$ is the distance between $s$ and $t$ in $G$. 

Computing the offline optimum in layered graph traversal simultaneously captures the shortest path problem and the dynamic programming paradigm, and therefore gives a natural motivation for the problem. The special case in which the graph is unweighted has been studied in \cite{UnweightedLGT}, but we focus on the general, weighted version of the problem. In the equivalence between small set chasing and layered graph traversal, requests correspond to layers, and the width of the layered graph corresponds to the maximum cardinality of a set in a request. A proof of this equivalence can be found in \cite{Fiat-competitiveLGT}. 

For width $k=2$, layered graph traversal is equivalent to the linear search problem (also known as the cow path problem), studied since at least 1970~\cite{BeckN70} and later popularized in the online algorithms literature following~\cite{Baeza-YatesCR93}. In this problem, a searcher located on a line tries to find an unknown target. The optimal deterministic competitive ratio of 9 is achieved by a simple doubling strategy: the searcher explores the two directions alternately up to a threshold that increases in powers of 2. Similar doubling strategies are ubiquitous in online algorithms (e.g.,~\cite{Anand21}). Our algorithm for general $k$ can be seen as a recursive generalization of this simple idea.

Layered graph traversal can be reduced to the $k$-taxi problem, where one has to move a set of $k$ servers (i.e., taxis) in a metric space in order to serve incoming passenger requests consisting of a starting point and a destination \cite{k-taxi}. Moreover, the problem of distributed asynchronous collective tree exploration, where a tree is explored by a team of $k$ agents, reduces to layered graph traversal \cite{CossonM25}. Additionally, small set chasing also has applications in designing learning-augmented algorithms: \cite{onlineMetricAlgo} combine multiple online algorithms by essentially reducing this task to layered graph traversal on disjoint paths, and \cite{mixing-predictions} show how to obtain a hybrid algorithm that is competitive against the best dynamic combination of the available predictors by reducing\footnote{The general problem considered in \cite{mixing-predictions} is in fact equivalent to layered graph traversal; the straightforward reduction in the reverse direction constructs $k$ predictors whose suggested positions at any time step cover the entire current layer.} the problem to layered graph traversal. More generally, set chasing may be viewed as a \emph{meta problem} of online decision making, where the currently requested set corresponds to the legal configurations that an algorithm may choose at that step.

\paragraph{Randomized algorithms and different adversarial models.} In the setting of randomized algorithms, an algorithm's cost is defined by taking the expectation over its randomness. Most work on randomized online algorithms focuses on the \emph{oblivious adversary} model, as this is the only model where randomization can yield exponential improvements~\cite{Ben-DavidBKTW94}.
An oblivious adversary is one that fixes an instance upfront, before any random choices by the algorithm are made. For this setting, Chrobak and Larmore \cite{ChLar-ServerProblem} gave an optimal $H_k$-competitive randomized algorithm for chasing small sets on uniform metric spaces and a $4.6$-competitive randomized algorithm for $k=2$ on general metric spaces. For arbitrary $k$ and general metric spaces, the first polynomial competitive ratio of $O(k^{13})$ is due to Ramesh \cite{Ramesh-LGT}, who also gave a lower bound of $\Omega\left(\frac{k^2}{\log^{1+\epsilon}{k}}\right)$ for arbitrary $\epsilon > 0$. Recently, these bounds were tightened to $\Theta(k^2)$ by Bubeck, Coester and Rabani\ \cite{randomizedLGT,BubeckCR23}. Interestingly, ~\cite{BubeckCR23} note that lower bounds for layered graph traversal inspired their construction used to refute the randomized $k$-server conjecture.

By contrast, in \emph{adaptive adversary} models, requests are allowed to depend on past decisions made by the algorithm. It is well known that randomization offers only limited benefit against such adversaries.\footnote{In general, a $\rho$-competitive randomized algorithm against adaptive online adversaries implies a $\rho^2$-competitive deterministic algorithm~\cite{Ben-DavidBKTW94}.} For small set chasing under adaptive adversaries, we show that randomization can improve the competitive ratio by at most a constant factor. This model is particularly relevant to chasing small sets, as several close connections to other problems are known to hold only in the adaptive setting. For example, the previously-mentioned reduction from small set chasing to $k$-taxi given in \cite{k-taxi} requires adaptive adversaries, as it issues new requests depending on previous algorithmic actions. Similarly, the reduction from distributed asynchronous collective tree exploration to layered graph traversal in \cite{CossonM25} requires an adaptive adversary. Here,  the request set in small set chasing corresponds roughly to (part of) the boundary of the region explored so far, which itself depends on prior moves of the agents. The best known bounds for both problems are improved as a consequence of our results.

\subsection{Our results}
Our main result is the following.
\begin{theorem} \label{th:upper-bound}
    There exists an $O(2^k)$-competitive deterministic algorithm for chasing sets of cardinality at most $k$.
\end{theorem}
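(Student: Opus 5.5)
We work in the equivalent layered graph traversal formulation, and first reduce to the case where the layered graph is a tree. The standard reduction of \cite{Fiat-competitiveLGT} (used by Ramesh and Burley) keeps for each current vertex only a shortest path from $s$. This loses at most a constant factor, and we may also assume all edge weights are integers, up to rounding. So the searcher explores a tree rooted at $s$ whose active leaves, at most $k$ of them, are the current layer. The target is one of the active leaves, and $\OPT$ is its depth. Leaves that receive no children are deleted, and degree-two vertices are contracted.

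The algorithm generalizes doubling recursively. For a subtree whose active leaves number $j$, we define a procedure $\mathcal{A}_j(r,B)$. It is started at the subtree's root $r$ with a budget $B$. It either certifies that every active leaf in the subtree has depth (relative to $r$) greater than some threshold $\Theta(B/2^j)$, or it reaches the target within cost $B$. For $j=1$ the procedure just walks down the unique path. For $j>1$, look at the first branching vertex below $r$; say its children's subtrees carry $j_1,\dots,j_m$ leaves, with $\sum j_i\le j$. We run the children alternately. Each child is called recursively with geometrically increasing budgets, and the budget of child $i$ is scaled by a factor depending on $j_i$; the natural choice is proportional to $2^{j_i}$, so that the weights resemble a potential $\sum 2^{j_i}\le 2^{j}$. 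When a subtree splits further as new layers arrive, the leaf count of a child changes, and the scheduling is re-weighted. The top-level algorithm calls $\mathcal{A}_k(s,B)$ with $B=1,2,4,\dots$.

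The analysis is an induction on $j$ with the hypothesis
\[
\cost \le c\cdot 2^{j}\cdot \text{depth of target below } r,
\]
in which the $c\,2^j$ term is the competitive ratio for $j$ leaves. The recursion for the ratio is $\rho_{j}\le 2\rho_{j-1}+O(1)$, or more precisely a sum over children weighted by $2^{j_i}$ that telescopes. Per doubling phase, the cost of returning to the branching vertex is charged to the budgets. In the spirit of the lower bound recursion $D_{k+1}=2D_k+O(\sqrt{D_k})$, the dominating term $2\rho_{j-1}$ comes from the same situation as in the cow-path argument. There the wasted exploration of the wrong branches is at most a constant times the budget of the branch that eventually succeeds, because budgets grow geometrically. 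Summing gives $\rho_k=O(2^k)$ rather than $O(k2^k)$, as long as the additive overhead per level is $O(1)$ times the ratio of the smaller level, never $O(k)$.

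The main obstacle is exactly this last point, namely the dynamics. Leaves get deleted (dead ends) and subtrees split, so the number of leaves of a child changes during a recursive call. A naive analysis then pays a fresh factor per split, which is how $k2^k$ arises. To handle this, I would replace the fixed induction with a potential-function argument. The potential would be roughly $\Phi=\sum_{\text{subtrees}} 2^{j(\text{subtree})}\cdot(\text{budget progress})$. A split of a $j$-leaf subtree into parts $j_1+j_2=j$ then does not increase the weighted budgets, since $2^{j_1}+2^{j_2}\le 2^{j}$. A deletion frees budget that can be reassigned. The cost of backtracking is paid by the geometric decrease of $\Phi$. Verifying that the constants close, so that the backtracking overhead is a geometric series summing to $O(2^j)$ and not $O(j2^j)$, is the step I expect to be hardest.
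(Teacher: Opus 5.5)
\textbf{There is a genuine gap, and it is in the step that carries the theorem.} The recursive-doubling picture and the static recursion are heuristics. What separates the previous $O(k2^k)$ bound (tight for the work function algorithm) from $O(2^k)$ is making the amortization survive dead ends. That is exactly the step you defer, with a potential $\Phi=\sum 2^{j}\cdot(\text{budget progress})$ that is never defined.

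What you say about it is not coherent. Weighted ``budget progress'' is a pseudo-cost that grows with spending, yet backtracking is supposed to be paid by its decrease. A deletion that ``frees budget'' is, in a pseudo-cost argument $\cost(\ALG)\le\Phi\le\rho\,\OPT$, a drop of $\Phi$, and that is precisely what breaks $\cost(\ALG)\le\Phi$. Your inequality $2^{j_1}+2^{j_2}\le 2^j$ for splits does not address the operation that causes trouble.

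The real difficulty is that deleting a leaf smooths its parent and promotes the whole sibling subtree one level up. The natural recursive potential can then decrease (the paper exhibits a tree where it drops by about $2.4a$). It can also exceed $D_k\OPT$, because the promoted subtree violates the stricter ratio condition of its new level. The paper needs two devices with no counterpart in your plan:
\begin{itemize}
\item \emph{Forgetting:} cap $\OPTOTHER(S)$ at $x_i(\OPT_{LS}\land\OPT_{RS})$ and $\Phi_{i-1}(XS)$ at $D_{i-1}x_i(\OPT_{LS}\land\OPT_{RS})$.
\item \emph{Imbalancing:} after each deletion, move $\ALG$ to the optimal leaf of the affected subtree and scale up edges off its optimal path until the subtree is extreme, $\Phi_i=D_i\OPT$.
\end{itemize}
Combined with $\Phi_i(S)+\dist_S(l_{\ALG},l_{\OPT_S})\le D_i\OPT_S$, this makes the potential rise by at least the movement cost. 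The price is a total edge distortion of at most $\prod_i x_i^{i-2}<60$.

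\textbf{Second, the recursion you set up does not give $2^k$.} Alternating between two children with a fixed geometric factor $\beta$ pays the children's recursive ratio on the winning branch (length $\OPT$) and also on the abandoned one (length up to $\beta\,\OPT$). So $\rho_j\le(1+\beta)\rho_{j-1}+\frac{\beta(\beta+1)}{\beta-1}$, which gives $(1+\beta)^k$, e.g.\ $3^k$ for doubling. The paper's lower-bound construction forces a factor $1+x$ per level, where $x$ is the ratio at which the algorithm switches.

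Your proviso that the additive overhead be ``$O(1)$ times the ratio of the smaller level'' does not fix this. An overhead of order $\rho_{j-1}\ge D_{j-1}=\Theta(2^j)$ per level already gives $\Theta(k2^k)$ or worse. The switching ratio must tend to $1$ with the level, $x_j=1+\sqrt{2/(1+D_{j-1})}$. Then the switching overhead $\frac{x_j(x_j+1)}{x_j-1}$ is only $O(\sqrt{D_{j-1}})$, while $\prod_j\frac{1+x_j}{2}<\infty$. These level-dependent thresholds are exactly why promotion is dangerous ($x_{i+1}<x_i$). Their decay $x_i-1=O(2^{-i/2})$ is what keeps the imbalancing distortion bounded.

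Also, a ``budgeted search that reaches the target'' is not an online strategy here. The target is identified only at the end, and the searcher must occupy the newest layer after every step. The rule therefore has to react to growth on all branches, as in the paper's rule: stay while your subtree's optimum is at most $x_i$ times the sibling's.
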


Our upper bound matches the asymptotic lower bound of $\Omega(2^k)$, and thus settles the optimal asymptotic competitive ratio of small set chasing (and, equivalently, layered graph traversal). In fact, we show in  Appendix \ref{sec:LB-adaptive} that our algorithm is asymptotically optimal not only among deterministic algorithms, but also among randomized algorithms competing against adaptive adversaries. 

\begin{theorem} \label{th:adaptive-adversaries}
    Every randomized online algorithm for chasing sets of cardinality at most $k$ has competitive ratio at least $2^{k-1}$ against an adaptive online adversary.
\end{theorem}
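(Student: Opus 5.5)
We adapt the classical deterministic lower bound of $2^{k-2}$ (Fiat et al.) to an adaptive online adversary. The adversary builds a layered graph, equivalently a sequence of request sets, while watching the algorithm's realized moves. In the adaptive online model the adversary must also commit to its own online solution as it goes, and it is compared to that solution rather than to the offline optimum. Since the adversary sees the algorithm's current position before issuing each request, the algorithm's randomness gives no advantage beyond what is conditioned on. Our plan is to argue by induction on $k$ that there is an adaptive adversary strategy forcing
\[
\mathbb{E}[\ALG] \geq (2^{k-1}-\epsilon)\cdot \mathbb{E}[\ADV]
\]
for every $\epsilon>0$.

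The construction is a recursive ``tree of paths.'' Take a star-like metric, or a tree rooted at $s_0$. For $k=1$ the statement is trivial, since a ratio of $1$ is automatic. For the inductive step with $k+1$ points, the adversary keeps one point $p$ fixed far along a new branch: a leaf at distance growing along a path. The remaining $k$ points run the adversary strategy for $k$ recursively inside a subtree.

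Whenever the algorithm sits in the recursive subtree, the adversary extends the $k$-instance. Whenever the algorithm moves to the extra branch, the adversary removes that branch's point by advancing its request further along it, or by terminating it. It does this in doubling phases, in the style of the linear search lower bound. A cleaner route is to invoke the known reduction that makes such a recursive argument yield $2D_k+1$-type recurrences, but with only the weak recurrence $R_{k+1}\ge 2R_k$ in mind. Concretely, we want the following. Each time the algorithm abandons the subtree instance to chase the isolated branch, and then returns, it pays about twice the cost the adversary's online solution incurs. Moreover, the adversary's own solution can always be taken to be whichever of the two options, continuing in the subtree or continuing on the branch, ends up cheaper in hindsight. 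Because the adversary chooses which option to keep alive after seeing the algorithm's choice, it commits online to the branch the algorithm is \emph{not} on. This is legal in the adaptive online model, since the adversary's decisions are made sequentially with knowledge of the algorithm's position.

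The key accounting will be a potential argument. Within the subtree, induction gives $\ALG \ge 2^{k-1}\cdot \ADV_{\text{sub}}$. The extra branch forces an additional factor of $2$. The adversary sets up the instance so that the algorithm's movement between the subtree and the branch costs at least the adversary's total cost, and the recursive cost doubles it. This yields $2^k$ for $k+1$ points, i.e.\ $2^{(k+1)-1}$.

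The main obstacle is making the adversary's online commitments consistent. In the deterministic proof the adversary may pick the cheaper path offline. Here it must decide online, so I would design each phase so that the adversary's path is determined as soon as the algorithm leaves a region: that region is killed, and the adversary stays on the surviving one. Then I would verify that expectations over the algorithm's randomness pass through the per-phase inequalities by linearity, letting phase lengths grow so that the additive start-up costs vanish relative to the ratio.
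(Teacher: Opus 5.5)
\textbf{There is a genuine gap: the factor $2$ per level is never actually obtained.} You identify the crux correctly. An adaptive online adversary sees the algorithm's past moves but not its future coin flips, so it must commit to its own path without knowing which branch will end up cheaper. But your plan does not resolve this, and the resolution you sketch fails.
\begin{itemize}
\item The rule ``commit to the branch the algorithm is \emph{not} on'' is not a consistent online strategy. As soon as the algorithm crosses into the adversary's branch, the adversary must either pay essentially the same switching cost (destroying the factor $2$) or share that branch and pay for the recursive instance played in it.
\item ``Kill the region the algorithm leaves and stay on the surviving one'' presupposes that the adversary is already where the algorithm is heading. That is the same problem.
\item Taking ``whichever option is cheaper in hindsight'' is exactly the offline adversary, which is not available here.
\item A fixed rule such as ``stay in the branch the algorithm did not enter first'' is beaten by an algorithm that switches once and then stays. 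That leaves only the ratio of the width-$(k-1)$ sub-instance.
\end{itemize}
So the step ``the extra branch forces an additional factor of $2$'' has no support. The shape of your construction is also off. If the extra branch is only ever grown as a bare path, the algorithm pays at rate $1$ while sitting on it. Nothing then charges it at the recursive rate on both sides, so $R_{k+1}\ge 2R_k$ cannot follow.

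The missing idea, which is what the paper uses, is to let the adversary randomize.
\begin{itemize}
\item Use two \emph{symmetric} branches from the source, each beginning with an edge of length $C_k$.
\item Whichever branch the algorithm currently occupies receives a fresh width-$(k-1)$ recursive instance. The other branch is extended only by zero-length edges from its own sub-adversary's current position.
\item Let $\ADV_L$ and $\ADV_R$ follow the recursive adversaries of the sub-instances in their own branch and stand still otherwise. The adversary $\ADV$ picks $X\in\{L,R\}$ uniformly at random at the start and follows $\ADV_X$ forever. This is a legitimate online strategy.
\end{itemize}
The algorithm pays for the sub-instances in \emph{both} branches. Choose $C_k\ge 2^{k-2}C_{k-1}$, so that the cost of entering or switching branches absorbs the additive error of each sub-instance. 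Induction then gives $\ALG(s)\ge 2^{k-2}(\ADV(L)+\ADV(R))$, whereas $\ADV(s)=C_k+\tfrac12(\ADV(L)+\ADV(R))$. Hence $\ALG(s)\ge 2^{k-1}(\ADV(s)-C_k)$. Finally, $C_k$ becomes negligible because $\ADV(s)$ can be made arbitrarily large by stacking sub-instances. The extra factor $2$ per level comes from averaging over the adversary's hidden choice, not from the switching costs.
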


We additionally improve the exact lower bound for deterministic algorithms from $2^{k-2}$~\cite{Fiat-competitiveLGT} to $D_k$, defined recursively by
\begin{align}
    D_1=1 \quad \text{and} \quad D_{k+1}=2D_k+\sqrt{8+8D_k}+3.\label{eq:DkViaDk-1}
\end{align}

\begin{theorem}
    Every deterministic online algorithm for chasing sets of cardinality at most $k$ has a competitive ratio of at least $D_k$, for $D_k$ as defined in~\eqref{eq:DkViaDk-1}.
    \label{th:lower-bound}
\end{theorem}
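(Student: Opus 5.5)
We prove Theorem~\ref{th:lower-bound} by induction on $k$, building an adaptive adversary strategy in layered-graph-traversal language on a tree-like metric. For $k=1$ the bound $D_1=1$ is trivial. For the inductive step we use a stronger, scale-invariant hypothesis: for every $\epsilon>0$ and every scale $L$, the adversary can force, using sets of size at most $k$, an instance where the algorithm's cost is at least $(D_k-\epsilon)\cdot \OPT$, with $\OPT \approx L$. Moreover, the adversary can stop at a moment of its choosing, at which the algorithm sits on one specific point and the offline optimum to the final set is about $L$. This gives the composable gadget used to go from $k$ to $k+1$.

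For $k+1$ we generalize the Fiat et al.\ doubling lower bound, where each extra point roughly doubles the ratio. There are two branches. On the \emph{current} branch the adversary runs the width-$k$ gadget at geometrically increasing scales $L_1<L_2<\dots$. It uses the extra $(k+1)$-st point to keep alive an \emph{alternative} path, a single point at distance $x$ growing linearly with time from the origin. Whenever the algorithm commits to the alternative branch, the adversary kills the old branch (it stops requesting its points). It then restarts the construction on the new branch, with the algorithm having paid for the traversal and the adversary's offline path being whichever branch is cheaper. If the algorithm never switches, the gadget on the current branch eventually forces a ratio near $D_k$ relative to a growing $\OPT$, and the alternative branch plays no role. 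If it switches, the gadget's sunk cost roughly doubles, giving $2D_k$. The additive terms come from travelling back through the origin, which contributes the $+3$. The choice of the geometric growth parameter $r$ contributes the square-root term.

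The core computation sets up the recurrence for the cost ratio. Write the algorithm's total cost after $n$ phases as a sum of phase costs, each at least $D_k L_i$ plus travel $2L_i$ or so. Write $\OPT$ as at most $L_n + c\,L_{n-1}$, say, the distance to the branch that survives. Optimizing over the algorithm's switching thresholds, the algorithm's best response against scales $L_i=r^i$ gives ratio approximately $\min_r \bigl(2D_k+3 + \tfrac{2(1+D_k)}{r-1} + (r-1)\cdot 2\bigr)$-type expression. Its minimum by AM-GM is $2D_k+3+\sqrt{8(1+D_k)}$, which matches \eqref{eq:DkViaDk-1}. So the plan is to engineer the gadget so that the algorithm's cost per switch and the adversary's cost have exactly these coefficients. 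We then show that any algorithm, not just threshold ones, is dominated: it can be assumed to be a threshold strategy, by a standard argument that delaying or advancing a switch only helps the adversary.

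The main obstacle is the middle step, composing the width-$k$ gadget so that its guarantee holds \emph{relative to the partial offline cost}. The inner adversary must be stoppable at any time, and the algorithm's investment on an abandoned branch must count fully while the offline optimum pays only once. I would first try to strengthen the induction hypothesis to a potential-style statement: at every time the algorithm's cost is at least $D_k$ times the current work-function minimum minus lower-order terms. I would then check that the outer layer's bookkeeping of the square-root term survives the $\epsilon$-losses as $\epsilon\to0$.
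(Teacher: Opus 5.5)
There is a genuine gap. The construction you sketch would not force $D_{k+1}$, and the mechanism that handles arbitrary algorithms is missing.

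First, the structure is wrong in several places.
\begin{itemize}
\item The paper's alternative branch is \emph{frozen}: it is extended only by length-$0$ edges and is never killed. Both branches persist and the algorithm oscillates between them. If the passive lengths are $\opt_1,\opt_2,\dots$, the switching cost is $\opt_{T+1}+2\sum_{t\le T}\opt_t$.
\item In your version the alternative grows with time, which gives the offline optimum free progress. The old branch is killed at each switch, which destroys the repeated back-and-forth that produces this cost.
\item An algorithm that never switches faces only ratio $D_k$ in your version. In the paper the frozen branch caps the optimum, so the instance is ended there once the active branch is $D_{k+1}$ times longer.
\item The width-$k$ gadgets must be run at a tiny fixed scale $L'=\delta\epsilon' L$ and concatenated, not at geometrically increasing scales. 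Otherwise the algorithm can abandon a large gadget midway and save a non-negligible part of its $D_k$ cost. With tiny gadgets, each one simply runs to completion after the algorithm leaves, losing only $O(L')$ per super-phase, so nothing needs to be ``stoppable at any time''.
\item The potential-style strengthening you propose is false. For $k=2$, against the doubling strategy, cost divided by the current optimum falls to about $5$ when the active branch catches up with the passive one.
\end{itemize}
The workable hypothesis is the paper's: for each $\epsilon$ there is $\delta>0$ such that, for all $L$ and all algorithms, the final optimum $\ell$ lies in $[\delta L,L]$ and the cost is at least $(D_k-\epsilon)\ell$. The adversary cannot pin $\ell$ to $\approx L$.

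Second, and more importantly, you have the roles reversed on who controls the ratio. The ratio $x$ between consecutive passive lengths is set by the algorithm through its switching times. So fixing scales $L_i=r^i$ and then taking $\min_r$ conflates the two players. There is also no standard reduction to threshold strategies, because an algorithm may vary its switching ratio over time.

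The adversary's lever is the stopping time. Among super-phases after both branches exceed $\delta L/\epsilon'$, end after the first $T$ with $\opt_{T+1}/\opt_T\ge x-\epsilon'$, where $x=\max_{t\in[\epsilon'T,T)}\opt_{t+1}/\opt_t$. This gives, simultaneously, $\opt_{T+1}\ge (x-\epsilon')\opt_T$ and $\opt_t\ge x^{-(T-t)}\opt_T$ on the window. Hence the cost is $\gtrsim\opt_T\bigl(D_k(1+x)+\frac{x(x+1)}{x-1}\bigr)$ for every algorithm. You must then still show that such a $T$ occurs before the branches exceed $L$, with $\delta$ depending only on $k$ and $\epsilon$. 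The paper does this by noting that, absent a stop, the windowed maximum drops by $\epsilon'$ each time $T$ grows by a factor $1/\epsilon'$.

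Finally, your displayed expression is not the right one. The expression $\frac{2(1+D_k)}{r-1}+2(r-1)$ has minimum $4\sqrt{1+D_k}$. The correct expression is $D_k(1+x)+\frac{x(x+1)}{x-1}=2D_k+3+(1+D_k)(x-1)+\frac{2}{x-1}$, whose minimum $2D_k+3+\sqrt{8(1+D_k)}$ is attained at $x=1+\sqrt{2/(1+D_k)}$.
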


Note that $D_2 = 9$ is known to be the optimal competitive ratio for $k=2$. We conjecture that $D_k$ is exactly tight for all $k \geq 1$. We prove that this indeed holds for $k=3$.

\begin{theorem} \label{th:upper-bound-D3}
    There exists a $D_3$-competitive deterministic online algorithm for chasing sets of cardinality at most $3$.
\end{theorem}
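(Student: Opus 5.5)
The plan is to work in the equivalent layered graph traversal formulation and use the standard reduction to the case where the revealed graph is a tree whose current layer consists of at most $3$ active leaves; every other vertex is dead, and dead branches can be pruned. The reduction for layered graph traversal costs nothing in the competitive ratio. The algorithm will be a two-level doubling strategy matched to the structure of the lower-bound recursion~\eqref{eq:DkViaDk-1}.

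\textbf{The algorithm.} Consider the deepest branching point $v$ from which all active leaves descend. When $v$ has two child subtrees, one containing a single active leaf and the other containing two, the algorithm alternates between the two subtrees.
\begin{itemize}
\item In the one-leaf subtree it simply follows the leaf.
\item In the two-leaf subtree it runs the optimal $9$-competitive doubling strategy ($D_2=9$), recursively.
\item The alternation uses budgets $b, \gamma b, \gamma^2 b,\dots$, measured in the depth of the optimal path to the current layer within each subtree, where $\gamma>1$ is a parameter to be fixed.
\end{itemize}
When $v$ has three children, the algorithm treats it as the limiting case in which the pair of subtrees is nested. When a subtree dies, the algorithm collapses to the width-$2$ or width-$1$ strategy and keeps the budget scale as state. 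This is the point where a careful definition is needed so that no potential is lost.

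\textbf{The analysis.} I would use a potential function $\Phi$ and aim for
\[
\Delta \ALG + \Delta\Phi \le D_3\,\Delta\OPT
\]
at every step, with $\Phi\ge 0$ and $\Phi$ initially $0$.
\begin{itemize}
\item $\Phi$ combines the inductive width-$2$ potential inside the two-leaf subtree with a term for the outer doubling. The outer term is proportional to $(\text{current budget}) - (\text{depth reached in the currently explored side})$, plus a term accounting for the return trip to $v$.
\item A direct phase analysis gives the costs of an outer phase. A side with budget $\gamma^i b$ costs at most $2\gamma^i b$ round trip on the one-leaf side, and at most $D_2$ times progress plus an additive return term on the two-leaf side.
\item Summing the geometric series and taking the worst case over where the adversary lets the sides die gives a ratio of the form
\[
f(\gamma) = 2D_2+3 + c_1(\gamma-1) + \frac{c_2}{\gamma-1},
\]
with $c_1 c_2 = 2(1+D_2)$. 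Minimising over $\gamma$ gives exactly $2D_2+3+\sqrt{8+8D_2} = D_3$.
\end{itemize}
I would first do this phase computation to pin down $\gamma$ and the constants $c_1,c_2$. I would then build $\Phi$ so that each elementary event is charged locally. The events are: growth of a leaf, death of a leaf, a budget switch, and re-branching of the tree, which changes which vertex plays the role of $v$.

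\textbf{Main obstacle.} The hardest step is the interaction between the levels. The adversary can kill a leaf inside the two-leaf subtree, turning it into a one-leaf subtree. It can also create new branching that restructures the tree, for example a leaf of the single side splitting so that the roles of the sides swap. In either case the inner $9$-competitive potential and the outer doubling state must be transferred without the potential jumping up.

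I would handle this by defining the outer budget in terms of the true offline distance to the branching point rather than the algorithm's history. I would then verify, case by case, that every restructuring event only decreases $\Phi$ or is paid by $\OPT$'s increase. Wasted work in a subtree that dies is charged to the $c_2/(\gamma-1)$ term, which is also how the lower-bound adversary forces $D_3$. If a clean potential fails, the fallback is a phase-by-phase amortised argument combined with induction on the number of restructuring events.
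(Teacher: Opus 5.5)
You have a plan rather than a proof, and the gap sits exactly at the step you postpone.

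\textbf{The missing step.} The potential $\Phi$ is never defined. Your computation of $f(\gamma)$ is the lower-bound calculation of Section~\ref{sec:LB} read backwards, so it only covers instances shaped like the lower-bound instance, where a deleted leaf always has a leaf as its sibling. The dangerous event is not among those you single out; the paper handles deletions inside the pair and the subsequent ``role swap'' easily. The dangerous event is deleting the one-leaf side while the algorithm sits in the two-leaf side, which promotes the pair to the top. There the deleted leaf is not optimal, so $\Delta\OPT=0$, and the algorithm need not move, so your $\Phi$ must not increase at all. The pseudo-cost your outline suggests is: rate $D_2$ on the single side up to $x_3\OPT_{RT}$, switching term $\frac{x_3+1}{x_3-1}\OPTOTHER$, and the width-$2$ potential inside the pair. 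On the tree of Figure~\ref{fig:deletion-example1} (Section~\ref{sec:deletionProblems}) it drops by about $2.4a$. Equivalently, your $\Phi$ would have to jump up with nothing paying for it. Incidentally, the wasted work on a side that dies produces the $c_1(\gamma-1)=(D_2+1)(\gamma-1)$ term, not the $c_2/(\gamma-1)$ term; the latter comes from the geometric series of switching costs.

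\textbf{The missing idea: use the width bound itself.} With the labels of Figure~\ref{fig:three-leaves}, once the fork creating $l_2,l_3$ in $RT$ has occurred, $LT$ can only have one leaf. So whatever the algorithm spent in $LT$ beyond the threshold $bx_3$ accrues at rate $D_1$, not $D_2$. The paper encodes this as
\[
\overline{\Phi(LT)}=D_2(a\land bx_3)+D_1\bigl[(a-bx_3)\land((c\land d)x_3)\bigr]_+ .
\]
With this term, deleting $l_1$ while $\ALG$ stays gives $\Phi(T')-\Phi(T)\ge (D_2-D_1)c\ge 0$, using $x_3<x_2$ and the recursion for $D_2$. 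The slack comes from re-accounting the promoted pair at level $3$: rate $D_2$ on $c$ and $d$, and switching coefficient $\frac{x_3+1}{x_3-1}$. When a move is forced, the new tree is extreme and Claim~\ref{claim:bounding-the-potential2} pays for the move.

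\textbf{Consequences for the algorithm.} This accounting dictates a state-based ratio rule rather than history-based budgets $b,\gamma b,\dots$ carried through collapses. At every level, stay while $\OPT_{\mathrm{own}}\le x_i\OPT_{\mathrm{other}}$, keeping $x_3$ at the top even after the tree shrinks to two leaves; otherwise move to the optimum of the highest violating subtree. Stored budgets are a problem because deletions make a side's $\OPT$ jump, so the budgets drift away from the actual ratio. ``Collapsing to the width-$2$ strategy'' at ratio $2$ also conflicts with the level-$3$ terms that absorb the loss above, since those terms are only justified while the $x_3$-invariant holds. Your remark about defining budgets via offline distances points toward this rule. You still need to commit to it and produce a potential containing the refinement above.
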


Since distributed asynchronous collective tree exploration reduces to width-$k$ layered graph traversal \cite{CossonM25}, and width-$k$ layered graph traversal reduces to $k$-taxi \cite{k-taxi}, we also obtain improved upper and lower bounds, respectively, for these problems.

\begin{corollary} 
    \label{cor:tree-exploration}
    There exists a distributed asynchronous algorithm
that explores any tree of $n$ nodes and depth $D$ in at most $2n + O(k 2^k D)$ moves.
\end{corollary}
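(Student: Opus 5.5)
The corollary follows by composing the reduction of \cite{CossonM25} with Theorem~\ref{th:upper-bound}. The relevant result of \cite{CossonM25} is a black-box statement. Given any deterministic $\rho_k$-competitive algorithm for width-$k$ layered graph traversal (equivalently, chasing sets of cardinality at most $k$) that is competitive against adaptive requests, it yields a distributed asynchronous collective tree exploration algorithm for $k$ agents. That algorithm explores any tree with $n$ nodes and depth $D$ in at most $2n + O(\rho_k D)$ moves. Their bound is stated in terms of the previous best $\rho_k = O(k2^k)$ (Burley's WFA bound), which gave $2n+O(k^2 2^k D)$. Substituting the ratio $\rho_k=O(2^k)$ from Theorem~\ref{th:upper-bound} is then designed to give the claimed $2n+O(k2^kD)$.

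The plan is as follows. First, I will recall the precise form of the reduction in \cite{CossonM25}. The exploration procedure repeatedly invokes a layered graph traversal (set chasing) subroutine on instances whose width is at most $k$. The requested sets correspond to (part of) the current boundary of the explored region. The overhead beyond the unavoidable $2n$ term is bounded by a sum, over phases, of the subroutine's cost. Each of these phase costs is at most $\rho_k$ times an offline optimum, and these optima are related to the depth $D$, with an extra factor of $k$ arising from summing over the agents or phases. Second, I will check that the reduction uses only the following two properties of the subroutine:
\begin{enumerate}
\item it is deterministic;
\item its competitive guarantee holds against adaptively generated requests.
\end{enumerate}
The requests depend on past moves of the agents, which is why adaptivity matters. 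Our algorithm from Theorem~\ref{th:upper-bound} satisfies both properties. It is deterministic, and any deterministic algorithm's worst-case guarantee automatically covers adaptive adversaries, since the adversary can simulate it. The guarantee also does not require knowing $k$ or $T$ in advance, matching the problem definition used in the reduction. Third, I will plug in $\rho_k = O(2^k)$ and collect terms to obtain $2n + O(k2^kD)$.

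The main obstacle is bookkeeping rather than new mathematics. I have to verify exactly where the $k$ factor in the final bound comes from in \cite{CossonM25}, and confirm that the dependence on the subroutine's ratio is linear, i.e.\ that the overhead is $O(k\rho_k D)$ rather than something else. I also have to confirm that any additive constants in the competitive ratio definition are absorbed. If the cited theorem is phrased specifically for WFA, I would instead restate its proof with an abstract $\rho_k$-competitive subroutine. I expect no step of that proof to use WFA-specific properties beyond its competitive ratio.
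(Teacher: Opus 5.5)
Your proposal is correct and matches the paper's argument: the corollary is obtained by plugging the $O(2^k)$ ratio of Theorem~\ref{th:upper-bound} into the black-box reduction of \cite{CossonM25}, whose additive overhead is $O(k\rho_k D)$ for a $\rho_k$-competitive width-$k$ layered graph traversal algorithm (this gave their $2n+O(k^2 2^k D)$ with Burley's $O(k2^k)$). Your observation that a deterministic guarantee automatically holds against the adaptively generated requests of that reduction is exactly what makes the substitution valid.
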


\begin{corollary} 
    \label{cor:lower-bound-ktaxi}
    Every deterministic online algorithm for $k$-taxi has a competitive ratio of at least $D_k$, for $D_k$ as defined in~\eqref{eq:DkViaDk-1}.
\end{corollary}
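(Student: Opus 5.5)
The plan is to combine Theorem~\ref{th:lower-bound} with the reduction from width-$k$ layered graph traversal to $k$-taxi of \cite{k-taxi}. The point to check is that this reduction preserves competitive ratios exactly, with no multiplicative or additive loss, and that it works against deterministic algorithms. For deterministic algorithms the adaptivity of the reduction is harmless: a deterministic algorithm's behaviour is fully predictable, so an adaptive adversary is no stronger than an oblivious one.

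\textbf{Step 1: the reduction.} Suppose, for contradiction, that some deterministic $k$-taxi algorithm $A$ has competitive ratio $\rho < D_k$. From $A$ I would build a deterministic algorithm $B$ for small set chasing with sets of size at most $k$, following \cite{k-taxi}. $B$ keeps a simulated configuration of $A$'s $k$ taxis.
\begin{itemize}
\item When a request $S_t$ arrives, $B$ issues to $A$ taxi requests whose effect is that, afterwards, every point of $S_t$ is occupied by a taxi. These are requests from the current position of a taxi that will not survive to a point of $S_t$, and they can depend on $A$'s current configuration.
\item $B$'s own position is always the location of one designated taxi, the one carrying the ``real'' player. $B$ moves whenever that taxi moves, either empty or carrying a passenger.
\end{itemize}
The accounting compares the two problems. $B$'s cost is at most the total empty-movement cost of $A$, which is the cost counted in $k$-taxi. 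Any offline solution for the chasing instance can be simulated by a $k$-taxi offline solution of no larger cost, so $\OPT_{\text{taxi}} \le \OPT_{\text{chasing}}$, possibly up to an additive constant absorbed in the competitive ratio definition. Together these give $\cost(B) \le \cost_{\text{taxi}}(A) \le \rho\,\OPT_{\text{taxi}} + c \le \rho\,\OPT_{\text{chasing}} + c$.

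\textbf{Step 2: conclude.} By Step 1, $B$ is a deterministic algorithm for chasing sets of cardinality at most $k$ with competitive ratio $\rho < D_k$. This contradicts Theorem~\ref{th:lower-bound}, so every deterministic $k$-taxi algorithm has ratio at least $D_k$.

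\textbf{Main obstacle.} The only delicate point is that the reduction is tight, i.e.\ it carries a factor exactly $1$ rather than some constant; otherwise only $\Omega(D_k)$ would follow. I would verify this by rereading the reduction in \cite{k-taxi}, which is stated as transferring lower bounds exactly; the existing $2^{k-2}$-type transfer there uses it in the same way. I would also check that the additive constant is compatible with the lower-bound instances of Theorem~\ref{th:lower-bound}. Those instances can be scaled so that $\OPT$ grows unboundedly, which kills any additive term.
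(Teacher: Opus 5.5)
Your proposal is essentially the paper's argument. It combines Theorem~\ref{th:lower-bound} with the ratio-preserving reduction from width-$k$ layered graph traversal to $k$-taxi of \cite{k-taxi}; adaptivity is harmless for deterministic algorithms, and any additive constant disappears because the lower-bound instances can be scaled so that $\OPT$ is arbitrarily large.

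One caution: do not rely on your informal description of how the reduction works internally. As written, it breaks the accounting you claim: a request starting where a taxi already sits can be served with no empty movement, and letting $B$ follow the designated taxi's loaded moves would make $\cost(B)$ exceed $A$'s empty-run cost. Step~1 should instead rest on the cited reduction used as a black box, exactly as the paper does.
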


Since competitive $k$-taxi algorithms for general (infinite) metric spaces are unknown except for $k\le 3$~\cite{CoesterP26}, the constant-factor improvement is admittedly modest. Still, we hope it may offer insight for future algorithm design, especially since $D_2=9$ is known to be tight for the $2$-taxi problem~\cite{k-taxi}. In a similar vein, our algorithm for small set chasing is directly inspired by our lower bound construction.

\subsection{Overview of techniques}
Layered graph traversal is known to be equivalent to the case where the graph is a tree~\cite{Fiat-competitiveLGT}. To simplify the description and analysis of our algorithms, it is convenient to further reduce the problem to a two-player game on a tree which evolves over time, as already done in \cite{randomizedLGT}. This reduction is very natural: after layer $L_i$ is revealed, the algorithm only needs to remember the Steiner tree which connects the starting vertex $s$ to the vertices in $L_i$ (all vertices which are not part of this Steiner tree are ``dead-ends'', since they do not have any descendant in $L_i$). Moreover, moving to a vertex in $L_i$ is equivalent to moving to a leaf of the Steiner tree. Therefore, we can model layered graph traversal as ``chasing'' the leaves of the Steiner tree, where the Steiner tree changes when a new layer is revealed. The game formulation helps express the changes of the Steiner tree as a sequence of simple transformations, which we can reason about more easily.

In Section~\ref{sec:LB}, we first provide a sketch of our lower bound of $D_k$ for deterministic algorithms, which will serve to motivate our algorithm. As in the $2^{k-2}$ lower bound of~\cite{Fiat-competitiveLGT}, we use a recursive construction consisting of two branches, each containing a concatenation of lower bound instances of width $k-1$. In each step, the branch currently occupied by the algorithm is extended with an additional lower bound instance, while the other branch advances with zero-length edges. Unlike the construction of~\cite{Fiat-competitiveLGT}, which could be stopped at any time to yield a ratio of $2^{k-2}$, our improved lower bound is only attained at certain times when the ratio between the branch lengths is maximized over the course of the algorithm, and the bound we obtain depends on this ratio. Since we can stop an instance only at those specific moments, the adversary cannot easily control the absolute cost of an instance, as it depends on the behavior of the algorithm. Establishing a rigorous lower bound therefore requires some care: On the one hand, the cost of a recursive instance should not be too large, as otherwise an online algorithm could avoid it by switching to the other branch. On the other hand, it also should not be too small (e.g., exponentially decaying), as we need to be able to stack many recursive instances to obtain unbounded total cost. In details deferred to Appendix~\ref{app:LB}, we show that we can lower and upper bound the cost of an instance, which allows to use them effectively in the recursive construction.

Our lower bound suggests that there is a ``sweet spot'' for the ratio between the two branch lengths at which point the algorithm should switch to the other branch. Specifically, for instances of width $k$, the algorithm should switch to the other branch if the optimal value in its own branch is a factor $x_k := 1 + \sqrt{\frac{2}{1 + D_{k-1}}}$
larger than the optimal value in the other branch. This immediately suggests a simple online algorithm: switch to the shorter branch when the sweet spot of ratio $x_k$ between branch lengths is reached, and employ an analogous strategy (for smaller $k$) recursively within each branch. In Appendix~\ref{app:D3}, we will show that this strategy is indeed $D_3$-competitive for instances of width $3$.

However, generalizing the potential used for width $3$ to larger $k$ does not seem straightforward, as outlined in Appendix~\ref{app:towards-D4}. 
In Section~\ref{sec:almost}, we derive a natural potential function inspired by the lower bound and show that it can \emph{almost} prove $D_k$-competitiveness, except for two types of problems triggered by dead-ends in layered graph traversal. We then address these issues in Section~\ref{sec:mainAlgo} by refining the algorithm and potential function through two methods, which we call \emph{forgetting} and \emph{imbalancing}. While the presentation of these techniques will be specific to layered graph traversal, the underlying principles may be applicable more broadly. The motivating idea is to transform an instance into a canonical hard form, as witnessed in the lower bound construction. \emph{Forgetting} corresponds to treating parts of the instance as if they had not yet been revealed, which we execute by truncating terms in the potential function. \emph{Imbalancing} is a method of rounding an instance towards the highly unbalanced structure of worst-case instances, by distorting some edge lengths. Mathematically, the benefit of such distortions manifests through larger potential function values, and only costs a constant factor in the competitive ratio.

\subsection{Evolving tree game}
\label{sec:evolving-tree-game}
We now describe the evolving tree game, which is essentially the same as the one defined in \cite{randomizedLGT} except that our growth operation is discrete rather than continuous and we define the game directly for binary trees.

We say that a tree is a \textit{stemmed binary tree} if it satisfies the following conditions: a) it is a rooted tree, and the root has degree $1$; b) all vertices except the root have either $0$ or $2$ children. 

The evolving tree game is a two-player game involving stemmed binary tree $T$ with nonnegative edge weights. We denote by $w_e$ the weight of an edge $e$. For a non-root node $u$, we denote by $e(u)$ the edge between $u$ and its parent. Initially, the tree consists of two vertices connected by a zero-length edge. The first player, called \textit{the adversary}, can apply the following operations on $T$:
\begin{itemize}
    \item \textbf{Growth}: for a leaf $l$ and some $h > 0$, increase the length of the edge $e(l)$ incident to $l$ by $h$.
    \item \textbf{Deletion}: for a leaf $l$ that is not the unique child of the root, delete $l$ together with its incident edge $e(l)$. Since the parent $p$ of $l$ now has only one child remaining (and hence degree two), smooth the tree at $p$ as follows: let $e_1 = \{u, p\}$ and $e_2 = \{p, v\}$ be the edges incident to $p$, and replace them by a single edge $e = \{u, v\}$ of weight $w_e := w_{e_1} + w_{e_2}$.
    \item \textbf{Fork}: for a leaf $l$, connect two new vertices to $l$ by edges of length $0$. 
\end{itemize}

The other player, referred to as \textit{the algorithm}, responds to each operation by choosing a leaf of $T$ to occupy. If the adversary grows by $h$ the leaf $l$ where the algorithm is located, after the growth the algorithm is located at some point along the edge incident to $l$, at distance $h$ from $l$. Therefore, the algorithm needs to either move back to $l$ or choose a different leaf of $T$. If the adversary deletes the leaf where the algorithm is located, the algorithm has to move to another leaf of $T$. The algorithm pays a cost equal to the total distance it moves, and its goal is to minimize the total cost incurred during the game.

The adversary can end the game after any operation. The algorithm is said to be $\rho$-competitive if it incurs cost at most $\rho \cdot d$, where $d$ is the length of the shortest root-to-leaf path in the evolving tree at the end of the game. The instance of the game is said to have width $k$ if there are at most $k$ leaves which exist at the same time in $T$. The algorithm does not have access to $k$ in advance.

A full proof of the following reduction, which was given similarly in~\cite{randomizedLGT}, is given in Appendix \ref{sec:reduction} for completeness.

\begin{restatable}{lemma}{reductionToETG}
\label{lemma:reduction-to-ETG}
 If there exists a $\rho$-competitive algorithm for the width-$k$ evolving tree game, then there exists a $\rho$-competitive algorithm for width-$k$ layered graph traversal.
\end{restatable}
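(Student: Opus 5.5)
We reduce in two stages: first from width-$k$ layered graph traversal to width-$k$ layered traversal of a tree, then from trees to the evolving tree game. The given $\rho$-competitive game algorithm is simulated online, and every game move is translated into a walk in the graph of no greater length.

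\textbf{Step 1 (reduction to trees).} Following \cite{Fiat-competitiveLGT}, we may assume $G$ is a tree rooted at $s$ in which every vertex of $L_{i+1}$ has exactly one parent in $L_i$. To achieve this, when layer $L_{i+1}$ is revealed, the searcher retains for each $v\in L_{i+1}$ only one edge $\{u,v\}$ with $u\in L_i$, chosen to minimize $\dist(s,u)+w(u,v)$. This preserves $\dist(s,v)$ for all revealed vertices, so $\dist(s,t)$ is unchanged. Every walk in the pruned tree is also a walk in $G$ of the same length. Width is preserved, and the pruning is computable online since $\dist(s,u)$ is known for $u\in L_i$.

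\textbf{Step 2 (maintaining a Steiner tree).} At each time $i$, we let $T_i$ be the Steiner tree of $\{s\}\cup L_i$ in the layered tree. We suppress degree-$2$ internal vertices, merging their edges with summed weights, and regard $s$ with its unique incident path as the stem. If some vertex has more than two children, we expand it into a binary cascade joined by zero-length edges. Then $T_i$ is a stemmed binary tree with at most $k$ leaves, namely the vertices of $L_i$.

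The transition $T_i\to T_{i+1}$ is produced by a sequence of game operations, performed in order:
\begin{enumerate}
\item \emph{Deletions.} Delete the leaves $u\in L_i$ that have no child in $L_{i+1}$; the smoothing in the game matches the suppression of degree-$2$ vertices.
\item \emph{Forks.} For each surviving $u$ with $c\ge 2$ children, apply $c-1$ forks to build the binary cascade.
\item \emph{Growths.} Grow each new leaf by the corresponding edge weight; zero-weight edges need no growth, since the game forbids $h=0$.
\end{enumerate}
Deletions come before forks so that the number of leaves never exceeds $k$. This also means a leaf with children is never deleted. If all leaves of the current tree were to be deleted, some vertex of $L_i$ would have to have a child, so there is no contradiction. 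The edge case of a deletion when the root has a unique child cannot arise, because at least one leaf survives.

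\textbf{Step 3 (simulation and cost).} We run the game algorithm on this operation sequence. Whenever it occupies a point of $T$, the searcher stands at the corresponding point of $G$; points in the interior of an edge are interpolated along the path. Each move of length $x$ in $T$ corresponds to a path of length $x$ in $G$ that uses only revealed edges, since tree distances equal graph distances along revealed paths. After processing layer $i+1$, the algorithm sits on a leaf, which is a vertex of $L_{i+1}$, as required. When $t$ is revealed, the final tree has a single leaf $t$ whose root-leaf distance is $\dist(s,t)$. Hence the searcher's cost is at most $\rho\,\dist(s,t)$.

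\textbf{Main obstacle.} The expected difficulty is the bookkeeping that the game's intermediate states mirror reality. Two points need care:
\begin{itemize}
\item The algorithm may be temporarily located at a leaf that has just been forked or partially grown, and hence not yet at a vertex of $L_{i+1}$. This only matters at the end of each layer's batch of operations, where it is resolved because the algorithm always responds with a leaf.
\item The searcher's positions along edges of the game tree must be positions actually reachable in $G$. This holds because growth only ever extends the edge toward a newly revealed vertex.
\end{itemize}
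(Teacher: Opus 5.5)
\textbf{Steps 2--3.} These match the paper's proof. You delete dead leaves before forking so the width never exceeds $k$, fork into the children, grow by the new edge weights, and let the searcher follow the image of the game algorithm's position. Letting the algorithm sit on partially grown edges is a fine substitute for the paper's normalization to at most one positive-weight edge per layer. One minor correction: $s$ may have several children, so the stem is the game's initial zero-length edge above $s$ (the paper's virtual vertex in layer $-1$), not a path incident to $s$.

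\textbf{The gap is in Step 1.} You claim that keeping, for each $v\in L_{i+1}$, the single edge $\{u,v\}$ minimizing $\dist(s,u)+w(u,v)$ preserves distances. This is false. $G$ is undirected and shortest paths may go back through earlier layers, so a later layer can create a shortcut between vertices of an earlier layer. No tree built from original edges retains such a shortcut. Take $L_1=\{a,c\}$, $L_2=\{b,d\}$, $L_3=\{t\}$ with $w(s,a)=1$, $w(s,c)=100$, $w(a,b)=w(c,b)=w(c,d)=0$, $w(d,t)=1$. Then $\dist(s,t)=2$ via $s,a,b,c,d,t$. But $t$'s only neighbor is $d$, $d$'s only neighbor in $L_1$ is $c$, and $c$'s only edge to $L_0$ has weight $100$. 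So your pruned tree, and indeed any tree of original edges with one parent per vertex, has root-to-$t$ distance $101$. The tree algorithm is then only guaranteed cost $101\rho$ rather than $2\rho$, and replacing $100$ by a large $M$ makes the loss unbounded. So your final bound $\rho\,\dist(s,t)$ does not follow.

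\textbf{The missing idea} is that a tree edge must be allowed to represent a \emph{path} of the revealed graph, not a single graph edge. You can either cite the tree reduction of \cite{Fiat-competitiveLGT} as a black box, as the paper does, or repair Step 1 as follows.
Let $G_{i+1}$ be the subgraph on layers $0,\dots,i+1$, which is fully known once $L_{i+1}$ is revealed. Attach each $v\in L_{i+1}$ to the $u\in L_i$ minimizing $\dist_{\mathrm{tree}}(s,u)+\dist_{G_{i+1}}(u,v)$, with tree edge weight $\dist_{G_{i+1}}(u,v)$.
\begin{itemize}
\item Tree moves remain realizable by walks of equal length in the revealed graph, so your Step 3 is unaffected.
\item For the benchmark, cut a shortest $s$--$t$ path at its first visits $u_0=s,u_1,\dots,u_N=t$ to the successive layers. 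The piece from $u_j$ to $u_{j+1}$ stays in layers $\le j$ until its last vertex, so it lies in $G_{j+1}$.
\item By induction, $\dist_{\mathrm{tree}}(s,u_j)$ is at most the length of the prefix ending at $u_j$. Hence $\dist_{\mathrm{tree}}(s,t)\le\dist(s,t)$.
\end{itemize}
This inequality is all the reduction needs; distances to intermediate vertices need not be preserved. In the example, $d$ is attached to $a$ with weight $0$, and the tree distance to $t$ becomes $2$.
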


\section{Lower bound sketch}\label{sec:LB}
In this section, we give an informal sketch of the lower bound proof (Theorem \ref{th:lower-bound}). A more formal proof is deferred to Appendix~\ref{app:LB}. The construction of our lower bound will be helpful in motivating our algorithm later. We formulate the hard lower bound instance as an instance of the evolving tree game (rather than as an instance of layered graph traversal, which we will do in the more formal proof).
\paragraph{Constructing the instance.} We construct the hard instance inductively on the width $k$ of the tree. Starting from a tree which contains a single leaf, we perform a fork to obtain a tree with two branches. While the algorithm is located in one of the branches, we repeatedly start a \textit{phase} in which we play the hard instance for width $k-1$ in that branch, scaled by a sufficiently small constant. We ensure that at the end of a phase each branch contains a single leaf. When the algorithm moves to the other branch, we play phases there. We call a \emph{super-phase} a maximal consecutive sequence of phases played in the same branch. The branch where the phases are played during a super-phase is called \emph{active}, the other one \emph{passive}. Note that the passive branch is just a single edge from the root to the unique leaf of the branch. Let $\opt_{t}$ denote the length of the passive branch during super-phase $t$, and define 
\[
x := \limsup_{t \rightarrow \infty} \frac{\opt_{t+1}}{\opt_{t}}.
\]

We choose $T$ such that $\frac{\opt_{T+1}}{\opt_{T}} \approx x$, and we stop after super-phase $T$ is completed. We delete the leaf in the branch which was active in the last super-phase, so that the tree contains a single leaf in the end, connected to the root by an edge of length $\opt_{T}$.

\paragraph{Online cost analysis.}
Since each sub-instance is scaled by a small enough constant, the cost that the algorithm can save by switching branches in the middle of a phase is negligible. Thus, denoting by $A$ and $P$ the active, respectively passive, branch in the last super-phase, and by applying the inductive hypothesis, we obtain that the algorithm pays cost at least $D_{k-1} \opt_{T+1}$ during the phases in $A$, and at least $D_{k-1} \opt_{T}$ during the phases in $P$. Moreover, the algorithm pays $\opt_{t+1} + \opt_{t}$ to switch branches at the end of each super-phase $t$. By the choice of $x$, we have $\opt_{T+1} \approx x \opt_{T}$ and $\opt_{t+1} \lesssim x \opt_{t}$ for all sufficiently large $t < T$. Therefore, the overall switching cost is 
\begin{align*}
\sum_{t=1}^T \left(\opt_{t+1} + \opt_{t} \right)&= \opt_{T+1} + 2 \sum_{t=1}^T \opt_{t} \gtrsim \left(x + 2 \sum_{t=0}^{T-1} \frac{1}{x^{t}}\right) \opt_{T}. 
\end{align*}
As we can make $T$ arbitrarily large, we have 
\[
\sum_{t=0}^{T-1} \frac{1}{x^{t}} \approx \sum_{t=0}^\infty \frac{1}{x^{t}} = \frac{x}{x-1},
\]
so the overall switching cost is at least roughly 
$
\left(x +  \frac{2x}{x-1}\right) \opt_{T} = \frac{x(x+1)}{x-1} \opt_{T}
$. In total, the algorithm pays at least roughly 
\[
D_{k-1} \opt_{T+1} + D_{k-1} \opt_{T} + \frac{x(x+1)}{x-1} \opt_{T} \approx \left((x+1) D_{k-1} + \frac{x(x+1)}{x-1} \right) \opt_{T}.
\]
This expression is minimized for
$x = 1 + \sqrt{\frac{2}{1+D_{k-1}}}$, which yields the lower bound by Lemma~\ref{lem:DkViaxk} below. 

To formalize the proof, we must show that it is indeed possible to scale down sub-instances in order to make the cost incurred in an individual sub-instance negligible. To this end, we need to upper bound $\opt_{T}$. Also note that scaling a sub-instance might affect the algorithm's behavior (and thus the structure of the sub-instance, as it depends on the algorithm's behavior).

\section{Notation}
In this section, we define notation which will be useful to describe and analyze our algorithms.

We refer to an edge-weighted stemmed binary tree simply as ``tree''. If a tree $S$ contains a single leaf, we call it trivial and we represent it by an edge $e(S)$ of length $w_{e(S)}$. Otherwise, we call it non-trivial and we represent it by an edge $e(S)$ of length $w_{e(S)}$ connected to two subtrees $LS$ and $RS$ (see Fig. \ref{fig:simple-tree}). We say that $LS$ is the sibling subtree of $RS$ and vice-versa, and we say that $S$ is the parent subtree of $LS$ and $RS$. Additionally, we say that $S'$ is a subtree of $S$ if $S' = S$ or $S'$ is a subtree of $LS$ or $RS$.

\begin{figure}[h]
\centering
\begin{tikzpicture}
\node[draw, inner sep=0pt] {} 
    child {node[draw, inner sep=0pt] {} 
        child {node {...} edge from parent node [left] {$LS$}} 
        child {node {...} edge from parent node [right] {$RS$}}
        edge from parent node [left] {$w_{e(S)}$}};
\end{tikzpicture}
\caption{Non-trivial stemmed tree S.}
\label{fig:simple-tree}
\end{figure}

Denoting by $T$ the tree on which the evolving tree game is played, we define the depth of a node as the number of edges on the path connecting that node and the root of $T$, and define the depth of $T$ as the maximum depth of one of its nodes. As in~\cite{randomizedLGT}, we parametrize the evolving tree game by depth instead of width in sections \ref{sec:almost} and \ref{sec:mainAlgo} concerned with our main algorithm. We say that an instance has depth $k$ if the depth of $T$ is at most $k$ at all times. Since the depth of a binary stemmed tree is smaller than or equal to its width, any instance of width $k$ has depth at most $k$.\footnote{On the other hand, an instance of depth $k$ can have width up to $2^{k-1}$, so our upper bounds hold even for certain instances of much larger width.} Since the parameter $k$ is not known in advance, throughout the execution of the algorithm we denote by $k$ the maximal depth of $T$ since the beginning of the game. 

We associate to each subtree $S$ of $T$ a level between $1$ and $k$: if the depth of $S$'s root is $d$, then $S$ has level $k-d$. In particular, the level of $T$ is $k$. By abuse of notation, we also associate levels to edges, so that $e(S)$ has the same level as $S$. Note that the level of a subtree $S$ can increase in two ways: 1) if $k$ is incremented (because 
the depth of $T$ reached a new maximum); 2) if the smoothing that follows a deletion causes a decrease in the depth of $S$'s root. In either case, we say that $S$ is ``promoted'' to a higher level.

For notational convenience we often treat trees like sets of points, and hence use element and set difference notation (``$\in$'' and ``$\setminus$'') accordingly. Additionally, we use $x \land y$ and $x \lor y$ as a shorthand for $\min\{x, y\}$ and $\max\{x, y\}$, respectively, and we assume that multiplication and addition take precedence over these operators.

We define $\OPT_S$ to be the shortest path from the root of $S$ to one of its leaves. So \[\OPT_S = w_{e(S)} + (\OPT_{LS} \land \OPT_{RS}).\]
We refer to our algorithm by $\ALG$, and we denote the cost incurred by the algorithm so far by $\cost(\ALG)$. To simplify notation, we assume that each tree has a distinguished leaf which describes $\ALG$'s location, so we can encode the full state of the game in a tree. 

Finally, we define
\begin{align}
x_k &:= 1 + \sqrt{\frac{2}{1 + D_{k-1}}}\qquad \text{for }k\ge 2,\label{eq:xk}
\end{align}
and note the following lemma, which we will use repeatedly throughout the rest of the paper.

\begin{lemma}\label{lem:DkViaxk}
For $D_k$ and $x_k$ as defined in~\eqref{eq:DkViaDk-1} and~\eqref{eq:xk} and $k\ge 2$, it holds that
\begin{align}
    D_k &= D_{k-1} (1 + x_k) + \frac{x_k(x_k+1)}{x_k-1}.
\end{align}
\end{lemma}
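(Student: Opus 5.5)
Set $y := x_k - 1 = \sqrt{2/(1+D_{k-1})}$, which is positive and well defined because $D_{k-1} \ge 1 > -1$. The point of this substitution is that $1 + D_{k-1} = 2/y^2$, so the square root in the recursion~\eqref{eq:DkViaDk-1} becomes rational in $y$:
\[
\sqrt{8+8D_{k-1}} = \sqrt{8(1+D_{k-1})} = \sqrt{16/y^2} = 4/y .
\]
The plan is to rewrite both sides of the claimed identity in terms of $D_{k-1}$ and $y$, and then eliminate $D_{k-1}$ using $D_{k-1} = 2/y^2 - 1$.

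Start with the right-hand side. Since $x_k = 1+y$,
\[
D_{k-1}(1+x_k) + \frac{x_k(x_k+1)}{x_k-1} = D_{k-1}(2+y) + \frac{(1+y)(2+y)}{y} = 2D_{k-1} + yD_{k-1} + \frac{2}{y} + 3 + y .
\]
The left-hand side, by~\eqref{eq:DkViaDk-1} and the expression for the square root above, is
\[
D_k = 2D_{k-1} + \frac{4}{y} + 3 .
\]
Subtracting the common terms $2D_{k-1} + 3$, the identity reduces to $yD_{k-1} + y = 2/y$. This is equivalent to $y^2(1+D_{k-1}) = 2$, which holds by the definition of $y$.

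There is no real obstacle; the only step needing care is the simplification of the square root, which uses $y > 0$. As a sanity check, $k=2$ gives $D_1 = 1$, $y = 1$, $x_2 = 2$, and the right-hand side is $1\cdot 3 + 2\cdot 3/1 = 9 = D_2$.

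Alternatively, one can note that $f(x) = (x+1)D_{k-1} + x(x+1)/(x-1)$ is the function minimized in Section~\ref{sec:LB} and that $x_k$ is its critical point. At that point $(x_k-1)^2 = 2/(1+D_{k-1})$, and substituting back recovers the same identity.
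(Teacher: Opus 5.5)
Your proof is correct and follows essentially the same route as the paper: both directly substitute $x_k = 1+\sqrt{2/(1+D_{k-1})}$ and use $(x_k-1)(1+D_{k-1}) = \sqrt{2(1+D_{k-1})}$ to collapse the right-hand side to $2D_{k-1}+\sqrt{8(1+D_{k-1})}+3$. Your substitution $y = x_k-1$ merely organizes that computation so it reduces to the identity $y^2(1+D_{k-1})=2$.
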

\begin{proof}
\begin{align*}
    D_{k-1} (1 + x_k) + \frac{x_k(x_k+1)}{x_k-1}&= D_{k-1}\left(2+\sqrt{\frac{2}{1+D_{k-1}}}\right)+3+\sqrt{\frac{2}{1+D_{k-1}}} + \sqrt{2\left(1+D_{k-1}\right)}\\
    &= 2D_{k-1}+\sqrt{8\left(1+D_{k-1}\right)}+3 = D_k.\qedhere
\end{align*}
\end{proof}

\section{An approach that almost works}\label{sec:almost}
In this section, we use the lower bound as inspiration to derive an algorithm and a potential function which can \textit{almost} be used to prove $D_k$-competitiveness. We parametrize the evolving tree game by depth instead of width, as the depth does not exceed the width.

Note that the lower bound is tight on the constructed instance when the algorithm applies the following strategy: Out of the two subtrees of the root, the algorithm stays in the current subtree as long as its optimum is not more than $x_k$ times larger than the optimum of the other subtree. When this condition ceases to hold, the algorithm moves to an optimal leaf in the other subtree. To move within a subtree, the algorithm applies the recursive strategy for depth $k-1$.

Let us analyze the performance of this naive algorithm. Given the current state of the tree $T$ (which includes the location of $\ALG$), we want to devise a pseudo-cost potential function $\Phi(T)$ which we can use to upper bound the cost incurred so far. If we could also show that $\Phi(T) \leq D_k \OPT_T$, this would mean that our algorithm achieves precisely the competitive ratio $D_k$ for the depth-$k$ evolving tree game.

\subsection{Potential derivation}
To derive $\Phi(T)$, we follow the intuition from the lower bound construction. The edge $e(T)$ could have been spawned by playing the game with depth $k$ and then contracting everything into a single edge; during this time, $\ALG$ could have paid up to $D_k w_{e(T)}$. 

Next, we bound the cost for switching between $LT$ and $RT$. To simplify our reasoning, suppose that the adversary is only allowed to grow the leaf where the algorithm is currently located. Further suppose that this leaf first grows by the maximal amount for which $\ALG$ stays at this leaf, and then it grows by a very small (negligible) additional amount which triggers a switch to a different leaf. Therefore, $\ALG$ moves from $LT$ to $RT$ when $\ALG$ is in the optimal leaf in $LT$ and the adversary grows this leaf so that $\OPT_{LT} = x_k \OPT_{RT}$. The algorithm pays $\OPT_{LT} + \OPT_{RT}$ for switching, of which $\OPT_{LT}$ is for backtracking to the shared root of $LT$ and $RT$ and $\OPT_{RT}$ for reaching the optimum leaf in $RT$. Suppose $\ALG$ is currently located in $RT$. Then, $\OPT_{LT}$ stayed the same since the last switch, and we can deduce that the last switch cost $\OPT_{LT}(1 + 1/x_k)$, the previous switch cost $\OPT_{LT}(1/x_k + 1/{x_k}^2)$, and so on. The total switching cost can thus be bounded by 
\[
\sum_{i=0}^{\infty} \left(\frac{1}{{x_k}^i} + \frac{1}{{x_k}^{i+1}}\right) \OPT_{LT} = \frac{x_k+1}{x_k-1} \OPT_{LT}.
\]
Of course, we swap the roles of $LT$ and $RT$ if $\ALG$ is currently in $LT$ instead of $RT$. It remains to bound the cost paid while playing the game at depth $k-1$ in the two subtrees $LT$ and $RT$. To this end, we apply an inductive argument to bound this cost by $\Phi_{k-1}(LT)$ and $\Phi_{k-1}(RT)$, respectively.

Putting everything together, for a non-trivial subtree $S$ of level $i$, we define
\begin{equation}    
\Phi_i(S) = D_i  w_{e(S)} + \frac{x_i+1}{x_i-1} \OPTOTHER(S) + \Phi_{i-1}(LS) + \Phi_{i-1}(RS), 
\label{eq:simple-potential}
\end{equation}
where 
\begin{equation}
\OPTOTHER(S) =
\left\{
	\begin{array}{ll}
	   \OPT_{LS}  & \mbox{if } \ALG \text{ is in } RS, \\
          \OPT_{RS}  & \mbox{if } \ALG \text{ is in } LS, \\
          \OPT_{LS} \lor \OPT_{RS} & \mbox{if } \ALG \text{ is not in } S.
	\end{array}
\right.
\label{eq:opt-other}
\end{equation}
Note that, if $\ALG$ is not in $S$, we defined $\OPTOTHER(S) = \OPT_{LS} \lor \OPT_{RS}$ by following the intuition that the last visited leaf in $S$ was the optimal one.

If $S$ is trivial, we simply define $\Phi_i(S) = D_i  w_{e(S)}$. We may omit the subscript in $\Phi$ if it is clear from the context.

\subsection{Analysis sketch without deletion}

\paragraph{Bounding the potential.} Assuming that 
\begin{equation}
    \OPT_{LS} \lor \OPT_{RS} \leq x_i(\OPT_{LS} \land \OPT_{RS}),
    \label{eq:ratio-invariant}
\end{equation}
a property which is indeed maintained by $\ALG$'s response to growth operations, we can prove by a simple inductive argument that $\Phi_i(S) \leq D_i \OPT_S$ for all  subtrees $S$ at level $i$. In particular, this implies $\Phi_k(T) \leq D_k \OPT_T$, as desired.

To show that $\cost(\ALG) \leq \Phi_k(T)$, it suffices to prove that the cost paid on each operation is bounded by the increase in the potential after that operation.

\paragraph{Growth.} Suppose the adversary grows the leaf $l$ where $\ALG$ is located. By unfolding the recursive formula for $\Phi(T)$, it is easy to see that it contains a term of the form $D_i w_{e(l)}$, where $w_{e(l)}$ is the length of the edge adjacent to $l$. Therefore, if $\ALG$ stays in $l$, the term $\Phi(T)$ increases by at least the cost paid by $\ALG$. Otherwise, if $\ALG$ moves to a different leaf $l'$, and if we denote by $S$ the smallest subtree containing both $l$ and $l'$, then one can show that $\OPTOTHER(S)$ increases enough to pay for the movement cost. 

\paragraph{Fork.} Since the new edges created by a fork initially have length $0$, it is easy to see that a fork does not change $\Phi(T)$ or $\cost(\ALG)$.

\subsection{Why deletion causes problems}\label{sec:deletionProblems}
 When deleting a leaf $l$ together with its incident edge, the edges in the sibling subtree of $l$ get ``promoted'' to a higher level. Although the promotion helps increase the potential, it may not be enough to counteract the edge deletion, so $\Phi(T)$ may decrease, as it happens in the example in Fig. \ref{fig:deletion-example1}. The decrease of $\Phi(T)$ is problematic, as it may break the inequality $\cost(\ALG) \leq \Phi(T)$.
\begin{figure}[h]
\centering
\begin{subfigure}[ht]{0.49\textwidth}
\begin{tikzpicture}
\node[draw, inner sep=0pt] (x) at (0,0) {} 
    child {node[draw, inner sep=0pt] at (0, 1) {} 
        child {node {$l$} 
        edge from parent node [left] {$(a+b)x_3$}}
        child {node[draw, inner sep=0pt] (y) {} 
            child {node {$\ALG$} edge from parent node [left] {$a$}}
            child {node {} edge from parent node [right] {$a$}}
        edge from parent node [right] {$b$}}
    edge from parent node [left] {$0$}};
\node[draw, inner sep=0pt] at (3,0) {} 
    child {node[draw, inner sep=0pt] at (0,-0.6) {} 
        child {node {$\ALG$} 
        edge from parent node [left] {$a$}}
        child {node {} 
        edge from parent node [right] {$a$}}
    edge from parent node [left] {$b$}};

\draw[->, thick] (1, -1) -- (2, -1) node[midway, above, scale=0.8] {delete};
\end{tikzpicture}
\caption{Delete $l$ in $T_1$ to obtain $T'_1$.}
\label{fig:deletion-example1}
\end{subfigure}
\begin{subfigure}[ht]{0.49\textwidth}
\begin{tikzpicture}
\node[draw, inner sep=0pt] (x) at (0,0) {} 
    child {node[draw, inner sep=0pt] at (0, 1) {} 
        child {node {$l$} 
        edge from parent node [left] {$(a+b)x_3$}}
        child {node[draw, inner sep=0pt] (y) {} 
            child {node {$\ALG$} edge from parent node [left] {$a$}}
            child {node {} edge from parent node [right] {$ax_2$}}
        edge from parent node [right] {$b$}}
    edge from parent node [left] {$0$}};
\node[draw, inner sep=0pt] at (3,0) {} 
    child {node[draw, inner sep=0pt] at (0,-0.6) {} 
        child {node {$\ALG$} 
        edge from parent node [left] {$a$}}
        child {node {} 
        edge from parent node [right] {$ax_2$}}
    edge from parent node [left] {$b$}};
\draw[->, thick] (1, -1) -- (2, -1) node[midway, above, scale=0.8] {delete};
\end{tikzpicture}
\caption{Delete $l$ in $T_2$ to obtain $T'_2$.}
\label{fig:deletion-example2}
\end{subfigure}
\caption{Deletion examples. $\ALG$ represents the position of the algorithm and the labels next to the edges indicate weights.}
\label{fig:deletion-examples}
\end{figure}
\begin{align*}
\Phi(T_1) &= D_2 (a+b) x_3 + D_2 b + 2 D_1 a + \frac{x_3+1}{x_3-1} (a+b) x_3 + \frac{x_2+1}{x_2-1} a  \\
\Phi(T'_1) &= D_3 b + 2 D_2 a + \frac{x_3+1}{x_3-1} a \\
\Phi(T'_1) - \Phi(T_1) &= \left(2D_2 + \frac{x_3+1}{x_3-1} - D_2 x_3 - 2 D_1 - \frac{x_3(x_3+1)}{x_3-1} - \frac{x_2+1}{x_2-1}\right)a \approx  -2.4 a < 0.
\end{align*} 

On the other hand, the potential may increase too much after deletion, violating the inequality $\Phi(T) \leq D_k \OPT_T$. Recall that this inequality is satisfied if \eqref{eq:ratio-invariant} holds for all subtrees $S$ at level $i$.  However, if some $S$ is promoted from level $i$ to $i+1$, \eqref{eq:ratio-invariant} may no longer be satisfied, as $x_{i+1} < x_i$. This happens in the example in Fig. \ref{fig:deletion-example2}, where $\OPT_{RT'_2} > x_3 \OPT_{LT'_2}$ (because $ax_2 > ax_3$).

So what went wrong with our derivation of the potential? Note that we implicitly assumed that deletion always occurs in a leaf whose sibling is also a leaf, so only trivial subtrees get promoted. Indeed, the lower bound instance is restricted to this kind of ``easy'' deletions, and it can be proved that the potential is well-behaved on ``easy'' deletions. By contrast, in both examples in Fig. \ref{fig:deletion-examples} we delete a leaf whose sibling is a non-trivial subtree. Indeed, the core difficulty of the problem lies in handling deletions in the general case.

\section{Refining the algorithm}\label{sec:mainAlgo}

We now provide an algorithm and analysis that achieves competitive ratio $O(2^k)$, proving Theorem~\ref{th:upper-bound}. It is based on two refinements, \emph{forgetting} and \emph{imbalancing}, that make the analysis better aligned with the structure of hard instances. Conceptually, forgetting corresponds to \emph{temporal} canonicalization by limiting the effect of information that is not currently relevant, while imbalancing achieves \emph{geometric} canonicalization by reshaping the instance towards a worst-case form.

\paragraph{Forgetting.} \emph{Forgetting} stabilizes the potential by truncating certain terms, effectively redefining it as if some parts of the tree had not yet been revealed.
In the example in Fig.~\ref{fig:deletion-example2}, we could pretend that the edge in $RT'_2$ had length $a x_3$ instead of $a x_2$, obtaining the tree $T''$ from Fig.~\ref{fig:extremely-imbalanced-tree}, which satisfies $\Phi(T'') = D_3 \OPT_{T''}$. Intuitively, this corresponds to \emph{rewinding} part of the exploration.
One could extend the evolving tree game by introducing a new operation in which the algorithm could choose a subtree $S$ and some $d>0$, and forget everything in $S$ at distance more than $d$ from the root of $S$. However, this would make the algorithm too powerful, as repeated forgetting could prevent the adversary from growing the tree.
Instead, we simulate the same effect analytically by capping certain terms in the potential function, ensuring bounded growth after deletions.

\paragraph{Imbalancing.} \emph{Imbalancing} enforces a structural regularity that serves to better align an instance with typical worst-case scenarios.
It is based on the observation that imbalanced trees act as canonical hard instances, since for such trees the potential achieves its maximum value $\Phi_i(S) = D_i \OPT_S$.
To move toward this canonical form, we slightly distort some edge lengths to create a controlled imbalance.
In the example in Fig.~\ref{fig:deletion-example1}, distorting the rightmost edge in $T'_1$ by a factor of $x_3$ yields the tree $T''$ from Fig.~\ref{fig:extremely-imbalanced-tree}, satisfying $\Phi(T'') = D_3 \OPT_{T''} = D_3 \OPT_{T_1}$.
More generally, one can show that if a tree is very imbalanced, then it is an extreme tree (i.e., a tree $S$ for which the pseudo-cost $\Phi_i(S)$ attains the upper bound of $D_i \OPT_S$, reflecting that it is a hard instance).

To achieve a high imbalance, we can perform the following operation: \textit{scale} the length of an edge $e$ by a factor $f \geq 1$. If $T^0$ is the tree on which the game is played, we maintain a distorted tree $T$ which has the same vertices and edges as $T^0$, but different edge weights. When the adversary performs some operation on $T^0$, we perform the same operation on $T$. Additionally, we may perform imbalancing operations on $T$. Provided the overall distortion is not too large, this only loses a constant factor.

\begin{figure}[ht]
\centering
\begin{tikzpicture}
\node[draw, inner sep=0pt] {} 
    child {node[draw, inner sep=0pt] {} 
        child {node {$\ALG$} 
        edge from parent node [left] {$a$}}
        child {node {} 
        edge from parent node [right] {$ax_3$}}
    edge from parent node [left] {$b$}};
\end{tikzpicture}
\caption{Extreme tree $T''$. $\ALG$ represents the position of the algorithm and the labels next to the edges indicate weights.}
\label{fig:extremely-imbalanced-tree}
\end{figure}

\subsection{Algorithm description}
\label{sec:algo-description}
We are now ready to describe our algorithm. Recall that the tree $T^0$ on which the game is played is initialized to a have a single leaf, connected to the root by a zero-length edge. We initialize the distorted tree $T$ in the same manner. Since the depth $k$ of the instance is not known in advance, we initialize $k:=1$, and we increment $k$ when the depth of $T$ exceeds the current value of $k$. 

We denote by $w^0$ and $w$ the edge weights in $T^0$ and $T$, respectively. We denote by $S^0$ a subtree of $T^0$, and by $S$ the subtree corresponding to $S^0$ in $T$.

We will ensure that the following invariant always holds.

\paragraph{Ratio invariant.} Let $l_{\ALG}$ be the leaf where $\ALG$ is located, and let $S$ be a non-trivial subtree of $T$ of level $i$ such that $l_{\ALG} \in S$. Then, if $l_\ALG \in LS$ we have $\OPT_{LS} \leq x_i \OPT_{RS}$, and if $l_\ALG \in RS$ we have $\OPT_{RS} \leq x_i \OPT_{LS}$.

Based on the operation performed by the adversary on $T^0$, we describe our algorithm's response on $\TT$.

\paragraph{Growth.} When the adversary grows a leaf and the ratio invariant remains satisfied, $\ALG$ stays in its current position. Otherwise, let $S$ be the highest-level subtree for which the ratio invariant inequality is violated. Then $\ALG$ moves to the optimal leaf in $S$. Observe that the ratio invariant is now satisfied.

\paragraph{Deletion.}
When the adversary deletes a leaf $l$, we proceed as follows:
\begin{enumerate}
\item Pretend that the length of $e(l)$ grows to $\infty$, moving accordingly. 
\item Let $S$ be the parent subtree of $l$. If $\ALG$ is in $S$, $\ALG$ moves to an optimal leaf in $S$.
\item Delete $l$ and its incident edge.  
\item Transform $S$ into an extreme tree (see Section \ref{sec:extreme-imbalance}). 
\end{enumerate}

\paragraph{Fork.} If $\ALG$ is located in the leaf $l$ which is forked, $\ALG$ moves to any of the new leaves. Additionally, if the depth of the tree becomes larger than $k$, then we increment $k$, $\ALG$ moves to the optimal leaf, and we transform $\TT$ into an extreme tree.

\subsection{Potential}
\label{sec:updated-potential}

We now show how to adapt the potential to achieve the effect of ``forgetting'' parts of the tree.

\paragraph{Refined potential.} For a non-trivial subtree $S$ of $\TT$ of level $i$, define
\begin{equation}
\Phi_i(S) = D_i w_{e(S)} + \frac{x_i+1}{x_i-1} \overline{\OPTOTHER(S)} + \overline{\Phi_{i-1}(LS)} + \overline{\Phi_{i-1}(RS)},
\label{eq:updated-potential}
\end{equation}
where \begin{align}
\overline{\OPTOTHER(S)} &= \OPTOTHER(S) \land x_i (\OPT_{LS} \land \OPT_{RS}) \label{eq:opt-other-cap}\\
\overline{\Phi_{i-1}(XS)} &= \Phi_{i-1}(XS) \land D_{i-1}x_i \left(\OPT_{LS} \land \OPT_{RS}\right) \label{eq:phi-cap} \quad \text{for } X \in \{L, R\}.
\end{align}
 For a trivial subtree $S$, define $\Phi_i(S) = D_i w_{e(S)}$.

 Note that, as long as $\OPT_{LS} \lor \OPT_{RS} \leq x_i(\OPT_{LS} \land \OPT_{RS})$, by Claim \ref{claim:bounding-the-potential} below we have \[\overline{\OPTOTHER(S)} = \OPTOTHER(S), \quad \overline{\Phi_{i-1}(RS)} = \Phi_{i-1}(RS),  \quad \text{and} \quad \overline{\Phi_{i-1}(LS)} = \Phi_{i-1}(LS),\]
 so we recover the same recursive formula for $\Phi_i(S)$ as in the previous version of the potential in \eqref{eq:simple-potential}. Otherwise, if, say, $\OPT_{LS} > x_i\OPT_{RS}$, the refined potential caps some terms -- as if ``forgetting'' some part of $LS$ so that $\OPT_{LS}$ gets reduced to $x_i\OPT_{RS}$.

It is also important to note that the subtrees which contain $\ALG$'s location will not have their potential capped, as shown in the following claim.
 
\begin{claim}
    For all subtrees $S$ at level $i<k$ such that $\ALG$ is in $S$, we have $\overline{\Phi_i(S)} = \Phi_i(S)$.
    \label{claim:not-capped}
\end{claim}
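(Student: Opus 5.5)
Fix a subtree $S$ at level $i<k$ with $\ALG$ in $S$, and let $P$ be its parent subtree, at level $i+1$. By \eqref{eq:phi-cap}, $\overline{\Phi_i(S)} = \Phi_i(S) \land D_i x_{i+1}(\OPT_{LP}\land \OPT_{RP})$, so we must show
\[
\Phi_i(S) \le D_i x_{i+1}(\OPT_{LP}\land\OPT_{RP}).
\]
The plan has two steps: an upper bound on $\Phi_i(S)$ in terms of $\OPT_S$, followed by the ratio invariant applied at $P$.

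\textbf{Step 1: $\Phi_i(S)\le D_i\OPT_S$ for every subtree $S$ of level $i$.} We prove this by induction on the level, and it should hold even without the ratio invariant, because the caps enforce it. For trivial $S$, $\Phi_i(S)=D_i w_{e(S)}=D_i\OPT_S$. For non-trivial $S$, write $m=\OPT_{LS}\land\OPT_{RS}$. The caps give $\overline{\OPTOTHER(S)}\le x_i m$ and $\overline{\Phi_{i-1}(XS)}\le D_{i-1}x_i m$ for $X\in\{L,R\}$. These caps alone are too weak, since summing them would give $2D_{i-1}x_i m$. We therefore treat the two children asymmetrically. For the child $XS$ attaining the minimum $m$, use induction: $\overline{\Phi_{i-1}(XS)}\le\Phi_{i-1}(XS)\le D_{i-1}m$. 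For the other child, use the cap $D_{i-1}x_i m$. Then
\[
\Phi_i(S)\le D_i w_{e(S)} + \Bigl(\tfrac{x_i(x_i+1)}{x_i-1}+D_{i-1}(1+x_i)\Bigr)m = D_i w_{e(S)}+D_i m = D_i\OPT_S,
\]
using Lemma~\ref{lem:DkViaxk}. This is presumably the content of Claim~\ref{claim:bounding-the-potential}, which the paper cites.

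\textbf{Step 2: applying the ratio invariant at $P$.} Since $\ALG\in S\subseteq P$ and $P$ is non-trivial of level $i+1$, the ratio invariant gives $\OPT_S\le x_{i+1}\OPT_{S'}$, where $S'$ is the sibling of $S$. Hence $\OPT_S\le x_{i+1}(\OPT_S\land\OPT_{S'})$, because $x_{i+1}\ge1$. Combining with Step 1,
\[
\Phi_i(S)\le D_i\OPT_S\le D_ix_{i+1}(\OPT_{LP}\land\OPT_{RP}),
\]
so the cap is inactive and $\overline{\Phi_i(S)}=\Phi_i(S)$.

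\textbf{Main obstacle.} The delicate point is whether the ratio invariant actually holds at the moment the claim is used. The algorithm maintains it after each response, but the potential is also evaluated at intermediate states: after a growth and before the move, during a deletion step, or after $k$ is incremented and levels shift. I would state the claim for states in which the invariant holds, which is what the algorithm description guarantees after each of its responses. I would also check that the level-shift convention keeps $i+1\le k$, so that $x_{i+1}$ is defined, which holds because $i<k$.
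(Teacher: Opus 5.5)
Your proof is correct and follows the paper's argument. The paper likewise combines Claim~\ref{claim:bounding-the-potential}, $\Phi_i(S)\le D_i\OPT_S$, with the ratio invariant at the parent $P$ to obtain $\Phi_i(S)\le D_i x_{i+1}(\OPT_{LP}\land\OPT_{RP})$. You reproduce that claim's proof exactly (induction on the minimizing child, cap on the other). Your caveat that the statement relies on the ratio invariant holding in the state where it is applied matches the paper's implicit assumption.
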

\begin{proof}
    Let $P$ be the parent subtree of $S$ and suppose without loss of generality that $S = LP$. By Claim \ref{claim:bounding-the-potential} below, we have $\Phi_i(LP) \leq D_i \OPT_{LP}$, and by the ratio invariant we have $\OPT_{LP} \leq x_{i+1}\OPT_{RP}$. Thus, $\Phi_i(LP) \leq x_{i+1} D_i (\OPT_{LP} \land \OPT_{RP})$, so $\overline{\Phi_i(LP)} = \Phi_i(LP)$.
\end{proof}

\subsection{Analysis}
We will ensure that each edge in the distorted tree $\TT$ is scaled by a factor between $1$ and some constant $C$. Thus, $\OPT_{\TT} \leq C \cdot \OPT_{T^0}$. Moreover, when $\ALG$ moves from a leaf to another, we will charge the movement cost according to the distances in $\TT$ instead of the distances in $T^0$, which means that we are overestimating the cost paid. 

In the rest of this section we will prove that 
\[
\cost(\ALG) \leq \Phi(\TT) \leq D_k \OPT_{\TT}, 
\]
which together with the previous observations and the fact that $D_k = O(2^k)$ (see Appendix \ref{sec:bounding-D}) yields
\[\cost(\ALG) \leq C \cdot D_k \cdot \OPT_{T^0} = O(2^k) \cdot\OPT_{T^0}.\]

We denote by $\Delta \cost(\ALG)$ the cost paid in the current operation and by $\Delta \Phi(\TT)$ the increase of the potential in the current operation. To ensure that $\cost(\ALG) \leq \Phi(\TT)$, it suffices to show that $\Delta \cost(\ALG) \leq \Delta \Phi(\TT)$ holds for all operations. 

For two leaves $l, l' \in S$, we denote by $\dist_S(l, l')$ the distance between $l$ and $l'$ in subtree $S$. We denote by $l_{\OPT_S}$ the optimal leaf in $S$ (breaking ties arbitrarily).

\subsubsection{Bounding the potential}
\begin{claim}
For any subtree $S$ of level $i$, it holds that $\Phi_i(S) \leq D_i \OPT_S$.\label{claim:bounding-the-potential}
\end{claim}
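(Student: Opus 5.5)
We argue by induction on the level $i$ (equivalently, on the structure of $S$). The case of a trivial $S$ is immediate: $\Phi_i(S) = D_i w_{e(S)} = D_i \OPT_S$. For the inductive step, let $S$ be non-trivial of level $i \geq 2$. Write $m := \OPT_{LS} \land \OPT_{RS}$, so that $\OPT_S = w_{e(S)} + m$. Since the term $D_i w_{e(S)}$ appears on both sides, it suffices to show
\[
\frac{x_i+1}{x_i-1}\,\overline{\OPTOTHER(S)} + \overline{\Phi_{i-1}(LS)} + \overline{\Phi_{i-1}(RS)} \;\leq\; D_i\, m .
\]
By Lemma~\ref{lem:DkViaxk}, $D_i m = D_{i-1}(1+x_i)m + \frac{x_i(x_i+1)}{x_i-1} m$. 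We therefore split the right-hand side into three budgets and charge each of the three left-hand terms to one of them.

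\textbf{First budget (the $\OPTOTHER$ term).} By the cap~\eqref{eq:opt-other-cap}, $\overline{\OPTOTHER(S)} \leq x_i m$. Hence
\[
\frac{x_i+1}{x_i-1}\,\overline{\OPTOTHER(S)} \leq \frac{x_i(x_i+1)}{x_i-1}\, m,
\]
which exactly uses up the second summand of Lemma~\ref{lem:DkViaxk}.

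\textbf{Second budget (the two $\Phi_{i-1}$ terms).} We need $\overline{\Phi_{i-1}(LS)} + \overline{\Phi_{i-1}(RS)} \leq D_{i-1}(1+x_i)\, m$. Assume without loss of generality that $\OPT_{LS} = m$, so $LS$ is the optimal side. By the induction hypothesis applied to $LS$ at level $i-1$, we have $\overline{\Phi_{i-1}(LS)} \leq \Phi_{i-1}(LS) \leq D_{i-1}\OPT_{LS} = D_{i-1} m$. For $RS$ we do not use the induction hypothesis; instead the cap~\eqref{eq:phi-cap} gives directly $\overline{\Phi_{i-1}(RS)} \leq D_{i-1} x_i m$. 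Adding the two bounds gives $D_{i-1}(1+x_i) m$, as required.

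The only step needing care is making sure the caps are used in this asymmetric way. The induction hypothesis handles the optimal side, while the cap handles the suboptimal side, whose $\OPT$ may exceed $x_i m$ after a deletion or promotion. This is precisely what the caps were designed for, so no use of the ratio invariant is needed here. This matters because Claim~\ref{claim:not-capped} itself relies on the present claim, so we must avoid circularity. The induction should be phrased over subtrees by height, so that the hypothesis applies to $LS$ and $RS$ at level $i-1$. The base case $i=1$ only involves trivial subtrees, since a level-$1$ subtree has no children.
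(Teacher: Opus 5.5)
Your proof is correct and is essentially the paper's argument. Both use induction with trivial subtrees as the base case. On the optimal side you bound $\overline{\Phi_{i-1}(LS)}$ by the induction hypothesis, and you bound $\overline{\OPTOTHER(S)}$ and $\overline{\Phi_{i-1}(RS)}$ directly by their caps $x_i\,\OPT_{LS}$ and $D_{i-1}x_i\,\OPT_{LS}$. The identity $D_i = D_{i-1}(1+x_i) + \frac{x_i(x_i+1)}{x_i-1}$ then closes the sum, with no use of the ratio invariant, exactly as in the paper.
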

\begin{proof}
    By induction. For a trivial subtree $S$, we have $\Phi_i(S) = D_i w_{e(S)} = D_i \OPT_S$. For the inductive step, suppose without loss of generality that $\OPT_{LS} \leq \OPT_{RS}$, so $\OPT_S =  w_{e(S)} + \OPT_{LS}$. By the inductive hypothesis, $\Phi_{i-1}(LS) \leq D_{i-1} \OPT_{LS}$. Using this together with \eqref{eq:opt-other-cap} and \eqref{eq:phi-cap}, we get
    \begin{align*}
        \Phi_i(S) &= D_i w_{e(S)} + \frac{x_i+1}{x_i-1} \overline{\OPTOTHER(S)} + \overline{\Phi_{i-1}(LS)} + \overline{\Phi_{i-1}(RS)} \\ 
        &\leq D_i w_{e(S)} + \frac{x_i+1}{x_i-1} x_i \OPT_{LS} + D_{i-1} \OPT_{LS} + D_{i-1} x_i \OPT_{LS} \\ 
        &= D_i (w_{e(S)} + \OPT_{LS}) = D_i \OPT_S,
    \end{align*}
    where in the penultimate step we plugged in the formula for $D_i$.
\end{proof}

In the next sections, we show that $\Delta \cost(\ALG) \leq \Delta \Phi(\TT)$ holds for all possible operations and that the ratio invariant is maintained.

\subsubsection{Growth}
\label{sec:growth-analysis}
By the definition of the algorithm, one can see that the ratio invariant is maintained. It remains to show that the potential does not decrease after a growth operation. We first prove this in the following case.

\begin{claim}
Suppose that the adversary grows a leaf $l_g$ by $h$, and $\ALG$ is located at $l_g$ before the growth operation. Then, $\Delta \cost(\ALG) \leq \Delta \Phi(\TT)$.
\label{claim:growth-works-easy-case}
\end{claim}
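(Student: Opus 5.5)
The plan is to split the growth of $l_g$ by $h$ into a continuous process and track $\Phi(\TT)$ alongside it. This works because $\Phi$ is monotone under growth and the cost is paid only at a single threshold moment.

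\textbf{Monotonicity.} First I would prove by induction on the level that every $\Phi_i(S)$ and every $\overline{\Phi_i(S)}$ is nondecreasing in each edge weight, provided $\ALG$'s location is held fixed. The quantities $\OPT_{LS}$, $\OPT_{RS}$ and $\OPTOTHER(S)$ are nondecreasing in the weights. The caps in \eqref{eq:opt-other-cap} and \eqref{eq:phi-cap} are minima of nondecreasing quantities. So every term in \eqref{eq:updated-potential} is nondecreasing.

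\textbf{Case 1: $\ALG$ stays at $l_g$.} Here the ratio invariant still holds after the growth. Let $j$ be the level of $l_g$, viewed as a trivial subtree. Its term $D_j w_{e(l_g)}$ increases by $D_j h\ge h$, which equals the cost paid. Every subtree containing $l_g$ also contains $\ALG$, so by Claim~\ref{claim:not-capped} (using the invariant after growth) none of these is capped. Hence the increase propagates unchanged up to $\Phi_k(\TT)$. All other terms do not decrease by monotonicity, so $\Delta\Phi\ge h$.

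\textbf{Case 2: $\ALG$ moves.} Let $S$ be the highest-level violated subtree, of level $i$, and assume w.l.o.g.\ $l_g\in LS$. I would decompose the growth into three parts:
\begin{itemize}[nosep]
\item a growth of $h_1$ up to the moment $\OPT_{LS}=x_i\OPT_{RS}$, handled by Case~1;
\item the switch at that threshold;
\item the remaining growth $h-h_1$ of a leaf no longer occupied by $\ALG$, which costs nothing and does not decrease $\Phi$ by monotonicity.
\end{itemize}
Charging the move at the threshold instead of after full growth only underestimates cost relative to the algorithm; more precisely, the real cost is at most $h-h_1$ extra, and this is covered by the Case~1 computation applied to the full growth before the switch.

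\textbf{The switch.} Let $\ell$ be the distance from $l_g$ to the root of $LS$. The switch costs $\ell+\OPT_{RS}$ and changes three things.
\begin{itemize}
\item $\OPTOTHER(S)$ goes from $\OPT_{RS}$ to $\overline{\OPTOTHER(S)}=\OPT_{LS}\land x_i\OPT_{RS}=x_i\OPT_{RS}$. This gains $\frac{x_i+1}{x_i-1}(x_i-1)\OPT_{RS}=(x_i+1)\OPT_{RS}$, which pays for $\OPT_{RS}+\OPT_{LS}$.
\item For each subtree $S'\subseteq LS$ on the path to $l_g$, $\OPTOTHER(S')$ goes from the sibling's optimum to $\OPT_{LS'}\lor\OPT_{RS'}$. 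By the ratio invariant before the move, that cap is not binding. Each such term therefore gains at least $(\OPT_{\mathrm{own}}-\OPT_{\mathrm{sib}})^+$, with coefficient $\frac{x+1}{x-1}\ge1$.
\item Inside $RS$, $\ALG$ lands on an optimal leaf. Each subtree on that path then has $\OPTOTHER$ equal to the larger sibling optimum, exactly as when $\ALG$ was outside, so nothing changes there.
\end{itemize}
Telescoping along the path from the root of $LS$ to $l_g$ gives $\ell-\OPT_{LS}\le\sum(\OPT_{\mathrm{own}}-\OPT_{\mathrm{sib}})^+$, so these gains pay for the excess backtracking. The ancestors of $S$ still contain $\ALG$ and satisfy the invariant, since $S$ was the highest-level violated subtree; they remain uncapped and inherit the gain.

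\textbf{Main obstacle.} I expect the delicate point to be that $\overline{\Phi_{i-1}(LS)}$ might lose value now that $\ALG$ has left $LS$, since the cap $D_{i-1}x_i(\OPT_{LS}\land\OPT_{RS})$ may now bind. At the threshold, Claim~\ref{claim:bounding-the-potential} gives $\Phi_{i-1}(LS)\le D_{i-1}\OPT_{LS}=D_{i-1}x_i\OPT_{RS}$, so the cap does not reduce it. Similarly, the change inside $LS$ only raises $\OPTOTHER$ terms, and each raise stays within its own cap. Verifying this cap bookkeeping carefully, level by level, is the step I would spend the most care on.
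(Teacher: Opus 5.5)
Your route is sound in substance but differs from the paper's, and one justification step needs repair (see the second paragraph).

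\textbf{Comparison with the paper.} The paper makes a single before/after comparison. Because a move can only be triggered when $l_g$ is the unique optimal leaf of $LS$, the movement cost is exactly $\OPT_{LS}+\OPT_{RS}$ (pre-growth values). Since $\OPT_{LS'}>x_i\OPT_{RS}$, the capped term $\overline{\OPTOTHER(S)}$ jumps from at most $\OPT_{RS}$ to exactly $x_i\OPT_{RS}$. The pre-growth slack $\OPT_{LS}\le x_i\OPT_{RS}$ then gives $(x_i+1)\OPT_{RS}\ge\OPT_{LS}+\OPT_{RS}$, and all other terms are dismissed with ``no term decreases''. You instead stop the growth at the threshold $\OPT_{LS}=x_i\OPT_{RS}$, where this exchange is exactly tight, and handle the residual growth by a separate monotonicity lemma. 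For this to be valid, state explicitly that $\Phi(\TT)$ is a function of the state (weights, levels, $\ALG$'s leaf). Your three-phase process has the same initial and final states as the real one, so the potential changes telescope. Your version shows clearly why the threshold is the right switching point. It also makes explicit a step the paper only asserts: $\overline{\Phi_{i-1}(LS)}$ is not cut by its cap when $\ALG$ leaves $LS$, and your appeal to Claim~\ref{claim:bounding-the-potential} there is exactly right. The price is that you must certify the intermediate state.

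\textbf{What needs fixing.} First, your cost accounting runs the wrong way. In this game, when $l_g$ grows, $\ALG$ is left at the old position of $l_g$. So the real cost is $\ell_0+\OPT_{RS}$, with $\ell_0$ the pre-growth distance from $l_g$ to the root of $LS$. Your process charges $h_1+(\ell_0+h_1+\OPT_{RS})$, so it \emph{over}charges by $2h_1$, and there is no extra $h-h_1$ to cover. That is fortunate, because the cover you propose would fail. You cannot run the Case~1 computation on the full growth $h$ with $\ALG$ at $l_g$: afterwards the invariant is violated at $S$, Claim~\ref{claim:not-capped} is unavailable, and $\overline{\Phi_{i-1}(LS)}$ may be capped at $D_{i-1}x_i\OPT_{RS}$.

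Second, ``handled by Case~1'' needs justification. $\OPT_{LS}$ can only cross $x_i\OPT_{RS}$ if $l_g$ stays the unique optimal leaf of $LS$ for all growth amounts up to $h_1$. Hence every subtree strictly between the root of $LS$ and $l_g$ has $\ALG$ in its strictly better child. Ancestors of $S$ stay unviolated by monotonicity and maximality of $S$.

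The same observation gives $\ell=\OPT_{LS}$ at the threshold. So the telescoping along the path in $LS$, and the cap bookkeeping you flag as the main obstacle, are not needed here. Your remark that the cap on each $\OPTOTHER$ along that path is non-binding is false when the own child is better, though the gain is then $0$ anyway. If the backtracking excess were positive, as in Claim~\ref{claim:growth-works}, pushing those gains through the cap on $\overline{\Phi_{i-1}(LS)}$ would need the stronger Claim~\ref{claim:bounding-the-potential2}. That is exactly what Claim~\ref{claim:phi-bar-increases-enough} does.
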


\begin{proof}
    If the algorithm remains at $l_g$ after the growth, since $w_{e(l_g)}$ increases by $h$, we have $\Delta \Phi(S_{l_{g}}) = D_i h$, where $S_{l_{g}}$ is the trivial subtree containing $l_g$ and $i$ is its level. By repeatedly applying Claim \ref{claim:not-capped}, we obtain \[\Delta \Phi({T}) \geq D_i h \geq h = \Delta \cost(\ALG).\]

    It remains to consider the case when $\ALG$ moves from its initial position $l_{g}$ to another leaf. Let $S$ be the highest-level subtree before the growth for which the ratio invariant inequality in the corresponding subtree $S'$ after growth is violated, and let $i$ be the level of $S$ and $S'$. Suppose without loss of generality that $\ALG$ is currently at $l_{g} \in LS$ and moves to an optimal leaf in $RS'=RS$. Note that this movement can only be triggered when $l_g$ is the (unique) optimum leaf in $LS$, and therefore the movement cost is $\OPT_{LS} + \OPT_{RS}$.

    Since the ratio invariant is satisfied before the growth but violated after the growth, we have 
\begin{align}
    \OPT_{LS} &\leq x_i \OPT_{RS} \label{eq:g1a}
    \\
    \OPT_{LS'} &> x_i \OPT_{RS}. \label{eq:g2a}
\end{align}

Therefore, before the growth we have \[\overline{\OPTOTHER(S)} \leq \OPTOTHER(S) = \OPT_{RS},\] and after the growth, by \eqref{eq:g2a}, we have \[\overline{\OPTOTHER(S')} = \OPT_{LS'} \land (x_i \OPT_{RS}) = x_i \OPT_{RS}.\] Therefore, $\overline{\OPTOTHER(S)}$ increases by at least $(x_i-1) \OPT_{RS}$. Notice that no term in $\Phi(\TT)$ decreases after the growth, so by repeatedly applying Claim \ref{claim:not-capped}, we have 
\begin{align*}
\Delta \Phi(\TT) 
\geq \Delta \Phi(S)  &\geq \frac{x_i+1}{x_i-1}(x_i-1) \OPT_{RS} \\
&= (x_i+1) \OPT_{RS}
\\
& \stackrel{\eqref{eq:g1a}}{\geq} \OPT_{RS} + \OPT_{LS} = \Delta \cost(\ALG).\qedhere
\end{align*}
\end{proof}

The following claim provides a stronger bound on the potential, and will be used repeatedly in our analysis.
\begin{claim}\label{claim:bounding-the-potential2}
Let $l_{\ALG}$ be the leaf where $\ALG$ is located, let $S$ be a subtree of level $i$ such that $l_\ALG \in S$, and let $l_{\OPT_S}$ be an optimal leaf in $S$. Then, 
\[\Phi_i(S) + \dist_S(l_\ALG, l_{\OPT_S}) \leq D_i \OPT_S.\]
\end{claim}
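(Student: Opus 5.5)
We argue by induction on the structure of $S$, along the lines of Claim~\ref{claim:bounding-the-potential}. The new ingredients are the inductive hypothesis applied to the child subtree containing $\ALG$, and the ratio invariant, which holds because $l_\ALG\in S$. For a trivial subtree $S$ we have $l_\ALG=l_{\OPT_S}$, so the distance is $0$ and the claim reduces to $\Phi_i(S)=D_iw_{e(S)}=D_i\OPT_S$.

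For the inductive step, let $S$ be non-trivial and assume without loss of generality that $l_\ALG\in LS$. Then $\OPTOTHER(S)=\OPT_{RS}$, and the ratio invariant gives $\OPT_{LS}\le x_i\OPT_{RS}$. Two facts are used throughout. First, each capped term is at most its uncapped version. Second, the inductive hypothesis applies to $LS$, which has level $i-1$ and contains $l_\ALG$:
\[
\Phi_{i-1}(LS)+\dist_{LS}(l_\ALG,l_{\OPT_{LS}})\le D_{i-1}\OPT_{LS}.
\]
We split into two cases according to where $l_{\OPT_S}$ lies.

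\emph{Case 1: $l_{\OPT_S}\in LS$.} Then $\OPT_{LS}\le\OPT_{RS}$ and we may take $l_{\OPT_S}=l_{\OPT_{LS}}$, so $\dist_S(l_\ALG,l_{\OPT_S})=\dist_{LS}(l_\ALG,l_{\OPT_{LS}})$. We bound the terms of $\Phi_i(S)$ as follows:
\begin{itemize}
\item $\overline{\OPTOTHER(S)}\le x_i\OPT_{LS}$, by the cap;
\item $\overline{\Phi_{i-1}(RS)}\le D_{i-1}x_i\OPT_{LS}$, by the cap;
\item $\overline{\Phi_{i-1}(LS)}+\dist\le D_{i-1}\OPT_{LS}$, by induction.
\end{itemize}
Summing, we obtain
\[
D_iw_{e(S)}+\Big(\tfrac{x_i(x_i+1)}{x_i-1}+D_{i-1}(1+x_i)\Big)\OPT_{LS}=D_i\OPT_S,
\]
by Lemma~\ref{lem:DkViaxk}. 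This is exactly the computation in Claim~\ref{claim:bounding-the-potential}.

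\emph{Case 2: $l_{\OPT_S}\in RS$.} Then $\OPT_{RS}\le\OPT_{LS}$ and $\OPT_S=w_{e(S)}+\OPT_{RS}$. The path from $l_\ALG$ to $l_{\OPT_S}$ goes up to the root of $LS$ and then down to $l_{\OPT_{RS}}$. By the triangle inequality through $l_{\OPT_{LS}}$,
\[
\dist_S(l_\ALG,l_{\OPT_S})\le \dist_{LS}(l_\ALG,l_{\OPT_{LS}})+\OPT_{LS}+\OPT_{RS}.
\]
We bound the terms as follows:
\begin{itemize}
\item $\overline{\OPTOTHER(S)}\le\OPT_{RS}$;
\item $\overline{\Phi_{i-1}(RS)}\le D_{i-1}\OPT_{RS}$, by Claim~\ref{claim:bounding-the-potential};
\item $\overline{\Phi_{i-1}(LS)}+\dist_{LS}(l_\ALG,l_{\OPT_{LS}})\le D_{i-1}\OPT_{LS}$, by induction.
\end{itemize}
The total is therefore at most
\[
D_iw_{e(S)}+\tfrac{x_i+1}{x_i-1}\OPT_{RS}+(D_{i-1}+1)\OPT_{LS}+(D_{i-1}+1)\OPT_{RS}.
\]
Now apply the ratio invariant $\OPT_{LS}\le x_i\OPT_{RS}$, which is the crucial step in this case. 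The bound becomes
\[
D_iw_{e(S)}+\Big(\tfrac{x_i+1}{x_i-1}+x_i+1+D_{i-1}(1+x_i)\Big)\OPT_{RS}.
\]
Since $\tfrac{x_i+1}{x_i-1}+(x_i+1)=\tfrac{x_i(x_i+1)}{x_i-1}$, Lemma~\ref{lem:DkViaxk} turns this into exactly $D_i\OPT_S$. Ties between $\OPT_{LS}$ and $\OPT_{RS}$ fall under whichever case contains the chosen $l_{\OPT_S}$, and both arguments remain valid.

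The main obstacle is Case 2. There the extra travel $\OPT_{LS}+\OPT_{RS}$ must be absorbed with no slack. It works only because the ratio invariant gives $\OPT_{LS}\le x_i\OPT_{RS}$, and because the coefficient $\tfrac{x_i+1}{x_i-1}$ of $\OPTOTHER$ falls short of $\tfrac{x_i(x_i+1)}{x_i-1}$ by exactly $x_i+1$. This step also shows why the hypothesis $l_\ALG\in S$ is needed: without it, the ratio invariant could not be invoked.
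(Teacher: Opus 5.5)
Your proof is correct, but it takes a different route from the paper's.

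\textbf{The paper's route.} The paper argues dynamically. It imagines repeatedly growing the leaf occupied by $\ALG$ until $\ALG$ arrives at $l_{\OPT_S}$. By Claim~\ref{claim:growth-works-easy-case}, the potential rises by at least the movement cost, which is at least $\dist_S(l_\ALG,l_{\OPT_S})$. Claim~\ref{claim:bounding-the-potential} then still caps the final potential at $D_i\OPT_S$.

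\textbf{Your route.} You give a static structural induction on the current tree. It uses only the ratio invariant at $S$, Claim~\ref{claim:bounding-the-potential} for the sibling $RS$, and Lemma~\ref{lem:DkViaxk}. Your Case~2 is essentially the inequality $(x_i+1)\OPT_{RS}\ge\OPT_{LS}+\OPT_{RS}$ that drives the moving case of Claim~\ref{claim:growth-works-easy-case}. You apply it directly rather than through a hypothetical sequence of growth operations.

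\textbf{What each buys.} The paper's version is shorter. It conveys the intuition that the slack $D_i\OPT_S-\Phi_i(S)$ is budget already reserved for travelling to the optimum. However, it leaves some points implicit: that the simulated growths never change $\OPT_S$ or push $\ALG$ out of $S$, and that the resulting increase of $\Phi(\TT)$ can be attributed to $\Phi_i(S)$. Your argument needs none of the growth analysis and is fully explicit. It shows the inequality holds for any tree state satisfying the ratio invariant, and it pinpoints exactly where that invariant is used.
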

\begin{proof}
    Imagine repeatedly growing the leaf where $\ALG$ is located until eventually $\ALG$ moves to $l_{\OPT_S}$. By Claim \ref{claim:growth-works-easy-case}, by the time $\ALG$ reaches $l_{\OPT_S}$, $\Phi(T)$ would have increased by at least the movement cost paid by $\ALG$, which is at least $\dist_S(l_\ALG, l_{\OPT_S})$ by the triangle inequality. Moreover, the potential at the end of this process would still be bounded by $D_i \OPT_S$ by Claim \ref{claim:bounding-the-potential}, which concludes the proof.
\end{proof}

We next prove the following auxiliary claim, which will be used to show that the potential does not decrease after any growth operation.
\begin{claim}
Suppose that in response to a growth operation $\ALG$ moves from its initial position $l_{\ALG} \in LS$ to an optimal leaf in $RS$. Let $P$ be a subtree of $LS$ such that $l_{\ALG} \in P$. Then, the change in potential after the growth and $\ALG$'s movement satisfies
    \[
    \Delta \overline{\Phi(P)} \geq A_P - \OPT_P,
    \]
    where $A_P = \dist_P(\rooot_P, l_{\ALG})$ and $\rooot_P$ denotes the root of $P$.
    \label{claim:phi-bar-increases-enough}
\end{claim}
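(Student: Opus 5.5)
The proof is by induction on the level of $P$, going upward from the trivial subtree containing $l_{\ALG}$. Throughout, $A_P$ and $\OPT_P$ denote the values before the operation. Two monotonicity facts drive the argument. First, growth only increases edge weights, so every $\OPT$ value, and hence every cap in \eqref{eq:opt-other-cap} and \eqref{eq:phi-cap}, is non-decreasing. Second, by Claim~\ref{claim:not-capped}, before the operation $\overline{\Phi(P)} = \Phi(P)$ for every $P$ containing $l_{\ALG}$.

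\textbf{Handling the outer cap.} Let $Q$ be the parent subtree of $P$ and $i$ the level of $P$. I would first argue that the outer cap $D_i x_{i+1}(\OPT_{LQ}\land\OPT_{RQ})$ can never prevent the bound. Its value after the operation is at least its value before, which is at least $D_i\OPT_P$: this follows from the ratio invariant $\OPT_P \le x_{i+1}\OPT_{\text{sibling}}$, which held before the operation since $l_{\ALG}\in P$. By Claim~\ref{claim:bounding-the-potential2},
\[
D_i\OPT_P \ \ge\ \Phi_i(P) + \dist_P(l_{\ALG}, l_{\OPT_P}) \ \ge\ \overline{\Phi(P)}_{\text{before}} + A_P - \OPT_P,
\]
where the last step uses the triangle inequality through $\rooot_P$. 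So if the cap binds after the operation, the increase is already at least $A_P-\OPT_P$. Otherwise $\Delta\overline{\Phi(P)} \ge \Delta\Phi_i(P)$, and it suffices to bound $\Delta\Phi_i(P)$.

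\textbf{Base case.} If $P$ is trivial, then $A_P=\OPT_P=w_{e(P)}$. The term $D_iw_{e(P)}$ does not decrease, so $\Delta\Phi_i(P)\ge 0 = A_P-\OPT_P$.

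\textbf{Inductive step.} Suppose $P$ is non-trivial and, without loss of generality, $l_{\ALG}\in LP$. I would compare the four terms of $\Phi_i(P)$ one by one.
\begin{itemize}
\item The term $D_iw_{e(P)}$ does not decrease.
\item The term $\overline{\Phi(RP)}$ does not decrease. $RP$ does not contain the grown leaf (which lies on $\ALG$'s side, since only growth in $LS$ can violate the invariant for $S$), and its cap is non-decreasing.
\item By induction, $\Delta\overline{\Phi(LP)}\ge A_{LP}-\OPT_{LP}$.
\item For the $\OPTOTHER$ term: before the operation $\overline{\OPTOTHER(P)}\le\OPT_{RP}$. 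Afterwards $\ALG$ is outside $P$, so the term becomes $(\OPT'_{LP}\lor\OPT'_{RP})\land x_i(\OPT'_{LP}\land\OPT'_{RP})$. Using $\OPT_{LP}\le x_i\OPT_{RP}$ (the ratio invariant before the operation) and monotonicity of the $\OPT$ values, this is at least $\OPT_{LP}\lor\OPT_{RP}$. Hence the term increases by at least $(\OPT_{LP}-\OPT_{RP})^+$.
\end{itemize}

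Since $A_P=w_{e(P)}+A_{LP}$, the inequality to prove becomes
\[
A_{LP}-\OPT_{LP} + \frac{x_i+1}{x_i-1}\bigl(\OPT_{LP}-\OPT_{RP}\bigr)^+ \ \ge\ A_{LP}+w_{e(P)}-\OPT_P .
\]
If $\OPT_P=w_{e(P)}+\OPT_{LP}$, this holds trivially. Otherwise $\OPT_P=w_{e(P)}+\OPT_{RP}$ with $\OPT_{RP}<\OPT_{LP}$, and the requirement is $\frac{x_i+1}{x_i-1}(\OPT_{LP}-\OPT_{RP})\ge\OPT_{LP}-\OPT_{RP}$. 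This holds because $\frac{x_i+1}{x_i-1}\ge1$.

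\textbf{Main obstacle.} I expect the delicate point to be the caps. One must make sure that neither the inner caps on $\OPTOTHER(P)$ and $\Phi(LP)$, nor the outer cap on $P$ imposed by its parent, swallows the gain. The plan handles this through monotonicity of the caps together with the ratio invariant before the operation, plus the Claim~\ref{claim:bounding-the-potential2} argument in the case where the outer cap binds. I would double-check that the inductive hypothesis is applied to the capped quantity $\overline{\Phi(LP)}$ with $P$'s own cap, which is exactly what the statement provides.
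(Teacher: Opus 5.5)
Your proof follows the paper's argument essentially step for step. It uses the same induction, the same reduction from $\overline{\Phi(P)}$ to $\Phi(P)$ (via Claim~\ref{claim:bounding-the-potential2}, the triangle inequality through $\rooot_P$, and the ratio invariant at the parent), and the same use of the $\OPTOTHER$ term with coefficient $\frac{x_i+1}{x_i-1}>1$ when $l_{\OPT_P}\in RP$. However, two of your justifications are wrong as written, even though the facts they are meant to support are true.

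\emph{The bullet on $\overline{\Phi(RP)}$.} It is false that $RP$ cannot contain the grown leaf. In general $l_g\neq l_{\ALG}$; this is exactly the case in which Claim~\ref{claim:growth-works} invokes the statement. There $l_g$ is the unique optimal leaf of $LS$, so at the subtree $P$ where the paths to $l_{\ALG}$ and $l_g$ split, you have $l_g\in RP$. Knowing $l_g\in LS$ does not rule this out, since $RP\subseteq LS$. The correct reason $\overline{\Phi(RP)}$ does not decrease is different. $\ALG$ lies outside $RP$ both before and after the operation, so every $\OPTOTHER$ inside $RP$ is a maximum of $\OPT$ values. Hence every term and every cap in $\overline{\Phi(RP)}$ is non-decreasing under growth, wherever the grown leaf is.

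\emph{The $\OPTOTHER$ bullet.} The after-value of $\overline{\OPTOTHER(P)}$ is at least $\OPT_{LP}\lor\OPT_{RP}$ only when $\OPT_{LP}\ge\OPT_{RP}$. The ratio invariant gives $\OPT_{LP}\le x_i\OPT_{RP}$, but not $\OPT_{RP}\le x_i\OPT_{LP}$. If $\OPT_{RP}>x_i\OPT_{LP}$ (e.g.\ the adversary grew $RP$ while $\ALG$ sat in $LP$, and $l_g\notin P$), the after-value is $x_i\OPT_{LP}<\OPT_{RP}$. In that case you only need the term not to decrease. This holds because the before-value is $\OPT_{RP}\land x_i(\OPT_{LP}\land\OPT_{RP})$, and each argument of this minimum is dominated by the corresponding argument $\OPT'_{LP}\lor\OPT'_{RP}$, respectively $x_i(\OPT'_{LP}\land\OPT'_{RP})$, after the operation. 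The paper avoids this issue by computing the $\OPTOTHER$ gain only in the case $l_{\OPT_P}\in RP$.

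With these two one-line repairs, your proof is complete.
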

\begin{proof}
    By induction. If $P$ is a trivial subtree, then $A_P - \OPT_P = 0 \leq \Delta \overline{\Phi(P)}$ and we are done. Thus, suppose $P$ is a non-trivial subtree of level $i$, and let $P'$ denote the subtree $P$ after the growth. We begin by showing that it suffices to prove that $\Delta \Phi(P)\geq A_P - \OPT_P$.  By Claim~\ref{claim:bounding-the-potential2}, we have 
    \begin{equation}
    \overline{\Phi(P)} \leq \Phi(P) \leq D_i \OPT_P - \dist(l_{\ALG}, l_{\OPT_P}).
    \label{eq:1000}
    \end{equation}

    By the triangle inequality, 
    \begin{align*}
     \dist(l_{\ALG}, l_{\OPT_P}) &\ge \dist(\rooot_P, l_{\ALG}) - \dist(\rooot_P, l_{\OPT_P})\\
     &= A_P - \OPT_P.
    \end{align*}

    Plugging this into \eqref{eq:1000}, we get 
    \begin{equation}
    \overline{\Phi(P)} \leq \Phi(P) \leq D_i \OPT_P - A_P + \OPT_P.
    \label{eq:1001}
    \end{equation}

    Let $Q$ be the sibling of $P$, and $Q'$ be the subtree $Q$ after the growth. 
    Since the ratio invariant was satisfied before the growth, we have
    \[x_{i+1} \OPT_{Q'} \geq x_{i+1} \OPT_Q \geq \OPT_P, \]
    and substituting into \eqref{eq:phi-cap} yields \[
    \overline{\Phi(P')} \geq \Phi(P') \land D_i\OPT_P.
    \]
    Combining this with \eqref{eq:1001}, we get 
    \[
    \Delta \overline{\Phi(P)} \geq (A_P - \OPT_P) \land \Delta \Phi(P).
    \]
    Therefore, it suffices to show that $\Delta \Phi(P)\geq A_P - \OPT_P$. 
    Suppose without loss of generality that $l_{\ALG} \in LP$. By the inductive hypothesis, 
    \begin{equation}
        \Delta \overline{\Phi(LP)} \geq A_{LP} - \OPT_{LP}.
    \label{eq:1002}
    \end{equation}
    If $\OPT_P$ is in $LP$, then $A_P - \OPT_P = A_{LP} - \OPT_{LP}$, and we can conclude by substituting this into \eqref{eq:1002}.
    Therefore, suppose that $l_{\OPT_P}$ is in $RP$. Then, 
    \begin{equation}
    A_P - \OPT_P = (A_{LP} - \OPT_{LP}) + (\OPT_{LP} - \OPT_{RP}).
    \label{eq:1003}
    \end{equation}

    The first term is bounded by \eqref{eq:1002}, and we will show that the second term is bounded by $\Delta  \overline{\OPTOTHER(P)}$. To this end, note that
    \[
    \overline{\OPTOTHER(P)} \leq \OPTOTHER(P) = \OPT_{RP}
    \]
    and 
    \[
    \overline{\OPTOTHER(P')} = (\OPT_{LP'} \lor \OPT_{RP'}) \land (x_i (\OPT_{LP'} \land \OPT_{RP'})) \geq \OPT_{LP},
    \]
    since $\OPT_{LP'} \geq \OPT_{LP}$ and \[x_i \OPT_{RP'} \geq x_i \OPT_{RP} \geq \OPT_{LP}\] by the ratio invariant. Thus, 
    \begin{equation}
    \Delta  \overline{\OPTOTHER(P)} \geq \OPT_{LP} - \OPT_{RP}. 
    \label{eq:1004}
    \end{equation}

    Combining \eqref{eq:1004}, \eqref{eq:1003}, \eqref{eq:1002}, and \eqref{eq:updated-potential}, and noting that $\overline{\OPTOTHER(P)}$ has coefficient $\frac{x_i+1}{x_i-1} > 1$, we get
    \[
    \Delta \Phi(P) \geq A_P - \OPT_P,
    \]
    as desired.    
\end{proof}

We are now ready to prove that potential does not decrease after any growth operation. 
\begin{claim}
For any growth operation, we have $\Delta \cost(\ALG) \leq \Delta \Phi(\TT)$.
\label{claim:growth-works}
\end{claim}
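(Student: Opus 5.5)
We split into two cases according to whether $\ALG$ is located at the grown leaf $l_g$ before the operation. If it is, Claim~\ref{claim:growth-works-easy-case} already gives $\Delta \cost(\ALG) \leq \Delta \Phi(\TT)$. So the remaining case is when the adversary grows a leaf $l_g \neq l_\ALG$.

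In this case we first argue that the ratio invariant remains satisfied, so $\ALG$ does not move and $\Delta\cost(\ALG)=0$. The invariant only constrains subtrees $S$ containing $l_\ALG$, and specifically requires that the child subtree containing $\ALG$ has optimum at most $x_i$ times the sibling's optimum. Growing $l_g$ can only increase $\OPT$ of subtrees containing $l_g$. For any $S$ containing both leaves, $l_g$ lies either in the sibling of $\ALG$'s child, whose optimum only increases and helps the inequality, or in the same child. In the latter case we recurse to a smaller subtree, since the relevant split for that $S$ is at a deeper node. Hence no inequality is violated and $\ALG$ stays put. (If $\ALG$ does move, the only way is via a violated subtree containing $l_\ALG$ and $l_g$ on the $\ALG$ side, which we just ruled out.)

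It then remains to show $\Delta \Phi(\TT) \geq 0$, i.e. that no term decreases. We check this by induction over the recursive definition \eqref{eq:updated-potential}. The weight terms $D_i w_{e(S)}$ only increase. Each $\OPT$ value is monotone nondecreasing under growth. The quantity $\OPTOTHER(S)$ is a max or a fixed selection of $\OPT$'s, and since $\ALG$ does not move, the case in \eqref{eq:opt-other} does not change; hence $\OPTOTHER(S)$ is nondecreasing. The caps $x_i(\OPT_{LS}\land\OPT_{RS})$ and $D_{i-1}x_i(\OPT_{LS}\land\OPT_{RS})$ are nondecreasing, and a minimum of nondecreasing quantities is nondecreasing. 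Therefore $\overline{\OPTOTHER(S)}$ and $\overline{\Phi_{i-1}(XS)}$ are nondecreasing, given inductively that $\Phi_{i-1}(XS)$ is. This yields $\Delta\Phi(\TT)\ge 0=\Delta\cost(\ALG)$.

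The main obstacle is making sure the easy case really covers every scenario in which $\ALG$ moves. If the paper's Claim~\ref{claim:growth-works-easy-case} is meant only for $l_g=l_\ALG$, we must rule out movement otherwise, which is the invariant argument above. Should some configuration allow a move anyway, the fallback is Claim~\ref{claim:phi-bar-increases-enough}. We would apply it to the child $P$ of $LS$ containing $l_\ALG$ together with the $\overline{\OPTOTHER(S)}$ increase, as in the proof of Claim~\ref{claim:growth-works-easy-case}, so that $(x_i+1)\OPT_{RS}$ plus $A_P-\OPT_P$ terms pays for the movement $\dist(l_\ALG,\rooot_{LS})+\OPT_{RS}$.
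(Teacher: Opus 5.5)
\textbf{There is a genuine gap: the move case with $l_g \neq l_\ALG$ does occur, and it is the actual content of the claim.} Your key step says that growing a leaf $l_g\neq l_\ALG$ never violates the ratio invariant, so $\ALG$ stays put. This is false.

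Take a subtree $S$ of level $i$ with $l_\ALG, l_g \in LS$. The invariant at $S$ is $\OPT_{LS}\le x_i\OPT_{RS}$, and $\OPT_{LS}$ increases whenever $l_g$ is the optimal leaf of $LS$. You cannot ``recurse to a smaller subtree'', because the constraint at $S$ still involves the child that contains both leaves. The lower-level invariants also allow $l_\ALG$ to be a non-optimal leaf of $LS$, since they only bound ratios by $x_{i-1},x_{i-2},\dots$, and they only get easier when $l_g$ grows.

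Concretely, take $k=i=3$. Let $LS$ have zero stem and two leaves at distances $1.2$ ($l_\ALG$) and $1$ ($l_g$) from its root, and let $RS$ be a single leaf with $x_3\OPT_{RS}=1.1$. All invariants hold, since $1\le 1.1$ and $1.2\le x_2\cdot 1=2$. Growing $l_g$ by $0.15$ gives $\OPT_{LS}=1.15>x_3\OPT_{RS}$, so $\ALG$ must jump to $RS$. It pays $A_{LS}+\OPT_{RS}>0$, where $A_{LS}=\dist(l_\ALG,\rooot_{LS})$. Your monotonicity argument is fine when $\ALG$ does not move, but here it only yields $\Delta\Phi(\TT)\ge 0$, which is not enough.

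\textbf{How the paper handles this case.} It treats the move as the main case. Movement forces $l_g$ to be the unique optimal leaf of $LS$. The paper then applies Claim~\ref{claim:phi-bar-increases-enough} with $P=LS$ itself (allowed, since $LS$ is a subtree of itself) to get $\Delta\overline{\Phi(LS)}\ge A_{LS}-\OPT_{LS}$. Next, $\overline{\OPTOTHER(S)}$ goes from at most $\OPT_{RS}$ to exactly $x_i\OPT_{RS}$, because the cap is active by the violation. With coefficient $\frac{x_i+1}{x_i-1}$ this contributes $(x_i+1)\OPT_{RS}$. Hence $\Delta\Phi(S)\ge (x_i+1)\OPT_{RS}+A_{LS}-\OPT_{LS}$. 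The pre-growth invariant $\OPT_{LS}\le x_i\OPT_{RS}$ turns this into $\Delta\Phi(S)\ge A_{LS}+\OPT_{RS}$. Finally, Claim~\ref{claim:not-capped} lifts the bound to $\Phi(\TT)$, because $\ALG$ lies in $S$ before and after the move and no other term decreases.

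\textbf{Why your fallback is not enough.} You propose applying Claim~\ref{claim:phi-bar-increases-enough} to the child $P$ of $LS$ that contains $l_\ALG$. That leaves two things uncovered. First, when $l_g$ lies in the sibling $P'$ of $P$, the difference $\OPT_P-\OPT_{P'}$ is not accounted for. Second, the passage from $\Phi(LS)$ to the capped term $\overline{\Phi(LS)}$ that actually appears in $\Phi(S)$ is not handled. Covering both is exactly the inductive step of Claim~\ref{claim:phi-bar-increases-enough}, which you would have to redo.
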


\begin{proof}
Suppose that the adversary grows a leaf $l_g$ by $h$, and $\ALG$ is located at $l_\ALG$ before the growth operation. The case when $l_\ALG = l_g$ is already covered Claim \ref{claim:growth-works-easy-case}, so it remains to consider the case when $l_\ALG \neq l_g$.
    
If the algorithm does not move and stays in the same leaf $l_{\ALG}$ before and after the growth, we have $\Delta \Phi(T) \geq 0 = \Delta \cost(\ALG)$ and we are done. Therefore, suppose $\ALG$ moves from its initial position $l_{\ALG}$ to another leaf. Let $S$ be the highest-level subtree before the growth for which the ratio invariant inequality in the corresponding subtree $S'$ after growth is violated, and let $i$ be the level of $S$ and $S'$. Suppose without loss of generality that $\ALG$ is currently in $l_{\ALG} \in LS$ and moves to an optimal leaf in $RS'=RS$. Note that this movement can only be triggered when the growing leaf $l_g$ is the (unique) optimum leaf in $LS$. Let $A_{LS}$ be the distance from $l_{\ALG}$ to the root of subtree $LS$, so that $\ALG$ pays $A_{LS} + \OPT_{RS}$ for the movement. By the auxiliary Claim \ref{claim:phi-bar-increases-enough},
    \begin{equation}   
    \Delta \overline{\Phi(LS)} \geq A_{LS} - \OPT_{LS}.
    \label{eq:phi-bar-increases}
    \end{equation}

Similarly to the proof of Claim \ref{claim:growth-works-easy-case}, we have
that $\overline{\OPTOTHER(S)}$ increases by at least $(x_i-1) \OPT_{RS}$. Notice that no term in $\Phi(\TT)$ decreases after the growth, so by repeatedly applying Claim \ref{claim:not-capped}, and using \eqref{eq:phi-bar-increases}, we have 
\begin{align*}
\Delta \Phi(\TT) 
\geq \Delta \Phi(S)  &\geq \frac{x_i+1}{x_i-1}(x_i-1) \OPT_{RS} +  A_{LS} - \OPT_{LS} \\
&= (x_i+1) \OPT_{RS} + A_{LS} - \OPT_{LS}
\end{align*}

Since the ratio invariant was satisfied before the growth, we have $\OPT_{LS} \leq x_i \OPT_{RS}$, and thus 
\[
\Delta \Phi(\TT) \geq \OPT_{RS} + A_{LS} = \Delta \cost(\ALG). \qedhere
\]
\end{proof}

\subsubsection{Deletion}
\label{sec:deletion-analysis}
This section is concerned with the analysis for the deletion operation. Let $l$ be the leaf deleted by the adversary. Without loss of generality, assume that $w_{e(l)} > 2 \OPT_{T}$. If this is not the case, we can pretend that $l$ grows by a sufficient amount just before the deletion is announced. Our assumption ensures that $\ALG$ cannot be located in $l$, by the ratio invariant. 

Let $S$ be the parent subtree of $l$, as shown in Fig. \ref{fig:deletion-stages}. Recall that we take the following steps in order to handle the deletion:
\begin{enumerate}
\item If $\ALG$ is in $S$, $\ALG$ moves to an optimal leaf in $S$. We denote the resulting subtree by $S'$.
\item Delete $l$ together with its incident edge, and increase the levels of all remaining subtrees in $S'$. Call the resulting subtree $S''$.  
\item Apply the extreme-imbalance procedure (described below) on $S''$ to obtain extreme subtree $S'''$. 
\end{enumerate}

\begin{figure}[ht]
\centering
\begin{subfigure}[t]{0.24\textwidth}
\begin{tikzpicture}
\node[draw, inner sep=0pt] (step1) at (0, 0) {}
    child {node[draw, inner sep=0pt] {} 
        child {node {$l$} edge from parent node [left] {$b$}} 
        child {node[draw, inner sep=0pt] {} 
        child {node {\dots}
        child {node {$\OPT$} edge from parent node [left] {$f$}}
        edge from parent node [left] {{}}
               edge from parent node [left] {$d$}}
        child {node {\dots}
        child {node {$\ALG$} edge from parent node [right] {$g$}}
        edge from parent node [left] {{}}
               edge from parent node [right] {$e$}}
        edge from parent node [right] {$c$}}
        edge from parent node [left] {$a$}};
\end{tikzpicture}
\caption{Subtree $S$}
\end{subfigure}
\begin{subfigure}[t]{0.24\textwidth}
\begin{tikzpicture}
\node[draw, inner sep=0pt] {} 
    child {node[draw, inner sep=0pt] {} 
        child {node {$l$} edge from parent node [left] {$b$}} 
        child {node[draw, inner sep=0pt] {}
        child {node {\dots}
        child {node[text width=1cm,align=center]  { ALG \\ OPT} edge from parent node [left] {$f$}}
        edge from parent node [left] {{}}
               edge from parent node [left] {$d$}}
        child {node {\dots}
        child {node[text width=1cm,align=center]  { \break \break } edge from parent node [right] {$g$}}
        edge from parent node [left] {{}}
               edge from parent node [right] {$e$}}
    edge from parent node [right] {$c$}}
    edge from parent node [left] {$a$}};
\end{tikzpicture}
\caption{Subtree $S'$}
\end{subfigure}
\begin{subfigure}[t]{0.24\textwidth}
    \begin{tikzpicture}
\node[draw, inner sep=0pt] (A) {} 
  child {node[draw, inner sep=0pt] (B) at (0, -1.5) {} 
        child {node {\dots}
        child {node[text width=1cm,align=center]  { ALG \\ OPT} edge from parent node [left] {$f$}}
        edge from parent node [left] {{}}
               edge from parent node [left] {$d$}}
        child {node {\dots}
        child {node[text width=1cm,align=center]  {\break \break} edge from parent node [right] {$g$}}
        edge from parent node [left] {{}} edge from parent node [right] {$e$}}
        edge from parent node [left] {$a+c$}};
\end{tikzpicture}
\caption{Subtree $S''$}
\end{subfigure}
\begin{subfigure}[t]{0.24\textwidth}
    \begin{tikzpicture}
\node[draw, inner sep=0pt] (A) {} 
    child {node[draw, inner sep=0pt] (B) at (0, -1.5) {} 
        child {node {\dots}
        child {node[text width=1cm,align=center]  { ALG \\ OPT} edge from parent node [left] {$f$}}
        edge from parent node [left] {{}}
               edge from parent node [left] {$d$}} 
        child {node {\dots}
        child {node[text width=1cm,align=center]  {\break \break} edge from parent node [right] {$g^{new}$}}
        edge from parent node [left] {{}} edge from parent node [right] {$e^{new}$}}
        edge from parent node [left] {$a+c$}};
\end{tikzpicture}
\caption{Subtree $S'''$}
\end{subfigure}
\caption{Stages of deletion. $\OPT$ denotes the location of the optimal leaf in the subtree and $\ALG$ denotes the location of the algorithm (if the algorithm is in the subtree). The edges are labelled by their weights.}
\label{fig:deletion-stages}
\end{figure}

Recall that a subtree $N$ of level $j$ is said to be extreme if the inequality in Claim \ref{claim:bounding-the-potential} is tight, i.e.\ $\Phi_j(N) = D_j \OPT_N$. The following recursive procedure can be used to make a subtree extreme.

\paragraph{Extreme-imbalance procedure on a subtree $\boldsymbol{N}$.}
\label{sec:extreme-imbalance}
If $N$ is a trivial subtree, leave $N$ unchanged. Otherwise, if $N$ is a non-trivial subtree of level $j$, and assuming without loss of generality that the optimal leaf of $N$ is in $LN$, do:
\begin{enumerate}
    \item Apply the extreme-imbalance procedure to $LN$ and $RN$.
    \item If $\OPT_{RN} \leq x_j \OPT_{LN}$, scale up all edges in $RN$ by $x_j\frac{ \OPT_{LN}}{\OPT_{RN}} \in [1, x_j]$.
\end{enumerate}

\begin{claim}
Assuming that $\ALG$ is located either in $\TT \setminus N$ or in the optimal leaf of $N$, the extreme-imbalance procedure transforms $N$ into a subtree $N'$ such that
\begin{itemize}
    \item $\Phi_j(N') = D_j \OPT_{N'}$ (i.e., $N'$ is extreme), and
    \item $\OPT_{N'} = \OPT_N$ (and more precisely, no edge on the optimal path is altered).
\end{itemize}
\label{claim:extreme-procedure}
\end{claim}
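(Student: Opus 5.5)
We argue by structural induction on $N$, proving both bullet points together with a third statement: the procedure never changes which leaf is optimal in $N$. In the base case $N$ is trivial and is left unchanged, so $\Phi_j(N)=D_j w_{e(N)}=D_j\OPT_N$ by definition, and both bullets hold.

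For non-trivial $N$ of level $j$, assume without loss of generality that the optimal leaf of $N$ lies in $LN$. First we check that the hypothesis on $\ALG$ carries over to the two recursive calls. If $\ALG$ is in $\TT\setminus N$, it is outside both $LN$ and $RN$. If $\ALG$ is at the optimal leaf of $N$, then it lies in $LN$ and is at the optimal leaf of $LN$, so for $RN$ it is in $\TT\setminus RN$. The induction hypothesis then yields extreme subtrees $LN^\ast,RN^\ast$ of level $j-1$ with $\OPT_{LN^\ast}=\OPT_{LN}$ and $\OPT_{RN^\ast}=\OPT_{RN}$, and their optimal paths are untouched. In step 2 we scale all edges of $RN^\ast$ by a factor $c\ge 1$. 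We also use $c=1$ when $\OPT_{RN}>x_j\OPT_{LN}$.

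The key auxiliary observation is positive homogeneity. If every edge of a subtree $M$ that does not contain $\ALG$ is multiplied by $c>0$, then every $\OPT$ inside $M$ and every $\OPTOTHER$ term scale by $c$. The latter holds because in this case each $\OPTOTHER$ is a maximum of two $\OPT$'s. All caps in \eqref{eq:opt-other-cap} and \eqref{eq:phi-cap} are also linear in these quantities, so by induction over $M$ we get $\Phi(cM)=c\,\Phi(M)$. Since $\ALG\notin RN^\ast$, the scaled subtree $RN'$ is still extreme, with $\OPT_{RN'}=c\,\OPT_{RN}=\OPT_{RN}\lor x_j\OPT_{LN}$. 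In particular $\OPT_{RN'}\ge x_j\OPT_{LN}\ge\OPT_{LN}$. Hence the optimal leaf of $N'$ is still the one in $LN$. No edge on its path, neither $e(N)$ nor edges in $LN$, was modified, which gives $\OPT_{N'}=\OPT_N$ and the second bullet.

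It remains to compute $\Phi_j(N')$ from \eqref{eq:updated-potential}, using $\OPT_{LN'}\land\OPT_{RN'}=\OPT_{LN}$.
\begin{itemize}
\item In either allowed position of $\ALG$ we have $\OPTOTHER(N')=\OPT_{RN'}$: this is the ``$\ALG$ in $LN$'' case, or the maximum when $\ALG\notin N$. Since $\OPT_{RN'}\ge x_j\OPT_{LN}$, the cap gives $\overline{\OPTOTHER(N')}=x_j\OPT_{LN}$.
\item The term $\Phi_{j-1}(LN')=D_{j-1}\OPT_{LN}$ lies below its cap $D_{j-1}x_j\OPT_{LN}$, so it is unchanged.
\item The term $\Phi_{j-1}(RN')=D_{j-1}\OPT_{RN'}\ge D_{j-1}x_j\OPT_{LN}$ is capped to exactly $D_{j-1}x_j\OPT_{LN}$.
\end{itemize}
Summing and applying Lemma~\ref{lem:DkViaxk} gives
\[
\Phi_j(N')=D_j w_{e(N)}+\Bigl(\tfrac{x_j(x_j+1)}{x_j-1}+D_{j-1}(1+x_j)\Bigr)\OPT_{LN}=D_j\OPT_{N'} .
\]

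The step needing the most care is the homogeneity argument for scaling a subtree not containing $\ALG$. In particular we must check that the $\OPTOTHER$ terms and caps at every internal node of $RN^\ast$ scale linearly, which they do because all of them are built from $\OPT$ values via $\land$, $\lor$ and multiplication by constants. We must also verify that the level indices used in the recursion match, namely $j-1$ for $LN,RN$.
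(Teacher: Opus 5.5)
Your proof is correct and follows the same structural induction as the paper: you recurse on $LN$ and $RN$, note that $\OPT_{LN'}=\OPT_{LN}$ and $\OPT_{RN'}\ge x_j\OPT_{LN}$, and evaluate the capped terms of \eqref{eq:updated-potential} using Lemma~\ref{lem:DkViaxk}. Your explicit homogeneity step fills in what the paper leaves inside its ``simple calculation'': scaling a subtree that does not contain $\ALG$ by $c$ multiplies its potential by $c$, so the rescaled $RN'$ stays extreme and its capped potential equals $D_{j-1}x_j\OPT_{LN}$.
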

\begin{proof}
If $N$ is trivial, then $N$ is already extreme, so we are done. Therefore, suppose $N$ is non-trivial. Since $\ALG$ is located either in the optimal leaf of $LN$ or not in $N$ at all, we can apply the procedure recursively to make $LN$ and $RN$ extreme. In the end, we have $\OPT_{LN'} = \OPT_{LN}$ and $\OPT_{RN'} \geq x_j \OPT_{LN}$. Putting everything together and using \eqref{eq:updated-potential}, a simple calculation shows that $\Phi_j(N') = D_j \OPT_N$, as desired.
\end{proof}

Let $\TT^{new}$ be the tree obtained from $\TT$ by replacing $S$ with $S'''$. Let $i$ be the level of $S$, $S'$, $S''$, and $S'''$.

\begin{claim}
We have
\[
\Delta \Phi(\TT) = \Phi(\TT^{new}) - \Phi(\TT) \geq \Delta \cost(\ALG).
\]
\label{claim:potential-increases-for-deletion}
\end{claim}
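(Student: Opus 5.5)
The plan is to localize the change to the subtree $S$. I will show that $\Phi_i(S''')$ exceeds $\Phi_i(S)$ by at least the movement cost, and then argue that no term of $\Phi(\TT)$ outside $S$ decreases.

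\textbf{Step 0: the preliminary growth.} The assumption $w_{e(l)} > 2\OPT_{\TT}$ is obtained by pretending that $l$ grows first. That growth, and any movement it triggers, is already covered by Claim~\ref{claim:growth-works}. So I may start from a state where $l$ is not optimal anywhere and $\ALG\neq l$. In particular $\OPT_S$ is attained through the sibling of $l$, so $\OPT_{S''}=\OPT_S$: the merged edge $a+c$ replaces $a$ followed by $c$.

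\textbf{Step 1: the subtree $S$ itself.} There are two cases.
\begin{itemize}
\item If $\ALG\in S$, the only cost is $\Delta\cost(\ALG)=\dist_S(l_\ALG,l_{\OPT_S})$. By Claim~\ref{claim:bounding-the-potential2}, $\Phi_i(S)\le D_i\OPT_S-\dist_S(l_\ALG,l_{\OPT_S})$.
\item If $\ALG\notin S$, the cost is zero and Claim~\ref{claim:bounding-the-potential} gives $\Phi_i(S)\le D_i\OPT_S$.
\end{itemize}
After steps 2--3, $\ALG$ is either at the optimal leaf of $S''$ or outside it. Hence Claim~\ref{claim:extreme-procedure} applies to $S''$ (at level $i$, after the promotions inside it) and yields $\Phi_i(S''')=D_i\OPT_{S'''}=D_i\OPT_{S''}=D_i\OPT_S$. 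In both cases this gives $\Phi_i(S''')-\Phi_i(S)\ge\Delta\cost(\ALG)$.

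The deletion of $e(l)$ and the level promotions inside $S$ never need to be tracked term by term, because the final value is pinned down exactly. This is precisely what the imbalancing step buys. I would also check that the ratio invariant holds in $S'''$. It does, since $\ALG$ sits at the optimal leaf of every subtree of $S'''$ containing it, and that leaf is optimal in each such subtree because the optimal path is not altered.

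\textbf{Step 2: ancestors of $S$.} Levels outside $S$ are unchanged, because $k$ never decreases and only depths of nodes below $S$'s root change. For every ancestor $P$, the quantities $\OPT$ of every subtree not contained in $S$ are unchanged, since $\OPT_{S'''}=\OPT_S$. The side of $P$ containing $\ALG$ is also unchanged: $\ALG$ moved only within $S$, or not at all. Hence every $\OPTOTHER$ and every cap threshold $x_j(\OPT_{LP}\land\OPT_{RP})$ in the ancestors is unchanged.

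The only change propagating upward is that the child term $\Phi_i(S)$ increases. The capped quantity $\Phi\land c$ is monotone in $\Phi$, so by induction up the root path every ancestor's potential is nondecreasing. Summing, $\Delta\Phi(\TT)\ge\Phi_i(S''')-\Phi_i(S)\ge\Delta\cost(\ALG)$.

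\textbf{Main obstacle.} The delicate point is Step~1. I must verify that Claim~\ref{claim:extreme-procedure} legitimately applies at the promoted levels, and that after the move the algorithm is at the optimal leaf of $S''$ and not merely of $S$. This holds because the deleted branch carried weight $>2\OPT_{\TT}$ and so never contained the optimum. A secondary check is that the cap on $\Phi_i(S)$ in its parent cannot turn an increase into a decrease; monotonicity of the cap handles this.
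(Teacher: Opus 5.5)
Your strategy is the same as the paper's. You localize to $S$, bound the old value with Claim~\ref{claim:bounding-the-potential2} (or Claim~\ref{claim:bounding-the-potential} when $\ALG\notin S$), and pin down $\Phi_i(S''')=D_i\OPT_S$ via Claim~\ref{claim:extreme-procedure}. The case $\ALG\notin S$ is fine. There your monotonicity argument up the root path makes explicit what the paper leaves implicit, and $\Delta\Phi(\TT)\ge 0$ is all you need.

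\textbf{The gap is the last line of Step~2 in the case $\ALG\in S$,} which is the only case with $\Delta\cost(\ALG)>0$. $\Phi(\TT)$ is not a sum in which $\Phi_i(S)$ appears additively. Writing $Q$ for the sibling of $S$, $S$ enters its parent only through
$\overline{\Phi_i(S)}=\Phi_i(S)\land D_i x_{i+1}(\OPT_S\land\OPT_Q)$.
That parent enters its own parent through another cap, and so on up to the root.

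Monotonicity of $y\mapsto y\land c$ only says each capped term does not decrease. If a cap were binding, the increase $\Phi_i(S''')-\Phi_i(S)$ could be partly or entirely absorbed. You would then get only $\Delta\Phi(\TT)\ge 0$, not $\Delta\Phi(\TT)\ge\dist_S(l_\ALG,l_{\OPT_S})$, so the word ``Summing'' is not justified. Your ``secondary check'' guards against the wrong danger: the issue is attenuation of the increase, not a decrease.

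\textbf{The missing ingredient is Claim~\ref{claim:not-capped}.} Every subtree on the path from $S'''$ to the root contains $\ALG$ both before and after the operation. The ratio invariant holds in both trees: at the ancestors, the $\OPT$ values and the side containing $\ALG$ are unchanged. Hence none of these caps binds. For example, for $S'''$ itself, $\Phi_i(S''')=D_i\OPT_S\le D_i x_{i+1}(\OPT_S\land\OPT_Q)$ by the invariant at its parent. So each capped term on the path changes by exactly the change of its uncapped value, and $\Delta\Phi(\TT)=\Phi_i(S''')-\Phi_i(S)$. This is exactly how the paper concludes.

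\textbf{A minor point,} inherited from the paper's wording: $w_{e(l)}>2\OPT_{\TT}$ does not by itself make $l$ non-optimal in $S$. Nothing bounds the $\OPT$ of $l$'s sibling by $2\OPT_{\TT}$, for instance when $\ALG\notin S$. What you actually need is $w_{e(l)}$ larger than the sibling's $\OPT$, and the algorithm's ``grow $e(l)$ to $\infty$'' step provides exactly that.
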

\begin{proof}
Suppose that initially $\ALG$ is in a leaf $l_\ALG \notin S$. Then, $\Delta \cost(\ALG) = 0$, since $\ALG$ does not move during any step. We have $\OPT_S = \OPT_{S'} = \OPT_{S''}$, as the deleted leaf $l$ was not optimal in $S$ (by our assumption that $w_{e(l)} > 2 \OPT_{T}$). By Claim \ref{claim:extreme-procedure}, we obtain \[\Phi(S''') = D_i \OPT_{S'''} = D_i \OPT_S,\] and by Claim \ref{claim:bounding-the-potential} we have \[\Phi(S) \leq D_i \OPT_S.\] Thus, $\Phi(S''') - \Phi(S) \geq 0$, and so $\Delta \Phi(\TT) \geq 0$.

Now suppose that initially $\ALG$ is in a leaf $l_\ALG \in S$. Then, $\ALG$ pays $\Delta \cost(\ALG) = \dist_S(l_\ALG, l_{\OPT_S})$ to move from $l$ to the optimal leaf $l_{\OPT_S}$. By Claim \ref{claim:bounding-the-potential2} we have \[\Phi(S) + \dist_S(l_\ALG, l_{\OPT_S}) \leq D_i \OPT_S,\]
and by Claim \ref{claim:extreme-procedure} we have 
\[
\Phi(S''') = D_i \OPT_{S'''} = D_i \OPT_S.
\]
Thus, $\Phi(S''') - \Phi(S) \geq \Delta \cost(\ALG)$. Since $\ALG$ is initially in $S$ and it remains in $S'''$, by repeatedly applying Claim \ref{claim:not-capped}, we obtain $\Delta \Phi(\TT) = \Phi(S''') - \Phi(S)$, so $\Delta \Phi(\TT) \geq \Delta \cost(\ALG)$.
\end{proof}

\begin{claim}
    $\TT^{new}$ satisfies the ratio invariant.
    \label{claim:ratio-invariant-satisfied}
\end{claim}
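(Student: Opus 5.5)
The ratio invariant only constrains subtrees containing $l_\ALG$, so the plan is to go through the non-trivial subtrees of $\TT^{new}$ that contain $\ALG$ and sort them by where they sit relative to $S'''$. There are three kinds.

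\textbf{Subtrees outside $S'''$.} These are subtrees disjoint from $S'''$. If $\ALG$ is not in $S$, then $\ALG$ does not move during the deletion, so every subtree containing $\ALG$ is disjoint from $S$. Their weights and levels are unchanged, and each sibling optimum is either unchanged or equal to $\OPT_{S'''} = \OPT_S$ by Claim~\ref{claim:extreme-procedure}. Hence the invariant is inherited from $\TT$.

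\textbf{Ancestors of $S'''$.} These occur only when $\ALG$ is in $S$. For each strict ancestor, the two children have unchanged optima, again because $\OPT_{S'''} = \OPT_S$, and unchanged levels, since only subtrees inside $S'$ are promoted. So the invariant carries over from before the deletion.

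Step 1 of the deletion (pretending $e(l)$ grows to $\infty$) is processed as growth operations, which we already showed preserve the invariant. Step 2 moves $\ALG$ to $l_{\OPT_S}$, which keeps $\ALG$ inside $S$. This move does not affect any ancestor's inequality, because that inequality only asks which child contains $\ALG$, and that child is unchanged.

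\textbf{Subtrees inside $S'''$.} This is the main case and needs $\ALG \in S$, so $\ALG$ now sits at the optimal leaf of $S'''$, which is the same leaf as before by Claim~\ref{claim:extreme-procedure}, since no edge on the optimal path is altered. Take a non-trivial subtree $N$ of $S'''$ containing $\ALG$, at level $j$. Since $\ALG$ is at the optimal leaf of $S'''$, it is also at an optimal leaf of $N$. So if $\ALG \in LN$, then $\OPT_{LN} \le \OPT_{RN} \le x_j \OPT_{RN}$ trivially, because $x_j \ge 1$.

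The obstacle I anticipate is tie-breaking and uniqueness. "Optimal leaf of $S'''$" must coincide with an optimal leaf in every nested subtree along its path. This holds: the optimal path of $S'''$ restricted to $N$ is an optimal root-to-leaf path of $N$, since otherwise it could be shortened. The imbalancing step only scales up the non-optimal side, which preserves this.

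Promotion of levels inside $S''$ is harmless in this argument, since the trivial inequality holds at any level. Collecting the three kinds gives the claim.
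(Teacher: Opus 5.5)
Your argument is correct and is essentially the paper's: inside $S'''$ the inequality holds trivially at any level because $\ALG$ sits at the (unaltered) optimal leaf, and outside $S'''$ nothing relevant changes. You justify the outside case via $\OPT_{S'''}=\OPT_S$ and unchanged levels, whereas the paper says that edges on $\ALG$'s path are untouched and all other edges only grow.

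One slip in your case split: when $\ALG\notin S$, it is false that every subtree containing $\ALG$ is disjoint from $S$, since common ancestors such as $\TT$ itself contain both, so ancestors of $S'''$ do occur in that case. Your ancestor argument still covers them verbatim: the children's optima and levels are unchanged, and so is the child containing $\ALG$.
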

\begin{proof}
    We know that the ratio invariant is satisfied in the beginning for $\TT$. $\ALG$ may move to an optimal leaf in $S$ at step 1, but this maintains the invariant. The levels of the subtrees of $S'$ increase at step 2, so the $x_j$'s used for the ratio invariant inequality inside $S'$ and $S''$ are different. But the ratio invariant inequality is satisfied trivially inside $S''$ because if $\ALG$ is in $S''$, then it is in an optimal leaf in $S''$. At step 3, the lengths of the edges on the path from the root to $\ALG$'s location do not change, whereas the lengths of other edges could only increase; thus, the invariant is preserved.
\end{proof}

\subsubsection{Fork}
\label{sec:forking-analysis}
If the fork increases the depth of $\TT$ from $k$ to $k+1$, we increment $k$. This increases the levels of all subtrees, which could make the ratio invariant become violated. To fix this, $\ALG$ moves from its current leaf $l_\ALG$ to an optimal leaf in $\TT$, and we apply the extreme-imbalance procedure on $\TT$ to obtain a new tree $\TT^{new}$. The movement cost paid by $\ALG$ is $\Delta \cost(\ALG) = \dist_{\TT}(l_\ALG, l_{\OPT_{\TT}})$, and by Claim \ref{claim:bounding-the-potential2} we know that 
\[
\Phi_k(\TT) + \dist_{\TT}(l_\ALG, l_{\OPT_{\TT}}) \leq D_k \OPT_{\TT}.
\]

By Claim \ref{claim:extreme-procedure}, $\OPT_{\TT^{new}} = \OPT_{\TT}$ and $\Phi_{k+1}(\TT^{new}) = D_{k+1} \OPT_{\TT} > D_k \OPT_{\TT}$, so we get that $\Phi(\TT^{new}) > \Phi(\TT) + \Delta \cost(\ALG)$, as desired.

Otherwise, if the fork does not cause an increase of $k$, it is easy to see that $\cost(\ALG)$ and $\Phi(T)$ stay the same, and the ratio invariant remains satisfied.

\subsubsection{Bounding the distortion factor}
\label{sec:bounding-C}

In this section we bound the distortion in $\TT$.

\begin{claim}
    The distortion factor of a (non-zero length) edge $e$ at level $i$ satisfies
    \[1 \leq \frac{w_e}{w^0_e} \leq \Pi_{j=2}^i {x_{j+1} \dots x_k}.\]
    In particular, an edge $e$ at level $1$ is not distorted at all (i.e.\ $w_e = w^0_e$).
    \label{claim:bounding-C}
\end{claim}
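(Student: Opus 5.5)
The plan is to prove the bound as an invariant, by induction over the sequence of operations. Write $F_i^{(k)} := \prod_{j=2}^{i} (x_{j+1}\cdots x_k)$ for the claimed bound when the current depth parameter is $k$. Two facts make this quantity easy to work with. First, $F^{(k)}_1 = 1$, being an empty product. Second, $F^{(k)}_i = F^{(k)}_{i-1}\cdot \prod_{j=i+1}^{k} x_j$. Since every $x_j > 1$, $F^{(k)}_i$ is nondecreasing both in $i$ and in $k$. The lower bound $w_e/w^0_e \ge 1$ is immediate, because every imbalancing operation scales by a factor $\ge 1$ and all other operations act identically on $T$ and $T^0$. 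So all the work is in the upper bound. I will check that each operation preserves the invariant ``every non-zero edge $e$ at level $i$ has $w_e/w^0_e \le F^{(k)}_i$.''

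Several operations are routine.
\begin{itemize}
\item A growth adds the same $h$ to $w_e$ and $w^0_e$. The ratio therefore moves toward $1$ and stays in $[1, F_i^{(k)}]$.
\item A fork without an increment of $k$ creates zero-length edges and changes nothing else.
\item The smoothing step of a deletion merges $e_1$ (at level $i$) with $e_2$ (at level $i-1$, now promoted to $i$) into one edge at level $i$. Its ratio is $(w_{e_1}+w_{e_2})/(w^0_{e_1}+w^0_{e_2}) \le \max(w_{e_1}/w^0_{e_1}, w_{e_2}/w^0_{e_2})$. This is at most $F^{(k)}_i$ by monotonicity in the level.
\item An edge whose level rises only because $k$ increases, but which is not rescaled, is fine because $F$ is monotone in $i$ and $k$.
\end{itemize}

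The substantive step concerns the extreme-imbalance procedure, which is applied only in two situations: to $S''$ after a deletion, or to all of $T$ after a fork that increments $k$. The key observation is that in both situations every edge being rescaled has just been promoted by exactly one level. In $S''$, all subtrees had their level increased in step 2. In the fork case, $k$ increased, so every level increased by one.

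Next I analyse how much a single run of the procedure on a subtree at level $\le k$ can scale a fixed edge $e$ that now sits at level $i$. Step 2 of the procedure at a node $N$ of level $j$ scales only the edges of $RN$. So $e$ is scaled at $N$ only if $N$ is a strict ancestor subtree of $e$, which means $j \in \{i+1,\dots,k\}$. Each such ancestor performs step 2 at most once, by a factor in $[1,x_j]$. The recursion into $LN$ and $RN$ happens before the scaling and never revisits $N$. Hence one run multiplies $w_e$ by at most $\prod_{j=i+1}^{k} x_j$.

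Before the promotion, $e$ had level $i-1$ and ratio at most $F^{(k_{\text{old}})}_{i-1} \le F^{(k)}_{i-1}$. After the run, its ratio is therefore at most $F^{(k)}_{i-1}\prod_{j=i+1}^k x_j = F^{(k)}_i$. An edge at level $1$ after the run cannot have been promoted from level $0$. Moreover, it has no strict ancestor of level $\le 1$ that could scale it, so such an edge is never distorted.

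I expect the main obstacle to be making the bookkeeping airtight in two respects. The first is that every scaling of $e$ is indeed paired with a fresh promotion of $e$; in particular, the merged edge $e(S'')$ must be accounted for correctly, and edges outside $S''$ are untouched. The second is that one run scales $e$ at most once per ancestor level. To handle this, I would state explicitly that the procedure is invoked only on subtrees all of whose levels were incremented in the same operation. I would then argue by structural induction on the recursion that an edge at level $i$ inside a processed subtree of level $m$ accumulates a factor at most $\prod_{j=i+1}^{m} x_j \le \prod_{j=i+1}^{k} x_j$.
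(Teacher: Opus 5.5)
Your proposal is correct and takes essentially the same route as the paper: growth only pulls the ratio toward $1$, and the extreme-imbalance procedure is applied only to subtrees whose levels were just incremented. A single run scales an edge at level $i$ by at most $x_{i+1}\cdots x_k$ (once per strict ancestor level), and combining this with the bound from level $i-1$ gives the recursive product. Your invariant bookkeeping, including the mediant bound for the merged edge after smoothing and monotonicity in $k$, spells out what the paper calls ``an inductive argument''. One case is glossed over: after a $k$-incrementing fork, the two new level-$1$ edges lie in the processed tree without having been promoted. As the paper notes, they have length $0$, so scaling them has no effect.
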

\begin{proof}
    First note that growing $e$ cannot increase ${w_e}/{w^0_e}$, since $w_e$ and $w^0_e$ grow by the same amount.  
    
    It remains to bound how much $e$ is scaled in the imbalancing operations. Note that edges at level $1$ are never scaled: Edges scaled upon a deletion were promoted just before, so their level is greater than $1$; if a fork step causes scaling, then the only level 1 edges are the two new edges of length $0$, so scaling them has no effect. Thus, assume that $i>1$. When we apply the extreme-imbalance procedure to a subtree $S$ of level $j>i$ which contains $e$, $e$ is scaled by at most $x_j$. Considering all of the recursive calls of the procedure, $e$ is scaled by at most \[x_j x_{j-1} \dots x_{i+1} \leq x_k x_{k-1} \dots x_{i+1}.\] Since we apply the extreme-imbalance procedure only to subtrees whose level has just been increased, the level of $e$ was at most $i-1$ when the previous scalings occurred, and we can conclude by an inductive argument.
    \end{proof}

\begin{claim}
    Let $C := \max_e \frac{w_e}{w^0_e}$ be the distortion factor of $\TT$. Then $C < 60$.
\end{claim}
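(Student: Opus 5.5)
The plan is to turn Claim~\ref{claim:bounding-C} into an explicit infinite product and bound it numerically. For an edge at level $i\le k$, the bound in Claim~\ref{claim:bounding-C} is at most the value at $i=k$. Regrouping by the index $m$ of each factor $x_m$ gives
\[
\prod_{j=2}^{k}\ \prod_{m=j+1}^{k} x_m \;=\; \prod_{m=3}^{k} x_m^{\,m-2} \;\le\; \prod_{m=3}^{\infty} x_m^{\,m-2},
\]
where the last inequality uses $x_m>1$. Hence $C\le \prod_{m\ge 3} x_m^{m-2}$, independently of $k$, and it suffices to show that this infinite product is below $60$.

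I would split the product into a head ($3\le m\le 6$), computed exactly, and a tail ($m\ge 7$), bounded via logarithms.

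\emph{Head.} From \eqref{eq:DkViaDk-1} we get $D_1=1$, $D_2=9$, $D_3=21+\sqrt{80}\approx 29.94$, $D_4\approx 78.6$ and $D_5\approx 185.5$. From \eqref{eq:xk} this gives $x_3\approx1.447$, $x_4\approx1.254$, $x_5\approx1.159$ and $x_6\approx1.104$. Therefore
\[
x_3\,x_4^2\,x_5^3\,x_6^4\approx 1.447\cdot1.573\cdot1.555\cdot1.483\approx 5.3 .
\]
I would round each factor upward so that this remains a rigorous upper bound.

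\emph{Tail.} I use $\ln x_m\le x_m-1=\sqrt{2/(1+D_{m-1})}$. From \eqref{eq:DkViaDk-1}, $D_{k+1}\ge 2D_k+3$, so $1+D_{m-1}\ge 2^{m-6}(1+D_5)\ge 186\cdot 2^{m-6}$ for $m\ge 6$. With $r=2^{-1/2}$, this gives
\[
\sum_{m\ge7}(m-2)\ln x_m\;\le\;\sqrt{\tfrac{2}{186}}\sum_{n\ge1}(n+4)\,r^{n}
=\sqrt{\tfrac{2}{186}}\Bigl(\tfrac{r}{(1-r)^2}+\tfrac{4r}{1-r}\Bigr)\approx 0.104\cdot 17.9\approx 1.86 .
\]
So the tail contributes a factor of at most $e^{1.86}<6.5$, and altogether $C\lesssim 5.3\cdot 6.5<35<60$.

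The main obstacle is making the numerics rigorous and keeping slack in the right places. The constants are irrational, so each intermediate value should be replaced by a safe rational upper bound; there is ample room since the estimate is about $35$ against the target $60$. The other delicate point is the doubling estimate for $1+D_{m-1}$. Anchoring it at $D_5$ rather than $D_1$ is what keeps the tail small: using only $D_k+3\ge 2^{k+1}$ from $D_1$ would still converge, but it would give a visibly worse constant. If the slack ever proved insufficient, I would simply compute one or two more head terms exactly before switching to the geometric bound.
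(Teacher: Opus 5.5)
Your proof is correct and follows essentially the same route as the paper. Both arguments regroup the bound of Claim~\ref{claim:bounding-C} into $\prod_{m\ge 3} x_m^{m-2}$, pass to logarithms, and control the tail via $\ln x_m\le x_m-1$ together with an exponential lower bound on $D_{m-1}$. The only difference is bookkeeping: the paper evaluates only $x_3,x_4$ exactly and uses $D_{i-1}>2^{i+1}$ for $i\ge 5$, which gives $\ln C\lesssim 4.09$, just under $\ln 60\approx 4.094$. Your larger head together with anchoring at $D_5$ gives the much more comfortable estimate $C\lesssim 34$.
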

\begin{proof}
By Claim \ref{claim:bounding-C}, 
\[
C \leq (x_3 \dots x_k) \dots (x_{k-1} x_k) x_k = \Pi_{i=3}^k x_i^{i-2}.
\]

Since $x_i = 1 + \sqrt{\frac{2}{1 + D_{i-1}}}$ and $D_{i-1} > 2^{i+1}$ for $i \geq 5$, we have $x_i < 1 + 2^{-i/2}$. Using that $1 + r \leq e^r$ for $r \in \reals$, we get $\ln(x_i) < 2^{-i/2}$. Thus,
\begin{align*}
    \ln(C) &\le \sum_{i=3}^k (i-2) \ln(x_i) 
            < \ln(x_3) + 2\ln(x_4) + \sum_{i=5}^{\infty} (i-2) 2^{-i/2} 
            \approx 4.09,
\end{align*}
so $C < 60$.
\end{proof}

\subsection*{Acknowledgments}
C. Coester is funded by the European Union (ERC, CCOO, 101165139). Views and opinions expressed are however those of the author(s) only and do not necessarily reflect those of the European Union or the European Research Council. Neither the European Union nor the granting authority can be held responsible for them.

\bibliographystyle{alpha}
\bibliography{ref}

\newpage
\appendix

\section{\texorpdfstring{Tight bound of $D_3$ for width $3$}{Tight bound of D₃ for width 3}}\label{app:D3}
When the game is played on trees with at most $3$ leaves, we can match the lower bound precisely. We prove this by refining both the algorithm and the potential function used to analyze it. 

\subsection{Algorithm description}
Our refined algorithm does not use the imbalancing operation, and works directly on the tree from the evolving tree game, which we denote by $T$ in this section. The algorithm responds to growth in the same way as in Section \ref{sec:algo-description}. For deletion, the algorithm uses a very natural strategy: it remains in its current position if the ratio invariant remains satisfied, and otherwise moves to the other leaf (note that there will be at most two leaves after a deletion). For forking, the algorithm attempts to remain in its location (or, if the fork occurs there, move to one of the new leaves). If this strategy makes the ratio invariant become violated, the algorithm moves to the optimal leaf. Observe that the ratio invariant is maintained.

\subsection{Refining the potential}
Consider the tree $T$ from Fig.\ \ref{fig:three-leaves}, where without loss of generality $LT$ contains only one leaf, and $RT$ contains two leaves. In the previous version of the potential, we defined \begin{equation}
\overline{\Phi(LT)} := D_2(a \land x_3 \OPT_{RT}) = D_2\left(a \land x_3 \left(b + (c \land d) \right)\right).    \label{eq:phibar-left-old}
\end{equation} We refine this to \begin{equation}
\overline{\Phi(LT)} := D_2 \left(a \land bx_3\right) + D_1 \left[(a - b x_3) \land ((c\land d) x_3)\right]_+,
\label{eq:phibar-left}
\end{equation} where $[r]_+ := \max(r, 0)$ for any $r \in \reals$.

The intuition is as follows: Just before the last fork (which gave rise to $l_2$ and $l_3$) occurred, there were only two edges in the tree, of lengths $a' \leq a$ and $b$. Up to that moment, the algorithm had paid at most $D_2 (a' \land b x_3)$ while staying in $LT$, as $LT$ might have had up to two leaves. After the fork, the algorithm paid cost at most $D_1 ((a \land x_3 \OPT_{RT}) - a')$ while staying in $LT$, as $LT$ must have had only one leaf. Maximizing over $a' \in [0, a]$, and using $\OPT_{RT} = b + (c \land d)$, we get that the total cost paid by the algorithm while staying in $LT$ is bounded by \eqref{eq:phibar-left}.

If $RT$ contains only one leaf, we substitute $c = d = 0$ in \eqref{eq:phibar-left} to obtain the same expression for $\overline{\Phi(LT)}$ as we previously had from \eqref{eq:phibar-left-old}. We define $\Phi(T)$ and $\overline{\Phi(RT)}$ as before, using \eqref{eq:updated-potential}, \eqref{eq:opt-other-cap}, and \eqref{eq:phi-cap}. 

\begin{figure}[ht]
    \centering
\begin{tikzpicture}
\node[draw, inner sep=0pt] {} 
    child {node[draw, inner sep=0pt] {} 
        child {node {$l_1$} 
        edge from parent node [left] {$a$}}
        child {node[draw, inner sep=0pt] {} 
            child {node {$l_2$} edge from parent node [left] {$c$}}
            child {node {$l_3$} edge from parent node [right] {$d$}}
        edge from parent node [right] {$b$}}
    edge from parent node [left] {$e$}};
\end{tikzpicture}
\caption{A tree $T$ with three leaves $\{l_1, l_2, l_3\}$. The edges are labelled by their weights.}
\label{fig:three-leaves}
\end{figure}

\subsection{Analysis}
It is clear that the refined potential does not exceed the previous version of the potential, so Claims \ref{claim:bounding-the-potential} and \ref{claim:bounding-the-potential2} still hold. The previous analysis for the growth and fork operations from sections \ref{sec:growth-analysis} and \ref{sec:forking-analysis} can easily be adapted to work with the refined potential. We sketch proofs below, omitting straightforward calculations.

\begin{claim}
    We have $\Delta \cost(\ALG) \leq \Delta \Phi(T)$ for growth.
\end{claim}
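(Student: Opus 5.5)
We will reuse the structure of the growth analysis from Section~\ref{sec:growth-analysis} (Claims~\ref{claim:growth-works-easy-case}, \ref{claim:phi-bar-increases-enough} and \ref{claim:growth-works}). The only term whose definition changed is $\overline{\Phi(LT)}$ in the three-leaf configuration of Fig.~\ref{fig:three-leaves}, so we need only recheck the places where that term enters. Two facts do most of the work. First, the refined $\overline{\Phi(LT)}$ is nondecreasing in each of $a,b,c,d$: the first summand is monotone, and in the second, increasing $b$ lowers $a-bx_3$ by $x_3\,\delta$ while raising the first summand by $D_2x_3\,\delta \ge D_1x_3\,\delta$. Hence no term of $\Phi(T)$ decreases under growth, exactly as before. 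Second, the refined term is still at most the old term $D_2(a\land x_3\OPT_{RT})$, so Claims~\ref{claim:bounding-the-potential} and~\ref{claim:bounding-the-potential2} continue to hold.

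We split into cases by where $\ALG$ sits and which leaf grows. If $\ALG$ is in $RT$, or if $LT$ is the grown leaf but $\ALG$ does not leave it, the argument of Claim~\ref{claim:growth-works-easy-case} applies. When $\ALG$ is at $l_1$ and $a$ grows by $h$ with $\ALG$ staying, the ratio invariant gives $a\le x_3\OPT_{RT}=x_3(b+(c\land d))$. We check directly that $\overline{\Phi(LT)}$ increases by at least $D_1h=h$:
\begin{itemize}
\item while $a\le bx_3$, the first summand increases at rate $D_2\ge 1$;
\item for $bx_3<a\le x_3(b+(c\land d))$, the second summand is $D_1(a-bx_3)$ and increases at rate $D_1=1$.
\end{itemize}
Since $l_1$ is in the uncapped region by the invariant, this case is fine.

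When a growth triggers a switch at the top level $S=T$, the switching cost is charged to $\frac{x_3+1}{x_3-1}\overline{\OPTOTHER(T)}$ as in Claim~\ref{claim:growth-works}. The auxiliary contribution $\Delta\overline{\Phi(LT)}\ge A_{LT}-\OPT_{LT}$ is trivial (it is $0$) because $LT$ is a single edge. Switches inside $RT$ (between $l_2$ and $l_3$) involve $\overline{\Phi(RT)}$, which is defined as before, together with the growth of $\overline{\Phi(LT)}$, which is nonnegative by monotonicity.

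The main obstacle is the case where $\ALG$ is in $RT$ and the growth of $c$ or $d$ (or $b$) interacts with the capped second summand of $\overline{\Phi(LT)}$. Here we must make sure that the original proof's use of Claim~\ref{claim:not-capped} and Claim~\ref{claim:phi-bar-increases-enough} did not rely on the exact old form of $\overline{\Phi(LT)}$. Those claims concern only subtrees containing $\ALG$, so $LT$ enters only through monotonicity, which we established first. We would therefore verify that monotonicity carefully, including the $[\cdot]_+$ kink at $a=bx_3$, and then conclude that $\Delta\Phi(T)\ge\Delta\cost(\ALG)$ in all cases.
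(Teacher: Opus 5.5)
Your proposal is correct and follows essentially the same route as the paper's sketch. The paper reuses the growth analysis of Section~\ref{sec:growth-analysis}: the switching cost is paid by $\overline{\OPTOTHER(T)}$ (together with $\overline{\Phi(RT)}$ via Claim~\ref{claim:phi-bar-increases-enough}), and while $\ALG$ stays at $l_1$ the refined $\overline{\Phi(LT)}$ still grows at rate at least $D_1=1$, because the ratio invariant keeps $a-bx_3\le x_3(c\land d)$. You are actually more explicit than the paper on the two points it glosses over: Claim~\ref{claim:not-capped} no longer literally applies to $LT$, and the refined term must be monotone in $c,d$. Your monotonicity check in $b$ is unnecessary, since only leaf edges grow, but it is harmless.
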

\begin{proof}
 Suppose $T$ has three leaves, and let the edge weights be as in Fig. \ref{fig:three-leaves}. If $\ALG$ does not move after the growth operation, it is easy to see that potential term corresponding to the growing leaf increases precisely by $\Delta \cost(\ALG)$. Otherwise, if $\ALG$ moves, we can reuse the previous analysis to show that $\overline{\OPTOTHER(T)} + \overline{\Phi(RT)}$ increases enough to pay for the movement. If $T$ has one or two leaves, we obtain the same expression for $\Phi(T)$ as in Section \ref{sec:updated-potential}, and therefore we can reuse the previous analysis.
\end{proof}

\begin{claim}
    We have $\Delta \cost(\ALG) \leq \Delta \Phi(T)$ for fork.
\end{claim}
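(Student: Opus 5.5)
The plan is a case analysis on the number of leaves before the fork, which is at most two since the width is at most $3$ after the fork. When the fork happens away from $\ALG$, we compare the potential before and after, and pay for any movement triggered by a violated ratio invariant using Claim~\ref{claim:bounding-the-potential2}. All newly created edges have length $0$, so any increase in the potential comes from the changed structure: extra levels, and the refined term \eqref{eq:phibar-left}. We only need to show that the potential does not decrease, and that whenever $\ALG$ moves, the potential rises by at least the movement cost.

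\textbf{Case 1: the fork does not increase the depth} (so $k$ and all levels are unchanged). Before the fork the tree has at most two leaves. Three situations can arise.
\begin{itemize}
\item A trivial subtree at level $1$ is forked. This is impossible at depth $k$ without increasing the depth, so this situation does not occur.
\item One leaf of a two-leaf tree is forked, but the depth was already $3$. This is also impossible.
\item The only other option is a single-leaf tree, $k\ge 2$ is already recorded, and a leaf is forked into two zero edges at level $k-1$. The new terms are $D_{k-1}\cdot 0$ and $\OPTOTHER=0$, so $\Phi$ is unchanged, $\ALG$ stays put (or moves a distance $0$), and the invariant holds trivially since all new $\OPT$ values coincide with the old ones.
\end{itemize}

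\textbf{Case 2: the fork increases $k$.} All subtrees are promoted and the coefficients change from $D_{i}$ to $D_{i+1}$. I would set this up in the notation of Fig.~\ref{fig:three-leaves}. Say the tree before the fork has edges $e$, $a$, $b$ with $\ALG$ at $l_1$ or at the $b$-leaf, and the $b$-leaf is forked into $c=d=0$; the case where the $a$-leaf is forked is symmetric after relabelling. In this configuration $\OPT_{RT}=b$ is unchanged.
\begin{itemize}
\item If $\ALG$ stays put, compare $\Phi_2$ on the old tree with $\Phi_3$ on the new tree. The $e$-term rises from $D_2 e$ to $D_3 e$. The old cap $D_1 x_2(\cdot)$ becomes, via \eqref{eq:phibar-left} with $c=d=0$, the term $D_2(a\land bx_3)$. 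The $\OPTOTHER$ coefficient changes from $\frac{x_2+1}{x_2-1}$ to $\frac{x_3+1}{x_3-1}$, with caps $x_2$ versus $x_3$.
\item The old invariant gives $a\le x_2 b$ (or $b\le x_2 a$). If the new invariant with $x_3$ still holds, every cap is inactive. The increase is then $(D_3-D_2)e+(D_2-D_1)(a+b)$ minus the loss $\bigl(\frac{x_2+1}{x_2-1}-\frac{x_3+1}{x_3-1}\bigr)\OPTOTHER$. Since $\OPTOTHER\le x_3\min(a,b)$ and $D_2-D_1=8$ is large, this is nonnegative by direct numerical check.
\item If the new invariant is violated, $\ALG$ moves to the optimal leaf. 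We then use Claim~\ref{claim:bounding-the-potential2} on the old tree, $\Phi_2(T)+\dist\le D_2\OPT_T$, and show the new tree is extreme, i.e.\ $\Phi_3(T^{new})\ge D_3\OPT_T - (\text{small})$.
\end{itemize}
The new tree is close to extreme because the violated ratio means $\OPT$ of the non-$\ALG$ side exceeds $x_3$ times the other. That saturates both caps, so $\overline{\OPTOTHER}=x_3\min$ and $\overline{\Phi}$ equals $D_2x_3\min$, which is exactly the extreme computation in Claim~\ref{claim:extreme-procedure}. This time it happens without rescaling, because the imbalance is already present.

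\textbf{Main obstacle.} I expect the hardest step to be the non-violated promoted subcase. The old potential can carry $\overline{\Phi(LT)}$ built from $D_1$-contributions (the refined $D_1[\cdot]_+$ part), and these must be re-expressed consistently after promotion. The careful point is to check that the refined term \eqref{eq:phibar-left} is monotone under the fork. If the direct computation gives the wrong sign, I would fall back on the same device as in Section~\ref{sec:forking-analysis}: show the new tree is extreme, or dominate $\Phi_2$ by $D_2\OPT$ and verify $\Phi_3(T^{new})\ge D_2\OPT_T+\dist$ using $D_3\ge 2D_2+\dots$.
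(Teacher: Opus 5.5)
Your treatment of the decisive case is the paper's argument. When the fork raises $k$ from $2$ to $3$ and the side containing $\ALG$ has optimum more than $x_3$ times the other, $\ALG$ moves to $l_{\OPT_T}$. The new tree $T'$ is then exactly extreme, not just approximately: $\Phi_3(T')=D_3\OPT_{T'}=D_3\OPT_T$. Claim~\ref{claim:bounding-the-potential2} applied to the old tree pays for the move.

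The trouble is in the promoted case without violation, where the step ``every cap is inactive'' is false. The ratio invariant constrains only the side containing $\ALG$. So with $\ALG$ at $l_1$ you may have $b>x_3a$, and then $\overline{\OPTOTHER(T')}=x_3a$ and $\overline{\Phi_2(RT')}=D_2x_3a$ are both capped. The coefficient change is also not a loss, since $\frac{x_3+1}{x_3-1}=1+2\sqrt5>3=\frac{x_2+1}{x_2-1}$. What can shrink is the cap, which tightens from $x_2(a\land b)$ to $x_3(a\land b)$. Your bound $\OPTOTHER\le x_3\min(a,b)$ therefore says nothing about the pre-fork term. The actual change, for $\ALG$ at $l_1$, is
\[
(D_3-D_2)e+(D_2-D_1)a+\Bigl(\tfrac{x_3+1}{x_3-1}+D_2\Bigr)(b\land x_3a)-\Bigl(\tfrac{x_2+1}{x_2-1}+D_1\Bigr)(b\land x_2a).
\]
This is nonnegative because $b\land x_2a\le\frac{x_2}{x_3}(b\land x_3a)$ and $4x_2/x_3\approx 5.5<10+2\sqrt5$. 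The case with $\ALG$ at the forked leaf is the same with $a$ and $b$ swapped.

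Separately, your Case~1 declares impossible a case that does occur. Since $k$ is the maximal depth seen so far, after a deletion the tree can have two leaves while $k=3$. Forking one of them leaves $k$ and all levels unchanged. This case is easy: with $c=d=0$ the bracket in \eqref{eq:phibar-left} vanishes and $\Phi_2(RT')=D_2b$, so $\Phi$ does not change. The same observation removes your ``main obstacle''. Before any fork the tree has at most two leaves, so the refined term coincides with the unrefined one. Immediately after a fork its $D_1[\cdot]_+$ part is $0$. The transition $k\colon1\to2$, which you omit, is trivial ($D_1e\to D_2e$).
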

\begin{proof}
Recall that the algorithm remains in its location (or, if the fork occurs there, moves to one of the new leaves), provided that this does not violate the ratio invariant. Notice that the only scenario in which this strategy would violate the ratio invariant is if the fork triggers an increment of $k$ from $2$ to $3$, the ratio between $w_{e(LT)}$ and $w_{e(RT)}$ is more than $x_3$, and the algorithm is located in the non-optimal leaf $l_\ALG$ of $T$. In this case, the tree $T'$ obtained after the operation is extreme, so we have $\Phi_3(T') = D_3 \OPT_{T'}$. On the other hand, by Claim \ref{claim:bounding-the-potential2}, we have $\Phi_2(T) + \dist(l_\ALG, \OPT_T) \leq D_2 \OPT_T$, so we obtain $\Phi_3(T') - \Phi_2(T) \geq \Delta \cost(\ALG)$, as desired. Otherwise, if the ratio invariant is not violated, $\Delta \Phi(T) \geq 0 = \Delta \cost(\ALG)$.
\end{proof}

\begin{figure}[h]
\centering
\begin{subfigure}[t]{0.49\textwidth}
\begin{tikzpicture}
\node[draw, inner sep=0pt] (x) at (0,0) {} 
    child {node[draw, inner sep=0pt] {} 
        child {node {$l_1$} 
        edge from parent node [left] {$a$}}
        child {node[draw, inner sep=0pt] {} 
            child {node[text width=1cm,align=center]  {$l_2$ \\ $\ALG$} edge from parent node [left] {$c$}}
            child {node[text width=1cm,align=center]  {$l_3$ \break} edge from parent node [right] {$d$}}
        edge from parent node [right] {$b$}}
    edge from parent node [left] {$e$}};

\node[draw, inner sep=0pt] (y) at (3.5, 0) {} 
    child {node[draw, inner sep=0pt] at (0, -1.5) {} 
        child {node[text width=1cm,align=center]  {$l_2$ \\ $\ALG$}
        edge from parent node [left] {$c$}}
        child {node[text width=1cm,align=center]  {$l_3$ \break}
        edge from parent node [right] {$d$}}
    edge from parent node [left] {$e+b$}};

\draw[->, thick] (1, -1) -- (2, -1) node[midway, above, scale=0.8] {delete};

\end{tikzpicture}
\caption{$l_1$ is deleted and $\ALG$ stays in $l_2$.}
\end{subfigure}
\begin{subfigure}[t]{0.49\textwidth}
\begin{tikzpicture}
\node[draw, inner sep=0pt] (x) at (0,0) {} 
    child {node[draw, inner sep=0pt] {} 
        child {node {$l_1$}
        edge from parent node [left] {$a$}}
        child {node[draw, inner sep=0pt] {} 
            child {node[text width=1cm,align=center]  {$l_2$ \\ $\ALG$} edge from parent node [left] {$c$}}
            child {node[text width=1cm,align=center]  {$l_3$ \break} edge from parent node [right] {$d$}}
        edge from parent node [right] {$b$}}
    edge from parent node [left] {$e$}};

\node[draw, inner sep=0pt] (y) at (3.5, 0) {} 
    child {node[draw, inner sep=0pt] at (0, -1.5) {} 
        child {node[text width=1cm,align=center]  {$l_2$ \break} 
        edge from parent node [left] {$c$}}
        child {node[text width=1cm,align=center]  {$l_3$ \\ $\ALG$} 
        edge from parent node [right] {$d$}}
    edge from parent node [left] {$e+b$}};

\draw[->, thick] (1, -1) -- (2, -1) node[midway, above, scale=0.8] {delete};
\end{tikzpicture}

\caption{$l_1$ is deleted and $\ALG$ moves from $l_2$ to $l_3$.}
\end{subfigure}
\begin{subfigure}[t]{0.49\textwidth}
\begin{tikzpicture}
\node[draw, inner sep=0pt] (x) at (0,0) {} 
    child {node[draw, inner sep=0pt] {} 
        child {node {$l_1$}
        edge from parent node [left] {$a$}}
        child {node[draw, inner sep=0pt] {} 
            child {node[text width=1cm,align=center]  {$l_2$ \break} edge from parent node [left] {$c$}}
            child {node[text width=1cm,align=center] {$l_3$ \\ $\ALG$}  edge from parent node [right] {$d$}}
        edge from parent node [right] {$b$}}
    edge from parent node [left] {$e$}};

\node[draw, inner sep=0pt] (y) at (3.5,0) {}
    child {node[draw, inner sep=0pt] {} 
        child {node[text width=1cm,align=center]  {$l_1$ \break}
        edge from parent node [left] {$a$}}
        child {node[text width=1cm,align=center] at (0.5, -1.5)  {$l_3$ \\ $\ALG$} 
        edge from parent node [right] {$b+d$}}
    edge from parent node [left] {$e$}};

\draw[->, thick] (1, -1) -- (2, -1) node[midway, above, scale=0.8] {delete};
\end{tikzpicture}
\caption{$l_2$ is deleted, and $\ALG$ stays in its current location (either $l_1$ or $l_3$).}
\end{subfigure}

\caption{Deletion cases. $\ALG$ represents the position of the algorithm and the labels next to the edges indicate weights.}
\label{fig:deletion-for-three-leaves}
\end{figure}

\begin{claim}
    We have $\Delta \cost(\ALG) \leq \Delta \Phi(T)$ for deletion.
\end{claim}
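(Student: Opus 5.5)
Since at most three leaves exist, a deletion acts on a tree of the shape in Fig.~\ref{fig:three-leaves} (or on a tree with at most two leaves, where the potential coincides with that of Section~\ref{sec:updated-potential}; there deletion leaves a single edge, and I would use Claim~\ref{claim:bounding-the-potential2} together with $\Phi_j$ of a trivial tree being $D_j$ times its length). As in Section~\ref{sec:deletion-analysis}, I first pretend the deleted leaf grows to a large length before deletion. This growth is handled by the growth claim above, so it is free. Afterwards the deleted leaf is neither $\ALG$'s position nor optimal in its parent subtree. I then split according to Fig.~\ref{fig:deletion-for-three-leaves}: either $l_1$ is deleted (cases (a),(b)), or $l_2$, and by symmetry $l_3$, is deleted (case (c)).

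\textbf{Case $l_1$ deleted.} The new tree is a two-leaf tree with stem $e+b$ and leaves $c,d$ at level $2$ under the root of level $3$, that is, children at level $2$ promoted from level $1$. Its potential is $\Phi'=D_3(e+b)+D_2(c\land d)$-type terms. Concretely, $\Phi'=D_3(e+b)+\frac{x_2+1}{x_2-1}\overline{\OPTOTHER}+D_1(\overline{c}+\overline{d})$, with caps at level $2$ now.

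The old potential contains $D_3e+\frac{x_3+1}{x_3-1}\overline{\OPTOTHER(T)}+\overline{\Phi(LT)}+\overline{\Phi(RT)}$, where $\ALG\in RT$ and $\overline{\OPTOTHER(T)}\le a\land x_3\OPT_{RT}$. Since $a$ is huge, the refined cap gives $\overline{\Phi(LT)}=D_2bx_3+D_1(c\land d)x_3$. So the old potential is essentially the extreme value $D_3\OPT_T$ minus the slack inside $RT$. Using Lemma~\ref{lem:DkViaxk} for $D_3$ and for $D_2$, the new potential should dominate the old one plus movement cost.

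If the level-$2$ ratio invariant $x_2$ still holds, $\ALG$ stays and the cost is zero. Here I must check $\Phi'\ge\Phi$, comparing the coefficients of $b$ and of $c\land d$ separately. The key point is that the refinement \eqref{eq:phibar-left} made $\overline{\Phi(LT)}$ grow only at rate $D_1x_3$ in $c\land d$, which is exactly what promotion to $D_2$ recovers. If the invariant fails, $\ALG$ moves from $l_2$ to $l_3$ at cost $c+d$, paid by the jump of $\overline{\OPTOTHER}$ to its cap, exactly as in Claim~\ref{claim:growth-works-easy-case}.

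\textbf{Case $l_2$ deleted.} The new tree has edges $e$, $a$ and $b+d$ at the same levels. $\ALG$ is at $l_1$ or $l_3$ and stays there; the invariant $\OPT$ ratio at the root is unchanged or only relaxed, because $\OPT_{RT}$ becomes $b+d\ge b+(c\land d)$. The $l_3$ edge is promoted from level $1$ to level $2$, so its term becomes $D_2(b+d)$ instead of $D_2b+D_1d+\dots$. This only gains, because removing $\OPTOTHER(RT)$ loses at most $\frac{x_2+1}{x_2-1}(c\land d)$ while $\overline{\Phi(RT)}$ gains $(D_2-D_1)d$ and also loses $D_1 c$.

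The $\overline{\Phi(LT)}$ term changes from \eqref{eq:phibar-left} to $D_2(a\land x_3(b+d))$, and I would verify that this is non-decreasing. This follows because $D_2\ge D_1$, and because the first minimum already absorbed $bx_3$.

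The \textbf{main obstacle} is the case-(a) inequality when $c$, the edge to $\ALG$, is the larger child, and the case-(c) loss of $D_1c+\frac{x_2+1}{x_2-1}(c\land d)$ when $c\ne d$. For both I would first try bounding the lost terms by $D_2\cdot\OPT_{RT}$-type slack via Claim~\ref{claim:bounding-the-potential}, and then close the remaining gap by a direct numeric check using $x_2=\tfrac{3}{2}$-like values: $x_2=2$, $D_2=9$, $x_3=1+\sqrt{1/5}$.
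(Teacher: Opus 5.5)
\textbf{There is a genuine gap in the case where $l_1$ is deleted.} Since $k$ is the maximal depth reached so far, it stays $3$. So $T'$ is a level-$3$ tree whose two children (edges $c$ and $d$) are trivial level-$2$ subtrees, and
\[
\Phi(T') = D_3(e+b) + \frac{x_3+1}{x_3-1}\,\overline{\OPTOTHER(T')} + \overline{D_2 c} + \overline{D_2 d},
\]
with caps at ratio $x_3$. The ratio invariant that decides whether $\ALG$ moves is the level-$3$ one, with threshold $x_3$.

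Your ``concrete'' $\Phi'$ is instead the level-$2$ potential of $RT$: coefficient $\frac{x_2+1}{x_2-1}$, terms $D_1(\overline{c}+\overline{d})$, caps at $x_2$. This contradicts your own level count, and with it the claim is false. For $a$ large, the $b$-coefficient of $\Phi(T)$ is $D_3$, and the remaining terms of $\Phi(RT)$ reappear unchanged in your $\Phi'$. Hence $\Phi'-\Phi(T) = -\bigl(D_1x_3+\frac{x_3(x_3+1)}{x_3-1}\bigr)(c\land d)<0$.

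Your $x_2$ threshold is also wrong. Since $c$ and $d$ are untouched and $c\le x_2 d$ held before the deletion, this threshold can never be violated. So your move case is vacuous, and $\ALG$ would stay at $l_2$ when $x_3 d<c\le x_2 d$, breaking the ratio invariant that the rest of the analysis relies on.

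\textbf{What the $l_1$ case actually needs.} For the move case, the argument ``the jump of $\overline{\OPTOTHER}$, exactly as in Claim~\ref{claim:growth-works-easy-case}'' does not transfer. That argument needs the same $x_i$ before and after the operation and all other terms non-decreasing, whereas here every term of $\Phi$ changes. The missing idea is extremeness.
\begin{itemize}
\item If $c>x_3 d$, then after $\ALG$ moves to $l_3$ the tree $T'$ is extreme. So $\Phi(T')=D_3\OPT_{T'}=D_3\OPT_T\ge \Phi(T)+\dist_T(l_2,l_3)$ by Claim~\ref{claim:bounding-the-potential2}.
\item Symmetrically, if $d\ge x_3 c$, then $T'$ is extreme with $\ALG$ staying at its optimal leaf, and Claim~\ref{claim:bounding-the-potential} suffices.
\item In the remaining case $c\le x_3 d$ and $d<x_3 c$, no cap binds in $T'$. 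What you call the main obstacle is resolved by bounding $c\land d\le d$, not by a numeric check on the coefficients of $b$ and $c\land d$. This gives
\[
\Phi(T')-\Phi(T)\ge (D_2-D_1)c+\Bigl(D_2-D_1(1+x_3)-\frac{x_2+1}{x_2-1}-(x_3+1)\Bigr)d .
\]
The bracket is at least its value with $x_2$ in place of $x_3$, because $x_3<x_2$. That value vanishes by Lemma~\ref{lem:DkViaxk}. This also covers the case where $c$ is the larger child.
\end{itemize}

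\textbf{The other cases are fine.} Your two-leaf case is correct. In case (c), your fallback $\Phi(RT')=D_2\OPT_{RT}\ge\Phi(RT)$ via Claim~\ref{claim:bounding-the-potential} is exactly the paper's argument. Your direct accounting there undercounts the capped losses, which are $\frac{x_2(x_2+1)}{x_2-1}d+D_1x_2 d$ and cancel $(D_2-D_1)d$ exactly. Your check that $\overline{\Phi(LT)}$ does not decrease when passing from \eqref{eq:phibar-left} to $D_2(a\land x_3(b+d))$ fills in a step the paper leaves implicit.
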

\begin{proof}
We assume without loss of generality that the edge adjacent to the leaf to be deleted is large enough (otherwise we could grow that edge before doing the deletion). By the ratio invariant, this ensures that $\ALG$ is not located at that leaf. Let $T'$ be the tree obtained after deletion. If $T$ has only two leaves before deletion, $\ALG$ does not move and hence $\Delta \cost(\ALG) = 0$, and $\Phi(T) \leq \Phi(T') = D_3 \OPT_{T}$. Thus, assume that $T$ has three leaves before deletion. We split our analysis into several cases, as shown in Fig. \ref{fig:deletion-for-three-leaves}.

    \paragraph{(a): $l_1$ is deleted and $\ALG$ does not move.}
    Since $\ALG$ does not move, we have $\Delta \cost(\ALG) = 0$. Suppose without loss of generality that $\ALG$ is initially in $l_2$. If $d \geq x_3 c$, then $T'$ is extreme, so $\Phi(T') = D_3 \OPT_{T'} = D_3 \OPT_T$. By Claim \ref{claim:bounding-the-potential}, $\Phi(T) \leq D_3 \OPT_{T}$, so $\Delta \Phi(T) \geq 0$ and we are done. Thus, suppose that $d < x_3 c$. Since $\ALG$ does not move to $l_3$ after deletion, we also have $c \leq x_3 d$. Note that $\Phi(T)$ is increasing in $a$ and the maximum is achieved when $a \geq (b + (c\land d))x_3$, so 
    \begin{align*}
    \Phi(T) &\leq D_3 e + D_2 bx_3 + D_1 (c \land d) x_3 + D_2 b + D_1 c + D_1 d + \frac{x_3+1}{x_3-1} (b + (c\land d))x_3 + \frac{x_2+1}{x_2-1} d.
    \end{align*}

    Note that $b$ has coefficient $(x_3+1)D_2 + \frac{x_3(x_3+1)}{x_3-1} = D_3$, so the expression simplifies to
    \begin{align*}
    \Phi(T) &\leq D_3 (e+b) + D_1 x_3 (c \land d) + D_1 c + D_1 d + \frac{x_3(x_3+1)}{x_3-1} (c\land d) + \frac{x_2+1}{x_2-1} d.
    \end{align*}

    After deletion,
    \[\Phi(T') = D_3 (e+b) + D_2 c + D_2 d + \frac{x_3+1}{x_3-1} d.\]

Since $c \land d \leq d$, 
\begin{align*}
    \Phi(T') - \Phi(T) &\geq (D_2 - D_1) c +  \left(D_2 - D_1 + \frac{x_3+1}{x_3-1} - \frac{x_2+1}{x_2-1} - D_1 x_3 - \frac{x_3(x_3+1)}{x_3-1}\right) d \\
    &= (D_2 - D_1) c +  \left(D_2 - D_1 (1+x_3) - \frac{x_2+1}{x_2-1} - (x_3+1)\right) d \\
    &\geq (D_2 - D_1) c +  \left(D_2 - D_1 (1+x_2) - \frac{x_2+1}{x_2-1} - (x_2+1)\right) d \\
    &= (D_2 - D_1) c \\
    &\geq 0 = \Delta \cost(\ALG).
\end{align*}

In the last two steps, we used the fact that $x_3 < x_2$ and we plugged in the formula for $D_2$, which makes the term involving $d$ vanish.

  \paragraph{(b): $l_1$ is deleted and $\ALG$ moves.}
    Suppose without loss of generality that $\ALG$ was in $l_2$ and it moves to $l_3$ after deletion, paying cost $\Delta \cost(\ALG) = \dist_T(l_2, l_3)$. The move is triggered because the ratio invariant becomes unsatisfied, so we have $c > x_3 d$, which means that $T'$ is extreme. Therefore, \[\Phi(T') = D_3 \OPT_{T'} = D_3 \OPT_T.\] By Claim \ref{claim:bounding-the-potential2}, \[\Phi(T) + \dist_T(l_2, l_3) \leq D_3 \OPT_T.\] Therefore, $\Delta \cost(\ALG) \leq \Phi(T') - \Phi(T)$.

\paragraph{(c): $l_2$ is deleted.}
First note that $\OPT_{RT'} = \OPT_{RT} = b + d$, since we assumed that the leaf $l_2$ grows by a sufficiently large amount before it is deleted, so $c>d$. Since $\OPT_{LT'} = \OPT_{LT}$ and $\OPT_{RT'} = \OPT_{RT}$, it is clear that the ratio invariant remains satisfied after deletion, so $\ALG$ does not move and $\Delta \cost(\ALG) = 0$. Also, $\Phi(RT') = D_2 \OPT_{RT} \geq \Phi(RT)$ by Claim \ref{claim:bounding-the-potential}, so $\Phi(T') \geq \Phi(T)$.    

The case when $l_3$ is deleted is symmetrical to the above.
\end{proof}

\section{\texorpdfstring{Why a tight bound of $D_k$ for width $k\geq 4$ is difficult}{Why a tight bound of Dₖ for width k is difficult}}\label{app:towards-D4}

In this section, we consider a natural extension of the approach used in Appendix \ref{app:D3} to prove an upper bound of $D_3$ for width $3$, and show that this fails for $k \geq 4$. We hope that this provides some intuition about the difficulties that arise in the general case. We consider the same algorithm, which maintains the ratio invariant. As in the case for width $k=3$, we seek to identify subtrees for which the algorithm must have paid less than suggested by the level of the subtree, and refine the potential accordingly. 

Consider the example for width $k=4$ in Fig. \ref{fig:deletion-examples-width4}. After the fork in $RT$ happens, the number of possible leaves in $LT$ is limited to $2$. Since the algorithm maintains the ratio invariant, it does not reach $LLT$ or $RLT$ before the fork in $RT$ happens, and therefore the algorithm visits $LLT$ and $RLT$ only when they are trivial subtrees. Thus, the potentials corresponding to
$LLT$ and $RLT$ should be refined to use a rate of $D_1$ rather than $D_2$. With this refinement, we obtain 
\[
\Phi(LT) = D_3 ax_4 + D_1 b + D_1 bx_3 + \frac{x_3+1}{x_3-1} bx_3. 
\]
After deleting leaf $l$, the refined potential becomes
\[
\Phi(LT') = D_3 ax_4 + D_2 b.
\]
We have
\begin{align*}
\Delta \Phi(LT) &= \Phi(LT') - \Phi(LT) 
= \left(D_2 - (x_3+1) D_1  - \frac{x_3(x_3+1)}{x_3-1}\right)b < 0.
\end{align*}

Observe that the potential decreases after a deletion operation, which means that this refined potential cannot be used to prove a tight upper bound of $D_4$. Intuitively, the problem is that refining $\Phi(LT')$ by noting that the algorithm’s cost increases at rate $D_2$ on the segment of length $b$, rather than at rate $D_3$, makes $\Phi(LT')$ too small. The corresponding refinement of $\Phi(LT)$ does not reduce $\Phi(LT)$ sufficiently to offset this effect and avoid a decrease of the potential.

\begin{figure}[h]
\centering
\begin{subfigure}[ht]{0.49\textwidth}
\centering
\begin{tikzpicture}
\node[draw, inner sep=0pt] {}
    child { 
        node[draw, inner sep=0pt] {}
    child {
        node[draw, inner sep=0pt] {}
            child {
                node {$\ALG$}
                edge from parent node[left] {$b$}
            }
            child {
                node {$l\quad$}
                edge from parent node[right] {$bx_3$}
            }
        edge from parent node[left] {$ax_4$}
    }
    child {
        node[draw, inner sep=0pt] {}
            child {
                node {}
                edge from parent node[right] {$c$}
            }
            child {
                node {}
                edge from parent node[right] {$d$}
            }
        edge from parent node[right] {$a$}
    }
    edge from parent node[left] {$0$}
    };
\end{tikzpicture}
\caption{Tree $T$.}
\label{fig:width4}
\end{subfigure}
\begin{subfigure}[ht]{0.49\textwidth}
\centering
\begin{tikzpicture}
\node[draw, inner sep=0pt] {} 
    child {node[draw, inner sep=0pt] {} 
        child {node {$\ALG$} 
        edge from parent node [left] {$ax_4+b$}}
        child {node[draw, inner sep=0pt] {} 
            child {node {} edge from parent node [left] {$c$}}
            child {node {} edge from parent node [right] {$d$}}
        edge from parent node [right] {$a$}}
    edge from parent node [left] {$0$}};
\end{tikzpicture}
\caption{Tree $T'$ obtained when deleting leaf $l$ from $T$.}
\label{fig:width4-deletion}
\end{subfigure}
\caption{Deletion example for width $k=4$. $\ALG$ represents the position of the algorithm and the labels next to the edges indicate weights.}
\label{fig:deletion-examples-width4}
\end{figure}

\section{\texorpdfstring{Proof that $D_k = O(2^k)$}{Proof that Dₖ = O(2ᵏ)}}
\label{sec:bounding-D}
\begin{claim}
For $k\geq 2$, it holds that 
\[D_k \leq 2^{k+4} - \sqrt{2^{k+9}}.\]
\end{claim}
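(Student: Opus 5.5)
The plan is to prove the bound by induction on $k$, using the recursion \eqref{eq:DkViaDk-1} directly. Write $f(k) := 2^{k+4} - 2^{(k+9)/2}$, so the claim reads $D_k \le f(k)$. For the base case $k=2$, we have $D_2 = 2\cdot 1 + \sqrt{16} + 3 = 9$. On the other side, $f(2) = 64 - 2^{5.5} > 64 - 46 = 18$, so $D_2 \le f(2)$.

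For the inductive step, assume $D_k \le f(k)$ for some $k\ge 2$. The map $D \mapsto 2D + \sqrt{8+8D} + 3$ is increasing in $D$, so
\[
D_{k+1} \le 2f(k) + \sqrt{8+8f(k)} + 3 .
\]
Now $2f(k) = 2^{k+5} - 2^{(k+11)/2}$ and $f(k+1) = 2^{k+5} - 2^{(k+10)/2}$. It therefore suffices to show
\[
\sqrt{8+8f(k)} + 3 \le 2^{(k+11)/2} - 2^{(k+10)/2} = (4-2\sqrt 2)\, 2^{(k+7)/2}.
\]

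To check this, set $s := 2^{(k+7)/2}$. We bound $8f(k) \le 2^{k+7} = s^2$, and then $\sqrt{8+s^2} \le s + 4/s$, which follows from $(s+4/s)^2 \ge s^2+8$. So it suffices to verify
\[
(3 - 2\sqrt2)\, s \ge 3 + \frac{4}{s}.
\]
The left side increases in $s$ and the right side decreases, so it is enough to check the smallest case $k=2$. There $s = 2^{4.5} \approx 22.6$, the left side is about $0.1716\cdot 22.6 \approx 3.88$, and the right side is about $3 + 0.18 = 3.18$. The inequality holds, which completes the induction.

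The main obstacle is choosing the constants so that the leading-order terms cancel exactly and a strictly positive margin of order $2^{k/2}$ remains. Here the doubling of the $-2^{(k+9)/2}$ term produces the slack $(\sqrt2 - 1)2^{(k+10)/2}$, and this slack must absorb the $\sqrt{8D_k} \approx 2^{(k+7)/2}$ contribution. That works only because $2^{3/2}(\sqrt2-1) > 1$. The rest is routine arithmetic, and $D_k = O(2^k)$ follows immediately.
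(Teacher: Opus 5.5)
Your proof is correct and follows essentially the same route as the paper: induction from the base case $D_2=9$, using the recursion to reduce the step to $(3-2\sqrt2)\sqrt{2^{k+7}}\ge 3$. The only difference is the $+8$ under the square root. The paper absorbs it directly via $1+D_k\le 2^{k+4}$, whereas you bound $\sqrt{s^2+8}\le s+4/s$ and carry the extra $4/s$ term, which is equally valid.
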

\begin{proof}
By induction. The base case $k=2$ holds because $D_2 = 9 < 2^6 - \sqrt{2^{11}}$. For the inductive step, we plug in the formula for $D_{k+1}$ and use the inductive hypothesis to obtain
\begin{align*}
    D_{k+1} &= 2D_k + \sqrt{8(1+D_k)} + 3 \leq 2^{k+5} - 2\sqrt{2^{k+9}} + \sqrt{8 \cdot 2^{k+4}} + 3.
\end{align*}
To complete the proof, it suffices to show that 
\[
2\sqrt{2^{k+9}} \geq \sqrt{8 \cdot 2^{k+4}} + 3 + \sqrt{2^{k+10}}.
\]
This is equivalent to 
\[
\sqrt{2^{k+7}} (3 - 2 \sqrt{2}) \geq 3,
\]
which holds for all $k \geq 2$.
\end{proof}

\section{Lower bound for deterministic algorithms}\label{app:LB}

We now give a rigorous proof of Theorem~\ref{th:lower-bound}, expanding on the sketch given in Section~\ref{sec:LB}.

\paragraph{Proof outline.} We show that for every $k\in\mathbb N$, every $\epsilon > 0$, there exists $\delta\in(0,1]$ such that for every $L\ge 0$ and every deterministic online algorithm $A$, there exists a layered graph traversal instance of width $k$ with a single node in the last layer at some distance $\ell\in[\delta\cdot L, L]$ from the source, and where $A$ moves at least distance $(D_k-\epsilon)\cdot\ell$.

The proof is by induction on $k$. For $k\ge 2$, the constructed tree has of two branches, and we always extend the branch where the algorithm is currently located with an instance of width $k-1$ as implied by the induction hypothesis, while the other branch is extended with edges of length $0$. The upper bound $L$ on the length of the instances allows us to ensure that if the algorithm switches to the other branch in the middle of a recursive instance, the cost saved through this is negligible compared to the cost of switching to the other branch. On the other hand, the lower bound $\delta\cdot L$ on the length of recursive instances allows us to create instances of arbitrary longer length by concatenating finitely many shorter instances.

\paragraph{Base of induction.} The base of induction $k=1$ is trivial, using a single edge of length $L$.

\paragraph{Induction step.} For $k\ge 2$, we construct a hard instance as follows. Let $1\gg\epsilon'\gg\delta>0$
be small constants, as determined later, with $1/\epsilon'\in\mathbb N$. Let $\delta'\in(0,1]$ be the constant induced by the induction hypothesis applied to $k-1$ and $\epsilon'$. Let $L'=L\delta\epsilon'$.

Layer $0$ contains only the source, and layer $1$ contains two nodes, each connected to the source with an edge of length $\delta L$. These two nodes will constitute the initial nodes of two branches. At any time, we call the \emph{length} of a branch the length of a shortest path from the source to the most recent layer of the branch. 

We construct the instance in phases, where each phase has the following structure: At the start of the phase, the current layer contains only a single node per branch. We call the branch where the algorithm is located at the start of the phase \emph{active}, and the other branch \emph{passive}. During the phase, each new layer contains a single node on the passive branch, connected to the previous node with an edge of length $0$. More importantly, each layer during the phase contains up to $k-1$ nodes on the active branch. To specify the nodes on the active branch, consider the algorithm $A'$ for instances of width $k-1$ induced by the behavior of $A$ on the active branch during the phase. Note that $A'$ may not be well-defined since $A$ might switch to the passive branch at some point. If this happens, we define the subsequent behavior of $A'$ arbitrarily. By the induction hypothesis applied to $k-1$, $\epsilon'$, $L'$ and $A'$, we can choose nodes and edges on the active branch such that the length of the active branch increases by some $\ell\in[\delta'L',L']$ during the phase, and $A$ pays at least $(D_{k-1}-\epsilon')\ell$ if it stays in the active branch during the phase. 

We call a \emph{super-phase} a maximal consecutive sequence of phases during which the same branch is active. Denote by $\opt_t$ the length of the \emph{passive} branch during super-phase $t$; this quantity is constant during the super-phase, since the passive branch is only extended with $0$-length edges. Moreover, let $\opt_0=\delta L$ be the length of the branches before the first super-phase. If $\opt_t\le L$ and the active branch grows to length at least $D_k\opt_t$ during super-phase $t$, then we can easily finish the instance by adding one more layer with a single node connected via a length-$0$ edge to the last node of the passive branch. Thus, we may assume that
\begin{align}
    \forall t\colon \opt_t\le L\implies \opt_{t+1}< D_k \cdot\opt_t.\label{eq:slowGrowthBeforeL}
\end{align}
Let
\begin{align*}
    t_0 = \min\{t\ge 1\mid \opt_t\ge \delta L/\epsilon'\text{ for all }t\ge t_0\}
\end{align*}
be the identifier of the first super-phase after both branches reached length $\delta L/\epsilon'$. The following claim establishes lower and upper bounds on $t_0$. The precise bounds are not important to us, but we will need that they are unbounded increasing functions of $1/\epsilon'$ and independent of $\delta$ (except $\delta$ needs to be sufficiently small).
\begin{claim}\label{cl:t0Bounds}
    If $\delta$ is sufficiently small, then $\frac{\log\frac{1}{\epsilon'}}{\log D_k}< t_0\le \frac{1+D_k}{\delta'(\epsilon')^2}$.
\end{claim}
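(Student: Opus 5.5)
The plan is to prove the two inequalities separately. Both use only two facts about the construction: the ratio bound \eqref{eq:slowGrowthBeforeL}, which controls how fast $\opt_t$ can grow per super-phase, and the fact that every phase adds at least $\delta'L'=\delta'\delta\epsilon' L$ to the active branch.

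\emph{Lower bound.} I use the growth bound. Before any super-phase, both branches have length $\delta L$, so $\opt_0=\delta L$. The passive length $\opt_{t+1}$ of super-phase $t+1$ equals the length the active branch reached during super-phase $t$. Hence \eqref{eq:slowGrowthBeforeL} gives $\opt_{t+1}<D_k\opt_t$ as long as $\opt_t\le L$. If $\delta$ is small enough that $\delta L/\epsilon'\le L$, that is, $\delta\le\epsilon'$, then induction gives $\opt_t<D_k^t\delta L$ for every $t$ up to the first time $\opt_t$ exceeds $L$. The definition of $t_0$ requires $\opt_{t_0}\ge \delta L/\epsilon'$, so $D_k^{t_0}\delta L>\delta L/\epsilon'$, which gives $t_0>\log(1/\epsilon')/\log D_k$. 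One step needs care here. The definition of $t_0$ quantifies over all $t\ge t_0$, so I have to check that the index I bound is exactly the first one after which $\opt_t$ stays above the threshold. Since the branch lengths never decrease and both branches eventually get extended, $\opt_t$ is nondecreasing in $t$ every two steps. I will use this monotonicity to identify $t_0$ with the first index where $\opt$ crosses $\delta L/\epsilon'$, or at most one index later. A shift of one index does not change the lower bound; at worst it adds a $+1$ to the upper bound, which is absorbed below.

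\emph{Upper bound.} I count phases. Each phase increases the active branch by at least $\delta'\delta\epsilon'L$. Take the first super-phase $t$ in which both branches have length at least $\delta L/\epsilon'$. Up to that point, each branch has length at most about $(1+D_k)\delta L/\epsilon'$: as long as a branch is below the threshold it grows only by phases of length at most $L'$, and by \eqref{eq:slowGrowthBeforeL} the active branch cannot overshoot beyond $D_k$ times the passive one. So the total length added over the two branches is at most $(1+D_k)\delta L/\epsilon'$. Dividing by the minimum phase length $\delta'\delta\epsilon' L$, the number of phases is at most $(1+D_k)/(\delta'\epsilon'^2)$. Every super-phase contains at least one phase, so this also bounds the number of super-phases, and therefore $t_0$.

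The main obstacle is making sure that the overshoot in the upper bound is charged correctly. I would handle it by bounding the final active length by $D_k$ times the passive length plus one phase length $L'$, and noting that $L'=\delta\epsilon' L$ is negligible compared with $\delta L/\epsilon'$. Any leftover additive constant can then be absorbed by taking $\delta$ sufficiently small relative to $\epsilon'$.
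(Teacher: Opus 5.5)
Your proposal is correct in substance, and your upper bound is the paper's argument. The paper evaluates the total length at the end of super-phase $t_0-1$. Since $\opt_{t_0-1}<\delta L/\epsilon'\le L$, \eqref{eq:slowGrowthBeforeL} gives $\opt_{t_0-1}+\opt_{t_0}\le(1+D_k)\delta L/\epsilon'$, and each super-phase adds at least $\delta'L'$. Stopping the count there, rather than at the moment the second branch crosses the threshold, removes your overshoot term entirely. Your remark that leftover terms are absorbed by taking $\delta$ small is inaccurate, because $L'/(\delta L/\epsilon')=(\epsilon')^2$ does not depend on $\delta$. The slack actually comes from the initial length $2\delta L\ge L'$, which you correctly did not count as added length.

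For the lower bound your route differs from the paper's, and the step you flag is resolved with a false claim.

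\begin{itemize}
\item \emph{What the direct step needs.} Applying your chain directly at $t_0$ requires $\opt_t\le L$ for all $t<t_0$, which is not automatic. The paper instead proves $\opt_t\le D_k\opt_{t-1}$ for all $t\le t_0$ by a parity case analysis. If $t\equiv t_0$, then $\opt_{t-1}\le\opt_{t_0-1}\le L$. Otherwise either $\opt_{t-1}\le L$, or $\opt_t\le\opt_{t_0-1}<L<\opt_{t-1}$. It then chains this from $\opt_0$ up to $\opt_{t_0}$.
\item \emph{Why your fix fails.} You assert that $t_0$ is within one index of the first crossing index $t^*$, the least $t$ with $\opt_t\ge\delta L/\epsilon'$. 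This is false. Once one branch has crossed, the algorithm may switch after every phase. The other branch then gains at most $L'$ per super-phase, so $t_0-t^*$ can be large.
\item \emph{What suffices.} You only need $t_0\ge t^*$, which is immediate from the definition of $t_0$. Every $\opt_t$ with $t<t^*$ lies below $\delta L/\epsilon'\le L$, so your induction is valid up to $t^*$. It yields $D_k^{t^*}\delta L>\opt_{t^*}\ge\delta L/\epsilon'$, hence $t_0\ge t^*>\log(1/\epsilon')/\log D_k$.
\end{itemize}

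Written this way, your lower bound is genuinely simpler than the paper's. Bounding the earlier index $t^*$ avoids having to control the growth ratio all the way up to $t_0$, so no parity analysis is needed.
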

\begin{proof}
    We first show that
    \begin{align}
        \opt_{t}\le D_k\cdot \opt_{t-1}\text{ for all }t\le t_0.\label{eq:slowGrowthBeforet0}
    \end{align}
    If $\opt_{t-1}\le L$, then this follows immediately from \eqref{eq:slowGrowthBeforeL}. If $t$ and $t_0$ have the same parity, then this is the case since the same branch is passive during super-phases $t-1$ and $t_0-1$, hence $\opt_{t-1}\le \opt_{t_0-1}\le \delta L/\epsilon'\le L$ for $\delta$ sufficiently small. If instead $t$ and $t_0$ have opposite parities and $\opt_{t-1}> L$, then for sufficiently small $\delta$ we get $D_k\opt_{t-1}>D_kL\ge \frac{\delta L}{\epsilon'}> \opt_{t_0-1}\ge\opt_{t}$, where the last inequality uses that the same branch is passive in super-phases $t$ and $t_0-1$, by parity. This proves \eqref{eq:slowGrowthBeforet0}.
    
    From \eqref{eq:slowGrowthBeforet0} and $\opt_0=\delta L$, we get that $\opt_{t_0}\le D_k^{t_0}\delta L$, which together with $\opt_{t_0}\ge \delta L/\epsilon'$ implies that $t_0\ge \frac{\log\frac{1}{\epsilon'}}{\log D_k}$. This establishes the lower bound on $t_0$.
    
    On the other hand, \eqref{eq:slowGrowthBeforet0} yields $\opt_{t_0-1}+\opt_{t_0} \le (1+D_k)\opt_{t_0-1} \le (1+D_k)\delta L/\epsilon'$. But the active branch grows by at least $\delta'L'=\delta L\delta'\epsilon'$ per phase (and super-phase), so $\opt_{t_0-1}+\opt_{t_0}\ge t_0\delta L\delta'\epsilon'$ and hence $t_0\le \frac{1+D_k}{\delta'(\epsilon')^2}$.
\end{proof}

The instance ends when the imbalance $\frac{\opt_{t+1}}{\opt_t}$ between the two branch lengths is close to the extreme imbalance that was observed throughout the instance, except possibly during the initial $\epsilon'$ fraction of super-phases. More precisely, the last super-phase is super-phase $T$, where
\begin{align*}
    T=\inf\left\{ T\ge t_0\,\middle\vert\, \frac{\opt_{T+1}}{\opt_T} \ge \max_{t\in[\epsilon' T,T)}\frac{\opt_{t+1}}{\opt_t} -  \epsilon'\right\}.
\end{align*}

\begin{claim}
    It holds that $T<\infty$ and $\opt_T\le L$.
\end{claim}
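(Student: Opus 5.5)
The plan is to prove the two assertions together. First I will show that, as long as the passive lengths stay below $L$, the stopping condition defining $T$ must trigger within a number of super-phases $N$ that depends only on $k$, $\epsilon'$ and $\delta'$, and not on $\delta$. Then I will choose $\delta$ small enough that $\opt_t$ cannot exceed $L$ within that many super-phases. Write $r_t=\opt_{t+1}/\opt_t$; this is well defined since $\opt_t\ge\delta L>0$. By \eqref{eq:slowGrowthBeforeL}, $r_t<D_k$ whenever $\opt_t\le L$.

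\emph{A lower bound on the ratios.} The passive branch in super-phase $t+1$ is the branch that was passive in super-phase $t-1$, and it has since been active for a whole super-phase, during which it grew by at least $\delta' L'>0$. Hence $\opt_{t+1}>\opt_{t-1}$, i.e.\ $r_{t-1}r_t>1$. Consequently any window containing two consecutive indices has $\max_{t\in[\epsilon' T,T)} r_t>1$, and this holds for all $T\ge t_0$ once $\delta$ is small, because $t_0$ is large by Claim~\ref{cl:t0Bounds}.

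\emph{Sliding-maximum argument.} Set $M_T=\max_{t\in[\epsilon'T,T)}r_t$ and suppose the stopping condition fails for every $T\in[t_0,T_1]$, so that $r_T<M_T-\epsilon'$ for these $T$. Fix $T_0\ge t_0$ and put $M^*=M_{T_0}$. By induction on $s\ge T_0$, every window $[\epsilon's,s)$ lies in $[\epsilon'T_0,\infty)$. Its entries with index below $T_0$ are at most $M^*$, and its entries with index at least $T_0$ are below $M^*-\epsilon'$ by the failed condition. Hence $M_s\le M^*$ and $r_s<M^*-\epsilon'$. Once $\epsilon' T\ge T_0$, i.e.\ $T\ge T_0/\epsilon'$, the window contains only indices $\ge T_0$, so $M_T\le M^*-\epsilon'$.

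Iterating from $T_0=t_0$, the maximum drops by $\epsilon'$ each time the index is multiplied by $1/\epsilon'$. While the lengths are at most $L$, the initial maximum is below $D_k$. After $\lceil D_k/\epsilon'\rceil$ rounds it would be below $1$, contradicting the lower bound above. So the stopping condition triggers by
\[N:=t_0\,(1/\epsilon')^{\lceil D_k/\epsilon'\rceil+1}\le \tfrac{1+D_k}{\delta'(\epsilon')^2}\,(1/\epsilon')^{\lceil D_k/\epsilon'\rceil+1},\]
a quantity independent of $\delta$.

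\emph{Keeping lengths below $L$.} This is where the circularity lies and which I expect to be the main obstacle: the bound $r_t<D_k$ is available only while $\opt_t\le L$, and the triggering argument needs that bound. I will handle it by a first-exit argument. Let $\tau$ be the first index with $\opt_\tau>L$. For all $t<\tau$, \eqref{eq:slowGrowthBeforeL} gives $\opt_t\le D_k^{t}\delta L$, so $\tau>\log_{D_k}(1/\delta)$. Choosing $\delta$ small enough that $\log_{D_k}(1/\delta)>N+1$, all ratios $r_t$ with $t\le N$ are below $D_k$, so the argument above applies on $[t_0,N]$. It shows $T\le N<\infty$, and then $\opt_T\le D_k^N\delta L\le L$.

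Edge cases to check are the parity issue, as in the proof of Claim~\ref{cl:t0Bounds}, and that $\epsilon'T\ge t_0$ makes sense. Since $\delta$ is chosen after $\epsilon'$ and $\delta'$, which are themselves fixed independently of $\delta$, the choice is consistent.
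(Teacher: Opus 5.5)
Your proposal is correct and follows essentially the same route as the paper. Both arguments use a sliding-maximum step to show the stopping rule fires within a $\delta$-independent number of super-phases (via the upper bound on $t_0$ from Claim~\ref{cl:t0Bounds}), and both use \eqref{eq:slowGrowthBeforeL} with a small enough $\delta$ to keep the lengths below $L$ until then.

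The paper phrases this as a contradiction at the first index $\tilT$ with $\opt_{\tilT+1}>L$, and caps the number of rounds using only the positivity of $\opt_{\tilT+1}/\opt_{\tilT}$. Your lower bound $r_{t-1}r_t>1$ is therefore harmless but unnecessary. Define $N$ directly as the $\delta$-independent right-hand side rather than via $t_0$, since $t_0$ itself depends on $\delta$.
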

\begin{proof}
    Let $\tilT$ be minimal such that $\opt_{\tilT+1}>L$. Such $\tilT$ exists, since any two consecutive super-phases increase both branch lengths by at least $\delta'L'$. To prove the claim, it suffices to show that $\tilT\ge T$. Suppose towards a contradiction that $\tilT< T$.
    
    Let $i\in\mathbb Z_{\ge 0}$ be such that $\tilT\in\left[t_0/(\epsilon')^{i},t_0/(\epsilon')^{i+1}\right)$. Then $\frac{\opt_{\tilT+1}}{\opt_{\tilT}}\le D_k-i\epsilon'$ by induction on $i$, where the base of induction is due to~\eqref{eq:slowGrowthBeforeL}. But then $i\le D_k/\epsilon'$, hence $\tilT< t_0/(\epsilon')^{D_k/\epsilon'+1}\le (1+D_k)/(\delta'(\epsilon')^{D_k/\epsilon'+3})$ using Claim~\ref{cl:t0Bounds}. Thus,
    \begin{align}
        \opt_{\tilT+1}\le \opt_{t_0-1}\cdot D_k^{\tilT+2-t_0}\le  \delta L/\epsilon'\cdot D_k^{(1+D_k)/(\delta'(\epsilon')^{D_k/\epsilon'+3})},\label{eq:opttTBound}
    \end{align}
    where the first inequality uses \eqref{eq:slowGrowthBeforeL}, and the second inequality uses $\opt_{t_0-1}\le\delta LD_k/\epsilon$ and the upper bound on $\tilT$. But for $\delta$ sufficiently small, the right-hand side of \eqref{eq:opttTBound} is at most $L$, which yields the desired contradiction.
\end{proof}

After super-phase $T$, we add one more layer with a single node connected by a length-$0$ edge to the last node of the passive branch of super-phase $T$. Thus, the shortest path from the source to this node has length $\opt_T\in[\delta\cdot L,L]$, as required.

\paragraph{Online cost analysis.} We now analyze the cost of $A$ on this instance. In super-phase $t$, the active branch grows from length $\opt_{t-1}$ to length $\opt_{t+1}$, so $A$ pays at least $(D_{k-1}-\epsilon')(\opt_{t+1}-\opt_{t-1}-L')$ for advancing on this branch during the super-phase, since it can reside in the passive branch only during part of the last phase of the super-phase where the growth is at most $L'$. Additionally, switching to the passive branch costs at least $\opt_{t+1}+\opt_t-L'$, of which $\opt_{t+1}-L'$ is for backtracking to the source and $\opt_t$ is for advancing to the leaf in the passive branch. Thus, the total cost of $A$ in super-phase $t$ is at least
\begin{align*}
    (D_{k-1}-\epsilon')(\opt_{t+1}-\opt_{t-1}-L') + \opt_{t+1}+\opt_t-L'.
\end{align*}
Letting $x=\max_{t\in[\epsilon' T,T)}\frac{\opt_{t+1}}{\opt_t}> 1$ and writing ``$\gtrsim$'' for inequalities that hold up to a multiplicative factor that can be made arbitrarily close to $1$ by choosing $\epsilon'$ sufficiently small, the total cost of $A$ is at least
\begin{align*}
    &\sum_{t=1}^T\left[(D_{k-1}-\epsilon')(\opt_{t+1}-\opt_{t-1}-L') + \opt_{t+1}+\opt_t-L'\right]\\
    &\gtrsim\sum_{t=1}^T\left[D_{k-1}(\opt_{t+1}-\opt_{t-1}) + \opt_{t+1}+\opt_t\right]\\
    &= D_{k-1}\left(\opt_{T+1}+\opt_T-\opt_1-\opt_0\right) + \opt_{T+1}-\opt_1 + 2\sum_{t=1}^T\opt_t\\
    &\gtrsim \opt_T\cdot\left(D_{k-1}(1+x)+x + 2\sum_{t=0}^{T-1}\frac{1}{x^t}\right),
\end{align*}
where the first inequality uses $(D_{k-1}+1)L'=(D_{k-1}+1)\epsilon'\delta L\ll\delta L\le \opt_{t+1}$, and the last inequality  uses $\opt_{T+1}\ge (x-\epsilon')\cdot \opt_T\gtrsim x \cdot \opt_T$ and $\opt_0=\opt_1=\delta L\ll \delta L/\epsilon'\le \opt_T$ since $T\ge t_0$.

By the lower bound on $t_0$ in Claim~\ref{cl:t0Bounds}, choosing $\epsilon'$ small makes $T\ge t_0$ arbitrarily large. Thus, the proof of the induction step is completed if
\begin{align*}
    D_k&= \min_{x>1}D_{k-1}(1+x)+x + 2\sum_{t=0}^{\infty}\frac{1}{x^t} \\
    &= \min_{x>1}D_{k-1}(1+x)+\frac{x(x+1)}{x-1}.
\end{align*}
The minimum is attained for $x=1+\sqrt{\frac{2}{1+D_{k-1}}}=x_k$. Thus, the Lemma~\ref{lem:DkViaxk} completes the proof.

\section{Lower bound against adaptive adversaries} \label{sec:LB-adaptive}

This section is dedicated to proving Theorem \ref{th:adaptive-adversaries}.
In the \emph{adaptive online adversary} model, requests are issued by an adversary that can observe the past decisions made by the online algorithm. When issuing a new request, the adversary serves it immediately (which means that adversary's decisions might turn out to be sub-optimal once the full instance is revealed). The performance of an algorithm is defined as the ratio between its expected cost and the expected cost incurred by the adversary. For the stronger adaptive \emph{offline} adversary (which serves the instance optimally in hindsight), our $D_k$ lower bound translates immediately since randomization is known to provide no advantage in that setting~\cite{Ben-DavidBKTW94}.

Our proof is adapted from the lower bound on deterministic algorithms in~\cite{Fiat-competitiveLGT}. Adapting our lower bound of $D_k$ from Appendix \ref{app:LB} seems challenging, as the instance can only be stopped at certain times when the ratio between the branch
lengths is maximized, and the adversary does not know in advance which branch will be optimal at those times.

We construct our lower bound instance for layered graph traversal as follows. Let $\ALG$ be a (possibly randomized) online algorithm for this problem. As soon as the next layer of the instance is revealed, we will also choose the next move of the adversary $\ADV$ to a node in this layer. The instance is constructed inductively on $k$. We start by connecting the source $s$ to two nodes $l_1$ and $r_1$ via edges of length $C_k$ (which will be determined later). These nodes constitute the initial nodes of two branches $L$ and $R$. We will define $\ADV_L$ and $\ADV_R$ to be adversaries running in $L$ and $R$, respectively. The adversary $\ADV$ chooses one branch $X \in \{L, R\}$ at random (each having equal probability), moves to the initial node in that branch, and then follows the movement of $\ADV_X$ for the rest of the instance. 
    
    Without loss of generality, suppose that $\ALG$'s first move is to advance to $l_1$. Then, we play a sub-instance of width $k-1$ rooted at $l_1$. During this time, $\ADV_L$ follows the movement of the adversary in this sub-instance. While the sub-instance is played in $L$, we extend $R$ by a path of zero-length edges, and we let $\ADV_R$ follow this path. We stop the sub-instance when $\ALG$ leaves branch $L$ and enters branch $R$. Let $l_2$ be $\ADV_L$'s position at the end of the sub-instance, and let $r_2$ be the (unique) node in $R$ that represents $\ADV_R$'s position and $\ALG$'s new position after switching branches. We continue by playing a sub-instance of width $k-1$ rooted at $r_2$, and appending a path of zero-length edges to $l_2$ (while all other nodes in $L$ have no descendants in the next layers). We repeat this procedure many times, always playing a sub-instance of width $k-1$ in the branch where $\ALG$ is located, until it switches to the other branch. Figure \ref{fig:LB-adaptive} shows the constructed instance.

\begin{figure}[h]
\centering
\begin{tikzpicture}[->,>=stealth,auto]

\tikzset{
  state/.style={
    circle,
    draw,
    minimum size=4pt, 
    inner sep=0pt
  }
}

\node[state,label=left:$s$] (s) {};
\node[state,above right=1.5cm and 2.5cm of s,label=above left:$l_1$] (a1) {};
\node[below right=0.2cm and 0.9cm of a1,label=:$\cdots$] (dots1) {};
\node[state,right=2cm of a1,label=above right:$l_2$] (a2) {};
\node[state,right=2cm of a2,label=above left:$l_3$] (a3) {};
\node[below right=0.2cm and 0.9cm of a3,label=:$\cdots$] (dots1) {};
\node[state,right=2cm of a3,label=above right:$l_4$] (a4) {};
\node[state,right=2cm of a4,label=above:$l_5$] (a5) {};

\node[state,below right=1.5cm and 2.5cm of s,label=left:$r_1$] (b1) {};
\node[state,right=2cm of b1,label=below left:$r_2$] (b2) {};
\node[below right=0.2cm and 0.9cm of b2,label=:$\cdots$] (dots1) {};
\node[state,right=2cm of b2,label=below right:$r_3$] (b3) {};
\node[state,right=2cm of b3,label=below left:$r_4$] (b4) {};
\node[below right=0.2cm and 0.9cm of b4,label=:$\cdots$] (dots1) {};
\node[state,right=2cm of b4,label=below right:$r_5$] (b5) {};

\path (s) edge node[left] {$C_k$} (a1);
\path (s) edge node[left] {$C_k$} (b1);

\path (a2) edge node[above] {$0$} (a3);
\path (a4) edge node[above] {$0$} ++(2.0,0);

\path (b1) edge node[below] {$0$} (b2);
\path (b3) edge node[below] {$0$} (b4);

\draw ($(a1)+(-0.1,0.5)$) -- ($(a1)+(0.1,0.5)$) -- ($(a2)+(0.1,0.5)$) --
      ($(a2)+(0.1,-0.5)$) -- ($(a1)+(-0.1,-0.5)$) -- cycle
      node[pos=0.5,below] {};

\draw ($(a3)+(-0.1,0.5)$) -- ($(a3)+(0.1,0.5)$) -- ($(a4)+(0.1,0.5)$) --
($(a4)+(0.1,-0.5)$) -- ($(a3)+(-0.1,-0.5)$) -- cycle
node[pos=0.5,below] {};

\draw ($(b2)+(-0.1,0.5)$) -- ($(b2)+(0.1,0.5)$) -- ($(b3)+(0.1,0.5)$) --
($(b3)+(0.1,-0.5)$) -- ($(b2)+(-0.1,-0.5)$) -- cycle
node[pos=0.5,below] {};

\draw ($(b4)+(-0.1,0.5)$) -- ($(b4)+(0.1,0.5)$) -- ($(b5)+(0.1,0.5)$) --
($(b5)+(0.1,-0.5)$) -- ($(b4)+(-0.1,-0.5)$) -- cycle
node[pos=0.5,below] {};

\node[right=2cm of b5] (bdots) {$\cdots$};
\path (b5) edge[dashed] (bdots); 

\node[right=2cm of a5] (adots) {$\cdots$};
\path (a5) edge[dashed] (adots); 

\end{tikzpicture}
\caption{Lower bound against adaptive online adversaries.}
\label{fig:LB-adaptive}
\end{figure}

    By abusing notation, let $\ALG(x)$ be the cost incurred by $\ALG$ on the sub-instance rooted at node $x$, and let $\ADV(x)$ be the expected cost incurred by the adversary running in the sub-instance rooted at $x$. Recalling that $s$ denotes the source node in our top-level instance, we will prove that 
    \begin{equation}
    \label{eq:LB-adaptive-ineq}
    \ALG(s) \geq 2^{k-1} (\ADV(s) - C_k).
        \end{equation}

    The proof is by induction on $k$. The base case is when $k=1$, in which case the instance simply consists of a path and $\ALG(s) = \ADV(s)$, so \eqref{eq:LB-adaptive-ineq} holds. For the inductive step, note that 
    $\ALG$ pays cost $C_k$ to advance to $l_1$, cost $\ALG(l_1)$ during the sub-instance rooted at $l_1$, cost at least $2C_k$ to switch from $L$ to $R$, cost $\ALG(r_2)$ during the sub-instance rooted at $r_2$, and so on. Therefore, by the inductive hypothesis, the total cost incurred by $\ALG$ is at least
    \begin{align*}
    \ALG(s) &\geq C_k + \ALG(l_1) + 2 C_k + \ALG(r_2) + 2 C_k + \ALG(l_3) + 2 C_k + \ALG(r_4) + \dots \\
    &\geq \left(C_k + 2^{k-2} (\ADV(l_1) - C_{k-1})\right) + \left(2 C_k + 2^{k-2} (\ADV(r_2) - C_{k-1})\right) + \dots 
    \end{align*}
    We now choose $C_k$ so that $C_k \geq C_{k-1} \cdot 2^{k-2}$. Take, for example, $C_k = 2^{k^2}$. 

    Let $\ADV(L)$ and $\ADV(R)$ be the overall expected cost of the adversary on the sub-instances, conditioned on it choosing branch $L$ or $R$, respectively. We have 
    \begin{align*}
        \ADV(L) &= \ADV(l_1) + \ADV(l_3) + \dots \\
        \ADV(R) &= \ADV(r_2) + \ADV(r_4) + \dots\ ,
    \end{align*}
    so we obtain 
    \[
    \ALG(s) \geq 2^{k-2} (\ADV(L) + \ADV(R)). 
    \]

    The expected cost incurred by the adversary is 
    \[
    \ADV(s) = C_k + \frac{1}{2}(\ADV(L) + \ADV(R)),
    \]
    which concludes the proof of inequality \eqref{eq:LB-adaptive-ineq}.

    Finally, note that we can append as many sub-instances to our instance as desired. Moreover, at any time step the adversary is located at a given node in the last revealed layer with probability at least $2^{-(k-1)}$. In particular, it is located in the same position as $\ALG$ with probability at least $2^{-(k-1)}$. Since the instance is always extended with edges of positive length adjacent to $\ALG$'s position, the adversary pays cost at least $2^{-(k-1)} C_1$ per time step in expectation. Therefore, $\ADV(s)$ can be made arbitrarily large by stacking arbitrarily many sub-instances, so we can stop the instance after $C_k$ becomes 
    negligible with respect to $\ADV(s)$. By \eqref{eq:LB-adaptive-ineq}, this yields a lower bound of $2^{k-1}$ on the competitive ratio of $\ALG$ against the adaptive online adversary $\ADV$.\footnote{The reason why we obtain a lower bound of $2^{k-1}$ rather than $2^{k-2}$ (as in \cite{Fiat-competitiveLGT}) is that we make $C_k$ negligible with respect to $\ADV(s)$, rather than just ensuring that $C_k \leq \frac{1}{2}\ADV(s)$.}

\section{Reduction to the evolving tree game}
\label{sec:reduction}

In this section, we provide a reduction from layered graph traversal to the evolving tree game, proving Lemma~\ref{lemma:reduction-to-ETG}. The reduction is very similar to~\cite{randomizedLGT}.
\begin{lemma}
    Suppose there exists a $\rho$-competitive algorithm for width-$k$ layered graph traversal in which the graph is restricted to binary trees which have at most one edge of positive weight between consecutive layers. Then, there exists a $\rho$-competitive algorithm for width-$k$ layered graph traversal without these restrictions.
    
    \label{lem:trees-are-enough-in-LGT}
\end{lemma}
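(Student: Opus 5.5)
The plan is to give an online simulation: a general width-$k$ layered graph $G$ is transformed, layer by layer, into a restricted instance $G'$. In $G'$ every layer still has at most $k$ nodes, the graph is a binary tree, and between consecutive layers at most one edge has positive weight. We then run the assumed $\rho$-competitive algorithm $\mathcal{A}'$ on $G'$ and translate its moves back to $G$. Two properties are needed. First, every node of $G'$ in a layer corresponding to layer $L_i$ of $G$ represents a node of $L_i$, with $\dist_{G'}(s,v')\ge \dist_G(s,v)$. Second, the node representing the target $t$ is at distance exactly $\dist_G(s,t)$. Moving along any tree path in $G'$ then corresponds to moving along a walk in $G$ of no greater length, so the cost in $G$ is at most the cost of $\mathcal{A}'$, which is at most $\rho\cdot \dist_{G'}(s,t')=\rho\cdot\dist_G(s,t)$.

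The construction has three steps.

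First, reduce to trees. When layer $L_{i+1}$ is revealed, we keep for each $v\in L_{i+1}$ only one parent edge, namely one realising a shortest path from $s$ (Dijkstra-style on the revealed layered graph). This gives a shortest-path tree whose distances to each layer node equal the true distances in $G$. A standard subtlety is that a shortest path in $G$ may go back through earlier layers, so ``layered'' distances can be smaller than tree distances. To handle this, I would follow the classical argument of \cite{Fiat-competitiveLGT}: the distance of $v\in L_{i+1}$ is computed over the revealed graph, and we attach $v$ to the current tree by a single edge of length $\dist(s,v)-\dist(s,u)$ from its shortest-path predecessor $u\in L_i$. Nonnegativity holds because the predecessor on a shortest path to $v$ can be chosen in $L_i$ with $\dist(s,u)\le\dist(s,v)$. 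Tree edges are then never shorter than the corresponding distances in $G$, so the movement translation is valid: a tree edge of length $\dist(s,v)-\dist(s,u)$ dominates the cost of moving from $u$ to $v$ only along shortest routes.

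Second, binarise. A node with $m>2$ children is replaced by a caterpillar of $m-1$ binary branchings with zero-weight edges. This uses extra intermediate layers but never more than $k$ nodes per layer, since the number of current leaves never exceeds $|L_{i+1}|\le k$. Nodes with a single child are handled by zero-weight path edges.

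Third, serialise positive weights. The transition from $L_i$ to $L_{i+1}$ is spread over several sub-layers, and in each sub-layer only one child edge gets its positive weight while the others get zero-length padding. Concretely, each positive edge is split as a path whose weight is placed in a distinct sub-layer. Paths are padded with zero-weight edges so that all leaves lie in the same sub-layer, and widths stay at most $k$. An algorithm positioned on an intermediate sub-layer node is mapped back to its nearest ancestor that is a real node of $G$, and it pays in $G$ only when $\mathcal{A}'$ reaches a real layer node. By the triangle inequality, this mapped cost is at most the cost of $\mathcal{A}'$.

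The main obstacle I expect is the first step: making the tree distances faithful while ensuring the online translation never requires information not yet revealed, given backtracking through earlier layers. I would handle it by defining tree edge weights as differences of true revealed distances as above, and by checking that these distances are final once the layer is revealed. They are final because all later layers lie beyond $L_{i+1}$, and any path to $v$ through them would have to cross $L_{i+1}$ again.
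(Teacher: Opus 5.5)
Your steps 2 and 3 are fine and agree with the paper. The gap is in step 1, which the paper simply delegates to \cite{Fiat-competitiveLGT}: the construction you give there does not work. It rests on the claim that the revealed distance of $v\in L_{i+1}$ is final, and that claim is false. A path can enter $L_{i+1}$ at a \emph{different} vertex, go into later layers, and come back down to $v$.

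Concretely, take $L_1=\{w,v\}$, $L_2=\{x,y\}$, $L_3=\{t\}$ with weights $sw=1$, $sv=M$, $wx=vx=1$, $vy=yt=0$, $xt=M$, where $M$ is large (width $2$).
\begin{itemize}
\item When $L_1$ is revealed, $\dist(s,v)=M$. Once $L_2$ is revealed it drops to $3$ (path $s,w,x,v$).
\item $\dist_G(s,t)=3$, via $s,w,x,v,y,t$.
\item The only neighbour of $y$ in $L_1$ is $v$, so your rule hangs $y$ below $v$. With current distances, the edge has length $3-3=0$, so $y$, and hence $t$, ends up at tree depth $M$. Your second property fails, and $\rho$-competitiveness on the tree says nothing about $G$.
\item With $v$'s stored distance instead, the edge length is $3-M<0$.
\end{itemize}
With current distances, your rule is just the shortest-path-predecessor forest with the original edge weights. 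That is exactly the naive construction whose flaw you pointed out yourself.

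The missing idea is that the tree parent of $v$ need not be a neighbour of $v$.
\begin{itemize}
\item Let $d_i(u)$ be the distance from $s$ to $u\in L_i$ in the graph $G_{\le i}$ revealed up to layer $i$; inductively, this is $u$'s tree depth.
\item Attach each $v\in L_{i+1}$ to a vertex $u\in L_i$ minimising $d_i(u)+\dist_{G_{\le i+1}}(u,v)$, via a virtual edge of length $\dist_{G_{\le i+1}}(u,v)$.
\item This minimum equals $d_{i+1}(v)$. For one direction, cut any $s$--$v$ path in $G_{\le i+1}$ at the vertex $u\in L_i$ just before it first enters $L_{i+1}$; its prefix lies in $G_{\le i}$. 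For the other direction, use $d_i\ge d_{i+1}$ and the triangle inequality.
\item Hence $t$ gets tree depth exactly $\dist_G(s,t)$, with no finality claim needed.
\item Each virtual edge is realised by an already-revealed path of the same length, so tree moves translate into moves in $G$ at no greater cost.
\end{itemize}
In the example, $y$ is attached below $w$ with length $2$, giving depth $3$. With this replacement for step 1, your steps 2 and 3 complete the proof.
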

\begin{proof}
    \cite{Fiat-competitiveLGT} showed that it is enough to consider the problem restricted on trees. Note that any tree can be converted into a binary tree by inserting additional layers and edges of length zero, on the fly. Similarly, one can ensure that there is at most one edge of positive length between any two consecutive layers.
\end{proof}

\reductionToETG*
\begin{proof}
For a tree $T$, let $V(T)$ denote the vertices of $T$ and $\mathcal{L}(T)$ denote the leaves of $T$.
    Let $LT$ be an instance of layered graph traversal with maximum width $k$. By Lemma~\ref{lem:trees-are-enough-in-LGT}, we can assume that $LT$ is a rooted binary tree with at most one edge of positive weight between any two consecutive layers. To simplify our proof, we append an edge of length $0$ from the root $s$ of $LT$ to a new node $v$ located in a virtual layer $-1$. We denote by $L_i$ the vertices of $LT$ located in layer $i$, and we denote by $ST_i$ the Steiner tree on $LT$ with terminals $\{v\} \cup L_i$ (i.e.\ the minimal subtree of $LT$ which contains all paths from $v$ to a vertex in $L_i$). Note that $\mathcal{L}(ST_i) = L_i$.
    
    Suppose there exists an algorithm $\ALG^{ET}$ which is $\rho$-competitive for the evolving tree game on stemmed trees of maximum width $k$. We construct an algorithm $\ALG^{LGT}$ for traversing $LT$ by simulating 
    $\ALG^{ET}$ on an evolving tree $ET$ with root $r$. Our construction works in stages, and at the end of stage $i$ we will ensure that the following conditions hold:
    \begin{enumerate}
        \item There exists a homomorphism $f: V(ET) \rightarrow V(ST_i)$ such that 
        \begin{enumerate}
            \item The root $r$ of $ET$ is mapped to the node $v$ of $LT$. 
            \item Each leaf of $ET$ is mapped to a leaf of $ST_i$.
            \item Each edge of $ET$ is mapped to a path of the same length in $ST_i$.
        \end{enumerate}
        \item If $\ALG^{ET}$ is in a leaf $l$, $\ALG^{LGT}$ is in $f(l) \in L_i$.
    \end{enumerate}

    At stage $0$, we initialise $ET$ to consist of an edge of length $0$ connecting the root $r$ to its child $u$, and we set $f(r) = v$ and $f(u) = s$. Also, $\ALG^{ET}$ is located in $u$ and $\ALG^{LGT}$ is located in $s$, so the required conditions are satisfied.

    In the beginning of stage $i\geq 1$, layer $i$ is revealed and $ST_i$ can be computed. The evolving tree $ET$ is currently homomorphic to $ST_{i-1}$, and we need to transform it so that it is homomorphic to $ST_i$. As long as there exists a leaf $l \in \mathcal{L}(ET)$ for which $f(l) \notin V(ST_i)$, we delete $l$. Next, we fork all $l \in \mathcal{L}(ET)$ for which $f(l)$ has two neighbours $\{x, y\} \in L_i$; we extend $f$ by mapping the newly created vertices in $ET$ to $\{x, y\}$. For the remaining leaves $l \in \mathcal{L}(ET)$ for which $f(l)$ has exactly one neighbour $x \in L_i$, we change $f$ so that $f(l) := x$. Finally, if there exists an edge $\{x, y\}$ of positive length $p$ between $x \in L_{i-1}$ and $y \in L_i$, we grow $f^{-1}(y)$ by $p$. At the end of the stage, $\ALG^{LGT}$ moves to $f(l^*) \in L_i$, where $l^*$ is the position of $\ALG^{ET}$. 
    Note that the evolving tree $ET$ has at most $k$ leaves at any given time, assuming that $LT$ has at most $k$ vertices in each layer. Let $N$ be the total number of layers. Since layer $N$ contains only one vertex, namely the target $t$, the evolving tree $ET$ contains only one leaf $f^{-1}(t)$. Since $f$ is a homomorphism, the optimum (and only) root-to-leaf path in $ET$ has the same length as the path from the source $s$ to the target $t$ in $LT$. Denote the length of this path by $\dist(s,t)$. By the assumption that $\ALG^{ET}$ is $\rho$-competitive, we have $\cost(\ALG^{ET}) \leq \rho \cdot \dist(s, t)$. By the triangle inequality, it is easy to see that $\cost(\ALG^{LGT}) \leq \cost(\ALG^{ET})$, which concludes the proof. 
\end{proof}

\end{document}